\begin{document}
\title{Quantum correlations; quantum probability approach.} 
\author{W\l adys\l aw Adam Majewski}


\date{}


\newcommand{\Addresses}{{
  \bigskip
  \footnotesize

  \textsc{Institute of Theoretical Physics and Astrophysics, Gda{\'n}sk  University, Wita Stwo\-sza~57, 80-952 Gda{\'n}sk, Poland and Unit for BMI, North-West-University, Potchefstroom, South Africa}\par\nopagebreak
  \textit{E-mail address:} \texttt{fizwam@univ.gda.pl}

}}


\theoremstyle{plain}
\newtheorem{thm}{Theorem}[section] 
\theoremstyle{definition}
\newtheorem{defn}[thm]{Definition} 
\newtheorem{rem}[thm]{Remark}
\newtheorem{prop}[thm]{Proposition}
\newtheorem{exmp}[thm]{Example} 
\newtheorem{rle}[thm]{Rule}
\newtheorem{fc}[thm]{Fact}
\newtheorem{cor}[thm]{Corollary}
\newtheorem{proble}[thm]{Problem}
\newtheorem{con}[thm]{Conjecture}
\newtheorem{pro}[thm]{Proposition}
\newcommand{\Tr}{\mathrm{Tr}}
\newcommand{\Cn}{{\setbox0=\hbox{
$\displaystyle\rm C$}\hbox{\hbox
to0pt{\kern0.6\wd0\vrule height0.9\ht0\hss}\box0}}} 
\newcommand{\Rn}{{\rm I\!R}}
\maketitle

\begin{abstract}
This survey gives a comprehensive account of quantum correlations understood as a phenomenon stemming from the rules of quantization. Centered on quantum probability it describes the physical concepts related to correlations (both classical and quantum), mathematical structures, and their consequences. These include the canonical form of classical correlation functionals, general definitions of separable (entangled) states, definition and analysis of quantumness of correlations, description of entanglement of formation, and PPT states. This work is intended both for physicists interested not only in collection of results but also in the mathematical methods justifying them, and mathematicians looking for an application of quantum probability to concrete new problems of quantum theory.
\end{abstract}

\newpage
\tableofcontents
\newpage
\section{Introduction}
The notion of correlations is a well established concept in probability theory, see \cite{H}, \cite{B}. This concept has been successfully employed in classical physics to describe the size and direction of a relationship between two or more (classical) observables. Moreover, it  is obvious (in the framework of classical physics) that a correlation between observables does not automatically mean that a change occurred in one observable is the result of a change appeared in the other observable. On the other hand, the causality means that one event is the result of the occurrence of the another one. In particular, this means, that the second event is appearing later. Thus, to describe such situation, in particular, one should also take into account a time evolution as well as to encode the idea of a distance. The latter is indispensable for any discussion of any propagation of effects. 

Consequently, although correlations and causality are related ideas, these two concepts are different and should not be confused. As this difference is a fundamental feature of the probabilistic description of physical systems, the quantization should respect it as well. Thus, speaking about quantum correlations we will not treat causal relations.

The aim of this review is to describe in details how to adapt the probabilistic scheme for a definition and description of correlations in quantum theory. To this end, we begin with a brief review of the classical theory. The important feature of the classical theory is the one-to-one correspondence between (positive) measures and (positive) functionals. More precisely, let $X$ be a compact Hausdorff space. Denote by $C(X)$ the space of all complex valued continuous functions defined on $X$. $C(X)$ equipped with the norm $\| f\| = \sup_{x \in X} |f(x)|$ is a commutative $C^*$-algebra. Let $\phi: C(X) \to \Cn$ be a continuous linear functional. We say it is positive
if $\phi(f)\geq 0$ for $f\geq 0$, and it is normed if $\phi(1_{X})=1$
where $1_{X}$ stands for the identity function on $X$.
For any (positive) functional $\phi$ there exists a (positive) Borel measure $\mu$ on $X$ such that
\begin{equation}
\label{1}
\phi(f) = \int_X f(x) d\mu(x)
\end{equation}
and the correspondence given by (\ref{1}) is one-to-one. Consequently, the measure theoretic description can be ``translated'' to that expressed in terms of functionals. We emphasize that only the ``translated'' approach to probability theory can be quantized in such a form which can be used for a  discussion of the concept of quantum correlations. This is due to the observation that the quantization implies the replacement of the commutative $C^*$-algebra $C(X)$ with a non-commutative one $\mathfrak{A}$. Note, that for a non-commutative $C^*$-algebra the relation (\ref{1}) is not valid. Moreover,
there does not exist a precise notion of a quantum measure on a ``quantum'' space. However, linear positive functionals on $\mathfrak{A}$ still have the well defined meaning. Consequently, the pair $(\mathfrak{A}, \phi)$ is the starting point of quantum probability. To sum up, to ``translate'' a description of classical correlations to quantum ones, we should use the non-commutative counterpart of functionals given by (\ref{1}).

Speaking about quantization, it is worth pointing out that one can embed a commutative $C^*$-algebra into a non-commutative one BUT it is impossible to embed a non-commutative $C^*$-algebra into a commutative one. Consequently, using the above form of quantization we rule out hidden variables model. This explains why we will not be interested in hidden variable models as well as in any discussion concerning such models in our description of quantum probability.

Another important point to note here is that related to the so called Bell inequalities. Namely, we will define and study quantum correlations following the scheme provided by quantization of classical probability approach. Therefore, Bell inequalities may appear as an ingredient of the presented approach. Consequently, Bell inequalities will not be a starting point for a definition of quantum correlations. In other words, a violation 
of Bell inequalities will not be taken as a definition of quantum correlations. 

The paper will be organized as follows. In the next section we have compiled some basic facts from the classical probability theory with an emphasize on certain properties of classical measures. In Section 3, we will discuss classical composite systems. We will show that any state of such a system is a separable one. Then, in Section 4, to be prepared for the quantization procedure, we provide an exposition of rules of quantum mechanics. Section 5 contains a brief summary on the theory of tensor products which is indispensable for a description of quantum composite systems. These systems are studied in Section 6. The next section, Section 7, summarizes without proofs the relevant material on the decomposition theory. The necessity of that part follows from the well known fact that contrary to the classical case, in quantum theory the set of states does not form a simplex. Sections 5, partially 6, and 7 are ``rather mathematical''.
But, they constitute sufficient preparation for understanding various measures of quantum correlations.
On the other hand, the idea of employment of abstract mathematics for description of new concepts in Quantum Theory is not 
new. Contrary, it is even Dirac's advice, see page viii in \cite{dirac}: \textit{ Mathematics is the tool specially suited for dealing with abstract concepts of any kind and there is no limit to its power in this field. For this reason a book on the new physics, if not purely descriptive of experimental work, must be essentially mathematical.}
We only wish to add that quantum entanglement is a part of new physics.

Quantum correlations are defined and studied in Section 8 while in Section 9 and Section 10 entanglement of formation and PPT states, respectively, are studied. Section 11 will be devoted to the problem of describing the time evolution of quantum correlations. An illustrative example of evolution of quantum correlations will be provided. The last section contains final remarks and conclusions.

\section{Probability theory}

The purpose of this section is to review some of the standard facts from probability theory. We begin with basic definitions.

\begin{defn}Let $\Omega$ be a set. $\mathcal{F}$ is a $\sigma$-algebra if $\mathcal{F} \subset 2^\Omega$ and
\begin{enumerate}
\item $\Omega\in \mathcal{F},$
\item if $A\in \mathcal{F}$ then $\Omega \setminus A\in \mathcal{F},$
\item if $A_i\in \mathcal{F}$ for $i=1, 2\ldots $, then $\bigcup_{i=1}^{\infty}A_i\in \mathcal{F}$.\newline
The pair $(\Omega, \mathcal{F})$ is called a measure space.
\end{enumerate}\end{defn}\noindent
and

\begin{defn}A probability measure $p$ on $(\Omega,\mathcal{F})$ is a function $p:\mathcal{F}\rightarrow \left[0,1\right]$ such that
\begin{enumerate}
\item $p(\Omega)=1$

\item $p(\bigcup_{i=1}^{\infty}A_i)=\sum_{i=1}^{\infty} p(A_i)$ if $A_i\in \mathcal{F}$ for $i=1, 2\ldots $,\newline
 and $A_i\cap A_j=\emptyset$ for $i\neq j $.
\end{enumerate}
\end{defn}\noindent
and finally

\begin{defn}A probability space is a triple $(\Omega, \mathcal{F}, p)$ where $\Omega$ is a space (sample space), $\mathcal{F}$ is a $\sigma$-algebra (a family of events), and $p$ is a 
probability measure on $(\Omega, \mathcal{F})$.\end{defn}

\begin{rem}
In probability theory, an ``elementary event'' is considered as the principal undefined term, see \cite{H}. Intuitively speaking, an elementary event has the meaning of the possible outcome of some physical experiment.
\end{rem}

To illustrate these definitions we give:

\begin{exmp}\textit{Discrete probability space.}\newline
 Let $\Omega$ be a countable (finite) set. Let us take
      $\mathcal{F}=2^\Omega$. We put, for $A \subset \Omega$, $p(A)=\sum_{\omega\in A}q(\omega)$, where $q(\omega) \in [0,1]$, $\omega \in \Omega$, are numbers such that $\sum_{\omega \in \Omega} q(\omega) = 1$.
 \newline Clearly, $(\Omega, \mathcal{F}, p)$ is a probability space.
\end{exmp}\noindent
Our next example is:

\begin{exmp} \textit{Continuous probability space.}\newline
\label{2.6}
To be specific we put $\Omega=[0,1]\subset \mathbb{R}$, $\mathcal{F}$ to be a $\sigma$-algebra of Borel sets $\cal B$ in $[0,1]$, and $\lambda$ to be the Lebesgue measure. 
\newline Obviously,  $(\Omega, \mathcal{F}, \lambda)$ is a probability space.
\end{exmp}

In probability theory, the notion of stochastic variable can be considered as a counterpart of a (classical) observable. It is defined as follows:

\begin{defn}
Let $(\Omega, \mathcal{F}, p)$ be a probability space.
 A measurable, real valued function $f:\Omega\rightarrow \mathbb{R}$ is called a stochastic variable.
\end{defn}

\begin{rem}
\label{2.8}
A stochastic variable, frequently also called a random variable, is a function whose numerical values are determined by chance, see \cite{H}. In ``physical'' terms, a stochastic variable can be considered as a function attached to an experiment in such a way that once an experiment has been carried out the value of the function is known.
For the sake of comparison, note the similarity of that feature of (classical) stochastic variable with the corresponding property of an
observable in quantum mechanics. This and the impossibility of embedding a non-commutative $C^*$-algebra into a commutative one explains why we will not discuss contextuality ideas.
\end{rem}

To speak about \textit{correlations}, it is convenient to begin with the opposite notion  - with the notion of \textit{independence}.
 
 \begin{defn}
 Let $(\Omega, \mathcal{F}, p)$ be a probability space. 
We say that two events $A,B\in \mathcal{F}$ are \textbf{independent} if
\begin{equation}
 p(A\cap B)=p(A)p(B)\label{s1eq1}
\end{equation}or more generally, the events
 $A_i$, $i=1,...,n$ in $\mathcal F$ are independent if
\begin{equation}
 p(A_{i_1}\cap A_{i_2} \cap \ldots \cap A_{i_k})=\prod_{j=1}^{k}p(A_{i_j})\label{s1eq2}
\end{equation}
for every $k=1,...,n$ and $i_1,...,i_k$ such that $1\leq i_1<i_2,...,i_k \leq n$.
\end{defn}

It is worth pointing out that
the concept of independence can be extended for stochastic variables.
We say that stochastic variables \newline
 $X_1,\ X_2,\ \ldots, X_n$ are independent if and only if
events $(X_i \in B_i)$, $i=1,...,n$, are independent
where
\begin{equation}
(X_i\in B_i)\equiv \left\{\omega: X_i(\omega)\in B_i\right\} \label{s1eq4}
\end{equation}
for arbitrary Borel sets $B_i\subset \mathbb{R}$.

Further, we can define an expectation value $E(X)$ of a stochastic variable $X$ as 
\begin{equation}
E(X)=\int Xdp \equiv \int X(\omega)dp(\omega)\label{s1eq5}
\end{equation}
We interpret $E(X)$ as the mean value of the stochastic variable $X$ provided that the probability of events is given by the probabilistic measure $p$.

\begin{defn}
Let two stochastic variables $X$, $Y$ be given. Moreover, let their second moments be finite, i.e. $E(X^2)$, $E(Y^2) < \infty$. 
We say that two stochastic variables $X$ and 
$Y$ are \textbf{uncorrelated} if 
\begin{equation}
E(XY)=E(X)E(Y)\label{s1eq6}
\end{equation}
\end{defn}

It is easy to check that:

\begin{equation}
(E(XY)-E(X)E(Y))^2\leq(E(X^2)-E(X)^2)(E(Y^2)-E(Y)^2) \label{s1eq7}
\end{equation}

We are thus led to the following definition, cf \cite{H}:
\begin{defn}
\label{2.11}
A \textbf{correlation coefficient} $C(X,Y)$ is defined as
\begin{equation}
C(X,Y)=\frac{E(XY)-E(X)E(Y)}{(E(X^2)-E(X)^2)^\frac{1}{2}(E(Y^2)-E(Y)^2)^\frac{1}{2}}\label{s1eq8}
\end{equation}
\end{defn}

Let us briefly comment on this definition. Firstly,
taking into account inequality (\ref{s1eq7}), it is obvious that $C(X,Y)\in [-1,1]$. Secondly, if $C(X,Y)$ is equal to $0$ then stochastic variables 
$X$ and $Y$ are \textit{uncorrelated}. Further, if 
$C(X,Y)\in (0,1]$, then $X$, $Y$ are said to be \textit{correlated} and finally when $C(X,Y)\in [-1,0)$, stochastic variables 
$X$ and $Y$ are said to be \textit{anti correlated}.

The above introduced notions for stochastic variables, independence and uncorrelatedness, are related to each other. Namely, cf. Section 45 in \cite{H}.
\begin{pro}
($\Omega=[0,1]$, $\cal F$, $p$) be a probability space. Assume that
stochastic variables are independent and integrable. Then they are also uncorrelated.
\end{pro}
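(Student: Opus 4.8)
The plan is to show that independence of the stochastic variables $X$ and $Y$ implies the factorization $E(XY)=E(X)E(Y)$, which is precisely the definition of uncorrelatedness (\ref{s1eq6}). The natural strategy is to pass from the abstract definition of independence, stated in terms of events $(X\in B_1)$ and $(Y\in B_2)$, to a statement about integrals, and then to verify the factorization of $E(XY)$ by the standard measure-theoretic ladder: first for indicator functions, then for simple functions, and finally for general integrable functions by approximation.

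First I would recall that, by the definition following (\ref{s1eq4}), independence of $X$ and $Y$ means $p\big((X\in B_1)\cap(Y\in B_2)\big)=p(X\in B_1)\,p(Y\in B_2)$ for all Borel sets $B_1,B_2\subset\mathbb{R}$. The crucial reformulation is that this is equivalent to the factorization of the joint distribution: the image measure $p_{(X,Y)}$ on $\mathbb{R}^2$ equals the product $p_X\otimes p_Y$ of the marginal distributions. Granting this, I would write, via the change-of-variables (image measure) formula,
\begin{equation}
E(XY)=\int_{\mathbb{R}^2} st\,d p_{(X,Y)}(s,t)=\int_{\mathbb{R}^2} st\,d(p_X\otimes p_Y)(s,t),\label{proofeq1}
\end{equation}
and then invoke Fubini's theorem to split the double integral into $\big(\int s\,dp_X(s)\big)\big(\int t\,dp_Y(t)\big)=E(X)E(Y)$.

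The key step deserving care is establishing the equivalence between independence phrased on events and the product structure $p_{(X,Y)}=p_X\otimes p_Y$ on the Borel $\sigma$-algebra of $\mathbb{R}^2$. Here I would observe that the rectangles $B_1\times B_2$ form a $\pi$-system generating $\mathcal{B}(\mathbb{R}^2)$, that the two measures $p_{(X,Y)}$ and $p_X\otimes p_Y$ agree on this $\pi$-system by the definition of independence, and that both are probability measures; a standard uniqueness-of-measures (Dynkin $\pi$-$\lambda$) argument then forces them to coincide on all of $\mathcal{B}(\mathbb{R}^2)$. This measure-identification is the part I expect to be the main obstacle, since the factorization of expectations is intuitively clear but its rigorous justification rests entirely on this step together with the integrability hypothesis, which guarantees that $st$ is $p_{(X,Y)}$-integrable and hence that Fubini's theorem applies legitimately. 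Once the product structure and integrability are in hand, (\ref{proofeq1}) and Fubini deliver (\ref{s1eq6}) immediately, completing the argument.
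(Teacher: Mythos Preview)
Your argument is correct and follows the standard textbook route. The paper itself does not supply a proof of this proposition; it simply refers the reader to Section~45 of Halmos \cite{H}, where precisely this approach (product structure of the joint distribution together with Fubini) is carried out. So there is nothing to compare against beyond noting that your outline matches the classical treatment.

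One small refinement: you assert that integrability of $X$ and $Y$ ``guarantees that $st$ is $p_{(X,Y)}$-integrable and hence that Fubini's theorem applies legitimately,'' but you do not say why. The clean way is to first apply Tonelli to the nonnegative function $|st|$ on the product space, obtaining
\[
\int_{\mathbb{R}^2}|st|\,d(p_X\otimes p_Y)(s,t)=\Big(\int_{\mathbb{R}}|s|\,dp_X(s)\Big)\Big(\int_{\mathbb{R}}|t|\,dp_Y(t)\Big)=E|X|\,E|Y|<\infty,
\]
which is exactly the integrability needed to invoke Fubini on $st$ itself. Without this step the appeal to Fubini is formally circular, since Fubini already presupposes the integrability you are trying to use it to establish.
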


However, these two notions are not equivalent. It is illustrated by an example taken from the Halmos book \cite{H}, Section 45.
\begin{exmp}
Let ($\Omega=[0,1]$, $\cal F$, $\lambda$) be the probability space given in Example \ref{2.6}. Define stochastic variables $f$ and $g$ as
$f(x)=\sin{2\pi x}$ and $g(x)=\cos{2\pi x}$.  
Then the expectation value of $f\cdot g$ is given by $E(fg)=\int_0^1\sin{2\pi x}\cos{2\pi y}dx=0$.
Hence , they are uncorrelated.
On the other hand,
let us define $D=[0,\epsilon)$ where $\epsilon$ is sufficiently small.\newline 
$\lambda(\left\{x:\sin{2\pi x}\in D\right\}\cap \left\{x:\cos{2\pi x}\in D\right\})=0$ while
$\lambda(\left\{x:\sin{2\pi x}\in D\right\})\neq0$ and
$\lambda(\left\{x:\cos{2\pi x}\in D\right\})\neq0$. Therefore $f$ and $g$ are not independent (but they are uncorrelated!).
\end{exmp}

Having clarified the concept of correlations we wish to close this section with a basic package of results as well as the terminology used in measure theory. We wish to quote a result describing an approximation property of a positive measure, see \cite{Bour}, \cite{choquet} vol. I.,  and \cite{Mey}. To this end, for the convenience of the reader we repeat the relevant material from \cite{choquet} without proofs, thus making our exposition self-contained.
Let $E$ be a locally compact Hausdorff space.
A positive Radon measure is a positive linear map $\phi: C_{\mathfrak K}(E) \rightarrow \mathds{R}$ where $C_{\mathfrak K}(E)$ denotes the set of continuous functions with compact support, and $\phi$ positive means $\phi(f) \geq 0$ for any $f\geq 0$. A Borel measure is a measure defined on the $\sigma$-algebra 
generated by compact subsets of $E$ such that the measure of every compact subset is finite.
As the first result, we want to describe more precisely (than it was done in Introduction) the relation between a measure and a functional.
Let $E$ be a locally compact space, $\mu$ be a positive (Borel) measure on $E$ and $f$ be a continuous function on $E$ with a compact support. 
Consider the following map:

\begin{equation}
C_{\mathfrak K}(E) \ni f \mapsto \int fd\mu \in \Cn.\label{s1eq10}
\end{equation}
 It is an easy observation that

\begin{enumerate}
\item $\int (f+g)d\mu=\int fd\mu +\int gd\mu$
\item $\int cfd\mu=c\int fd\mu$ for  $c \in \Cn$
\item $\int fd\mu \geq 0$ for $f \geq 0$
\end{enumerate}

Above conditions define a positive linear form on the space $C_{\mathfrak K}(E)$. In other words, in (\ref{s1eq10}), each positive measure on $E$ defines a positive linear form on $C_{\mathfrak K}(E)$. 
The converse implication is given by the Markov-Riesz-Kakutani theorem, cf Section 69 in \cite{Ber}
or vol. I of \cite{choquet}.

\begin{thm}
\label{2.17}
If $\varphi$ is a linear, positive, continuous form on $C_{\mathfrak K}(E)$ then there exist a unique positive Borel measure $\mu$ on $E$ such that
\begin{equation}
\varphi(f)=\int_E fd\mu   \qquad  f \in C_{\mathfrak K}(E).
\end{equation}
If additionally $E$ is compact and $\varphi(\mathds{1})=1$ then $\mu$ is a probability measure. Consequently,
normalized forms and probability measures are in $1-1$ correspondence.
\end{thm}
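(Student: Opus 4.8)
The plan is to construct $\mu$ directly from $\varphi$ in the classical Riesz manner and then verify the representation formula. First I would define a set function on open sets $V \subset E$ by
\[
\mu(V) = \sup\{\varphi(f) : f \in C_{\mathfrak K}(E),\ 0 \le f \le 1,\ \mathrm{supp}\,f \subset V\},
\]
and extend it to arbitrary subsets by outer approximation, setting $\mu^*(A) = \inf\{\mu(V) : A \subset V,\ V \text{ open}\}$. Positivity of $\varphi$ immediately yields monotonicity, while a partition-of-unity argument on a finite subcover of a compact support gives countable subadditivity: if $\mathrm{supp}\,f$ is covered by finitely many $V_i$ and $\sum_i h_i = 1$ there, then $f = \sum_i h_i f$ and $\varphi(f) = \sum_i \varphi(h_i f) \le \sum_i \mu(V_i)$. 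Hence $\mu^*$ is an outer measure.

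Next I would apply the Carath\'eodory criterion and show that the resulting $\sigma$-algebra of measurable sets contains all Borel sets. The technical engine here is Urysohn's lemma, available because $E$ is locally compact Hausdorff: for any compact $K$ inside an open $V$ it produces $f \in C_{\mathfrak K}(E)$ with $\mathds{1}_K \le f \le \mathds{1}$ and $\mathrm{supp}\,f \subset V$. From this one extracts the regularity estimates $\mu^*(K) = \inf\{\varphi(f) : \mathds{1}_K \le f \le \mathds{1}\}$, outer regularity of Borel sets by open sets, and inner regularity of open sets by compacta. Since $\varphi$ is finite on each compactly supported function, every compact set has finite measure, so the restriction $\mu = \mu^*|_{\mathcal B}$ is a genuine positive Borel measure in the sense defined above.

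The crux, and the step I expect to be the main obstacle, is the identity $\varphi(f) = \int_E f \, d\mu$ for all $f \in C_{\mathfrak K}(E)$. By real-linearity it suffices to prove $\varphi(f) \le \int_E f \, d\mu$ for real $f$ and then replace $f$ by $-f$. To obtain this inequality I would slice the range of $f$ by a fine partition $y_0 < y_1 < \dots < y_n$, put $A_i = \{x : y_{i-1} < f(x) \le y_i\}$, choose open $V_i \supset A_i$ on which $f < y_i + \varepsilon$ with $\mu(V_i)$ close to $\mu(A_i)$, and take a subordinate partition of unity $\{h_i\}$ equal to $1$ on $\mathrm{supp}\,f$. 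Writing $f = \sum_i h_i f$ and bounding $\varphi(h_i f) \le (y_i + \varepsilon)\varphi(h_i)$, then summing and sending the mesh to zero, produces the estimate. Controlling the error terms uniformly across the slicing is exactly where the regularity lemmas are indispensable, and where the care must be concentrated.

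Finally, uniqueness follows readily from regularity: two Borel measures that are outer regular and inner regular on open sets and that induce the same functional on $C_{\mathfrak K}(E)$ must agree on all compact sets, hence on all Borel sets. For the last assertion, if $E$ is compact then $\mathds{1} = \mathds{1}_E \in C_{\mathfrak K}(E)$, so $\mu(E) = \varphi(\mathds{1}) = 1$ and $\mu$ is a probability measure.
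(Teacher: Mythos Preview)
Your outline is the standard Riesz--Markov construction and is correct in its essentials; the slicing argument and the regularity lemmas you flag are indeed where the real work sits, and your treatment of the final probability-measure clause is exactly right. Note, however, that the paper does not supply its own proof of this theorem: it merely quotes the result as the classical Markov--Riesz--Kakutani theorem and refers the reader to Section~69 of Berberian and to Choquet, vol.~I. Your approach is precisely the one found in those references (and in Rudin's \emph{Real and Complex Analysis}), so there is nothing to compare beyond observing that you have reproduced the expected classical argument.
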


To comment this result one may say that Theorem \ref{2.17} establishes a one-to-one correspondence between the positive Borel measures (in fact, regular positive Borel measures) and the Radon measures such that $\mu(E) = \| \mu\|$. The standard integral notation $\mu(f) = \int_E f d\mu$, $f\in C_{\mathfrak K}(E)$ implicitly identifies these concepts. Thus, in in sequel, \textit{the term measure will be used interchangeably to denote them}.

Assume additionally, temporary, that $E$ is a compact space. 
$C_{\mathfrak K}(E) \equiv C(E)$ equipped with the supremum norm $\|f\| = \sup_{x \in E}|f(x)|$, where $f \in C(E)$, is a Banach space and the map (\ref{s1eq10}) 
is continuous one (if $C(E)$ is considered as a Banach space).
One can say even more: $C(E)$ can be furnished with an algebraic structure turning $C(E)$ into an abelian $C^*$-algebra with unit (see Section 4 for definitions). Furthermore, calling normalized forms (\ref{s1eq10}) as states one gets a \textit{one-to-one correspondence between a probability measure on $E$ and the corresponding state}.

For a Borel measure $\mu$ on a locally compact Hausdorff space $E$ and a function $f \in C_{\mathfrak K}(E)$ we denote $\mu(f) = \int_E f d\mu$ (cf Theorem \ref{2.17}). Denote by $\mathfrak{M}(E)$ the collection of Radon measures on $E$.
Let $\{ \mu_n \}_{n=1}^{\infty} \subset
\mathfrak{M}(E)$.
We say that the net $\{ \mu_n \}$ is  weakly convergent to 
$\mu$ if $\mu_n(f)\rightarrow \mu(f)$ for any function $f \in C_{\mathfrak K}(E)$. This topology of simple convergence is called the vague topology (and sometimes also called the weak$^*$-topology).

We will need the notion of Dirac's (point) measure $\delta_a$, where $a \in E$. Such measures are determined by the condition:

\begin{equation}
\delta_a(f)=f(a)
\end{equation}

We say that a measure $\mu$ has a finite support if it can be written as a linear (finite) combination of $\delta_a$'s.
Now, we are in position to give, see Chapter 3, Section 2 , Corollaire 3 in \cite{Bour}:
\begin{thm}
\label{2.14}
Any positive finite measure $\mu$ on $E$ is a limit point, in the vague topology, of a convex hull of positive measures 
having a finite support contained in the support of $\mu$. \newline
\end{thm}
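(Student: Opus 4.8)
The plan is to unwind the definition of the vague topology and reduce the statement to a direct quadrature (Riemann-sum) construction. First observe that any convex combination of positive measures of finite support is again a positive measure of finite support, so the convex hull in question coincides with the set $\mathcal D$ of all measures of the form $\sum_k c_k \delta_{a_k}$ with $c_k \geq 0$ and $a_k \in \mathrm{supp}\,\mu$. Hence it suffices to show that $\mu$ lies in the vague closure of $\mathcal D$, i.e. that every basic vague neighborhood
$$ U = \{\nu : |\nu(f_i) - \mu(f_i)| < \varepsilon,\ i = 1,\dots,n\}, \qquad f_1,\dots,f_n \in C_{\mathfrak K}(E),\ \varepsilon > 0, $$
contains an element of $\mathcal D$.

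Fixing such $f_1,\dots,f_n$ and $\varepsilon$, I would write $M = \mu(E) < \infty$ and $L = \bigcup_i \mathrm{supp}\, f_i$, which is compact. The next step exploits continuity of the $f_i$ on the compact set $L$: setting $\delta := \varepsilon/(M+1)$, each $x \in L$ has an open neighborhood $V_x$ on which $|f_i(y) - f_i(x)| < \delta$ for all $i$ and all $y \in V_x$. Extracting a finite subcover $V_{x_1},\dots,V_{x_m}$ of $L$ and then disjointifying it, I obtain finitely many pairwise disjoint Borel sets $B_1,\dots,B_m$ covering $L$, each contained in some $V_{x_j}$, so that every $f_i$ oscillates by less than $\delta$ on each $B_j$.

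Then, for each index $j$ with $\mu(B_j) > 0$, I would argue that $B_j \cap \mathrm{supp}\,\mu \neq \emptyset$: otherwise $B_j$ would be covered by open $\mu$-null sets and inner regularity of the Radon measure $\mu$ would force $\mu(B_j)=0$. Choosing $a_j \in B_j \cap \mathrm{supp}\,\mu$ and $c_j = \mu(B_j)$, I set $\nu = \sum_{j:\,\mu(B_j)>0} c_j \delta_{a_j} \in \mathcal D$. Since the $B_j$ partition $L$ and each $f_i$ vanishes off $L$, one has $\mu(f_i)=\sum_j \int_{B_j} f_i\, d\mu$ and $\nu(f_i)=\sum_j \mu(B_j) f_i(a_j)$, so the error estimate reads
$$ |\mu(f_i) - \nu(f_i)| \leq \sum_j \int_{B_j} |f_i - f_i(a_j)|\, d\mu \leq \delta \sum_j \mu(B_j) \leq \delta M < \varepsilon, $$
using $a_j \in B_j$ together with the oscillation bound. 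Thus $\nu \in U \cap \mathcal D$, and as $U$ was an arbitrary basic neighborhood, $\mu$ is a vague limit point of $\mathcal D$.

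I expect the finite-cover-with-small-oscillation step to be entirely routine; the hard (or at least the only genuinely load-bearing) point is the claim that the atoms $a_j$ can be placed inside $\mathrm{supp}\,\mu$, which is exactly where finiteness and inner regularity of the Radon measure $\mu$ are used. Everything else is a standard weak$^*$-density argument, so I would concentrate the care there and treat the partition construction as a Riemann-sum approximation.
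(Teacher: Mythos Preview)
The paper does not supply its own proof of this theorem; it is quoted from Bourbaki (\textit{Int\'egration}, Ch.~III, \S2, Corollaire~3) and used as a black box throughout. Your Riemann-sum construction is the standard argument and is essentially correct.

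One small inaccuracy worth fixing: after disjointifying the cover $\{V_{x_j}\}$ into the $B_j$, the oscillation of $f_i$ on $B_j$ is bounded only by $2\delta$, not $\delta$, since your construction guarantees $|f_i(y)-f_i(x_{k(j)})|<\delta$ for $y\in B_j\subset V_{x_{k(j)}}$, and the triangle inequality then gives $|f_i(y)-f_i(a_j)|<2\delta$. This is harmless --- just take $\delta=\varepsilon/(2(M+1))$ at the outset. Your treatment of the support condition is correct: for a Radon measure the complement of $\mathrm{supp}\,\mu$ is the largest open $\mu$-null set (any compact subset of it is covered by finitely many open null sets, and inner regularity on open sets finishes it), so any Borel set disjoint from $\mathrm{supp}\,\mu$ is automatically null. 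That is indeed the only place where the Radon hypothesis does real work.
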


\begin{rem}
\label{2.15}
\begin{enumerate}
\item This result will be not valid in the non-commutative setting. It is taken from the (classical) measure theory.
\item A slightly stronger formulation can be find in \cite{Mey}.  Namely, every probability measure $\lambda$ in
$\mathfrak{M}(E)$ is a weak limit of discrete (with finite support) measures belonging to the collection of probability measures in 
$\mathfrak{M}(E)$ which have the same barycenter as $\lambda$ (cf Definition \ref{7.1}).
\item The statement of Theorem \ref{2.14} can be rephrased by saying that a classical measure has the weak-$^*$ Riemann approximation property.
\end{enumerate}
\end{rem}

The final result, given in this section, is a preparatory one for a description of (classical) composite systems.
Let ($\Omega_i$, $\mathcal{F}_i$, $p_i$), $i=1,2$ be a probability space. 
The product of two probability spaces (which is also a probability space)  
is the Cartesian product ($\Omega_1\times \Omega_2$, $\mathcal{F}_1\times \mathcal{F}_2$, $p_1 \times p_2$), 
where the product measure $p_1 \times p_2$ is defined as $p_1 \times p_2(A\times B)=p_1(A)p_2(B)$ for all 
$A\in \mathcal{F}_1$ and $B\in\mathcal{F}_2$. $\mathcal{F}_1\times \mathcal{F}_2$ denotes the $\sigma$-algebra generated by sets of the form $\{ A \times B; A \in \mathcal{F}_1, B \in \mathcal{F}_2 \}.$

\begin{rem}
Let $\chi_{_{Y}}$ be an indicator function of a measurable set $Y$.
Assume that for any $A\in \mathcal{F}_1$ and $B\in \mathcal{F}_2$ 
functions $\chi_{A\times \Omega_2}$ and $\chi_{\Omega_1\times B}$ are uncorrelated. 
Then one has 
\begin{equation}
\mu(\chi_{A\times \Omega_2}\cdot\ \chi_{\Omega_1\times B})=\mu(\chi_{A\times \Omega_2})\mu(\chi_{\Omega_1\times B}) \label{s1eq9}
\end{equation}
The measures $\mu_1(A)=\mu(\chi_{A\times \Omega_2})$, $\mu_2(B)=\mu(\chi_{\Omega_1\times B})$ are called the marginal measures. Consequently, if any two events give rise to uncorrelated indicator functions, for a measure $\mu$ 
on the product measurable (probability) space then $\mu=\mu_1\times\mu_2$.
\end{rem}

\section{Classical composite systems}
 
In classical statistical mechanics, a system is described by its phase space $\Gamma$, a probability measure $\mu$, and a one parameter family of measure preserving maps $T_t: \Gamma \rightarrow \Gamma$. The phase space $\Gamma$ records the allowed collection of system coordinates and momenta. The measure $\mu$ is characterizing our knowledge about the system.
In particular, the case when $\mu$ is a point measure corresponds to the full knowledge about the system. 
The important point to note here is that such form of a measure leads to classical mechanics.
Finally, the family of maps $\{T_t\}$, where either $t \in \Rn$  or $t \in \Rn^+$, is designed to describe a time evolution of the system. Suppose we have two such systems 
 ($\Gamma_i$, $\mu_i$, $T_t^i$), $i=1,2.$ We wish to form one bigger system consisting of these two given sub-systems
 - thus we want to form a \textit{composite} system. But, prior to any discussion on this matter one should realize that there are
 three different types of independence cf \cite{Streat}. Namely:
\begin{enumerate}
\item logical independence, 
\item dynamical independence, 
\item statistical independence. 
\end{enumerate}

Logical independence means that we are implicitly assuming that the values allowed for the first component (so for $\Gamma_1$)
do not depend on the values taken by the second component (so those in $\Gamma_2$). This leads to the conclusion that the allowed values for the composite system are given by the Cartesian product of its components, i.e. $\Gamma = \Gamma_1 \times \Gamma_2$. From now on we make this assumption.
The dynamical independence means that the global evolution is described by the product of maps, i.e. $T_t=T_t^1\times T_t^2$. The important point to note here is that this independence excludes any interaction between the two subsystems. Therefore, this kind of independence will not be assumed. The last one, the statistical independence means that the global measure describing our knowledge about the system is a product measure. It is a simple matter to check that this independence is related to that described in Section 2. Therefore, we will not assume the statistical independence.
To sum up: \textit{a composite system is characterized by the triple  $(\Gamma \equiv \Gamma_1 \times \Gamma_2$, $\mu$, $T_t)$, where the probability measure $\mu$ is defined on the Cartesian product of two measurable  spaces $(\Gamma_1 \times \Gamma_2, {\cal F}_1 \times {\cal F}_2)$, and finally, $T_t$ is a global evolution defined on $\Gamma$}.

Having clarified the idea of a composite system we want to study correlation functions. In the reminder of this section, for simplicity, we assume that $\Gamma_1$, $\Gamma_2$ are compact sets and we consider only continuous stochastic variables. Thus, a global (classical) observable is given by a function $f \in C(\Gamma_1 \times \Gamma_2)$ while an observable associated with a subsystem 
is given by $f_i \in C(\Gamma_i)$, $i= 1,2$ respectively.
We recall that there is the identification (see Section 22 in \cite{sakai} or Section 11.3 in \cite{KR})
\begin{equation}
\label{14}
C(\Gamma_1 \times \Gamma_2) = C(\Gamma_1)\otimes C(\Gamma_2)
\end{equation}
where on the right hand side of (\ref{14}) $\otimes$ stands for the tensor product, (for more details on tensor products we refer the reader to Section 5). The advantage of using $(\ref{14})$ lies in the fact that it is now easy to identify observables associated with a subsystem, the given form is easy for the quantization, and finally one can naturally embed a subsystem into a large composite system. 

Now we wish to study a two point correlation function, where an observable $f_i$ is associated with a subsystem $i$, $i=1,2$. We  note, that quantum counterparts of such correlation functions are typical in Quantum Information Theory. To simplify our notation we will identify the function $f_1$ (defined on $\Gamma_1$) with the function $f_1 \otimes \mathds{1}_{\Gamma_2}$ (defined on $\Gamma_1 \times \Gamma_2$); and analogously for $f_2$.
Thus, we wish to study the functionals $\varphi(\cdot)$ given by
\begin{equation}
\varphi(f_1 \otimes f_2) =
\varphi(f_1f_2) \equiv \varphi_{\mu}(f_1 f_2)\equiv \int_{\Gamma_1\times \Gamma_2}f_1(q_1)f_2(q_2)d\mu
\end{equation}
where $f_i \in C(\Gamma_i)$, $i=1,2$ and we have used Theorem \ref{2.17}.

Now taking into account the weak-$^*$ Riemann approximation property, see Theorem \ref{2.14}, one has 

\begin{equation}
\begin{split}
\varphi(f_1f_2)&=\lim_{n \to \infty} \int_{\Gamma_1\times \Gamma_2} f_1({\gamma}_1)f_2({\gamma}_2)d\mu_n\\
&=\lim_{n \to \infty} \int_{\Gamma_1\times \Gamma_2} f_1({\gamma}_1)f_2({\gamma}_2)(\sum_{n} \lambda_nd\delta_{(a_{1,n}, a_{2,n})}^n)      \label{s1eq11}
\end{split}
\end{equation}
where $\delta_{(a,b)}$ stands for the Dirac's measure supported by $(a, b)$, $\lambda_n \geq 0$ and $\sum_n \lambda_n = 1$. Note, that for a point measure, one has
\begin{equation}
\delta_{(a,b)}=\delta_a\times \delta_b.
\end{equation}
Therefore
\begin{equation}
\begin{split}
\varphi(f_1f_2)&=\lim_{n \to \infty}\sum_{n} \lambda_n  \int_{\Gamma_1} f_1({\gamma}_1)d\delta_{a_{1,n}}^n(\gamma_1)\int_{\Gamma_2}f_2({\gamma}_2)d\delta_{a_{2,n}}^n(\gamma_2)\\
&=\lim_{n \to \infty}\sum_{n} \lambda_n\varphi_{{\delta}_{a_{1,n}}}(f_1)\varphi_{{\delta}_{a_{2,n}}}(f_2)\\
&=\lim_{n \to \infty}\sum_{n} \lambda_n(\varphi_{{\delta}_{a_{1,n}}}\otimes \varphi_{{\delta_{a_{2,n}}}})(f_1\otimes f_2)    
    \end{split}             
\end{equation}

for any $f_i \in C(\Gamma_i)$, $i=1,2$.
Consequently
\begin{equation}
\varphi_{\mu}(f_1 \otimes f_2) = \lim_{n \to \infty}\sum_{n} \lambda_n(\varphi_{{\delta}_{a_{1,n}}}\otimes \varphi_{{\delta_{a_{2,n}}}})(f_1\otimes f_2)
\end{equation}
for any $f_i \in C(\Gamma_i)$, $i=1,2$.
\begin{cor}
\label{3.1}
For a classical case, any two point correlation function of bipartite system is the limit of
a convex combination of product states. 
\end{cor}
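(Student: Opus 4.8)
The plan is to read off the statement directly from the chain of equalities established just above it, organizing that computation into three conceptual moves. First I would fix the data: since $\Gamma_1$ and $\Gamma_2$ are compact, so is $\Gamma = \Gamma_1 \times \Gamma_2$, and the identification \eqref{14} lets me regard $f_1 \otimes f_2$ as a genuine element of $C(\Gamma_1 \times \Gamma_2)$. Invoking Theorem \ref{2.17}, I realize the two-point correlation functional as integration against a probability measure,
\begin{equation}
\varphi_\mu(f_1 \otimes f_2) = \int_{\Gamma_1 \times \Gamma_2} f_1(\gamma_1) f_2(\gamma_2)\, d\mu,
\end{equation}
which is the normalized form corresponding to the state $\varphi_\mu$.

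The second move is the approximation. Applying Theorem \ref{2.14} (in the sharpened form recorded in Remark \ref{2.15}(2), so that the approximants are themselves \emph{probability} measures sharing the barycenter of $\mu$), I obtain a net of finitely supported measures $\mu_n = \sum_k \lambda_{k,n}\, \delta_{(a_{1,k,n},\, a_{2,k,n})}$ with $\lambda_{k,n} \geq 0$ and $\sum_k \lambda_{k,n} = 1$, converging to $\mu$ in the vague topology. The crucial algebraic fact is the factorization of Dirac measures on a product, $\delta_{(a,b)} = \delta_a \times \delta_b$, which turns each atom into a product. Consequently each finitely supported approximant integrates $f_1 \otimes f_2$ as
\begin{equation}
\int f_1 f_2 \, d\mu_n = \sum_k \lambda_{k,n}\, \varphi_{\delta_{a_{1,k,n}}}(f_1)\, \varphi_{\delta_{a_{2,k,n}}}(f_2) = \sum_k \lambda_{k,n}\, \bigl(\varphi_{\delta_{a_{1,k,n}}} \otimes \varphi_{\delta_{a_{2,k,n}}}\bigr)(f_1 \otimes f_2),
\end{equation}
i.e.\ a convex combination of product states, since each point evaluation $\varphi_{\delta_a}$ is a state on the respective factor and the $\lambda_{k,n}$ form a probability vector.

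The final move is to pass to the limit. Because $f_1 \otimes f_2 \in C(\Gamma_1 \times \Gamma_2)$ and vague convergence is precisely convergence of $\mu_n(g) \to \mu(g)$ for every $g \in C_{\mathfrak K}(\Gamma_1 \times \Gamma_2)$, I may evaluate the net at this single test function and conclude
\begin{equation}
\varphi_\mu(f_1 \otimes f_2) = \lim_{n \to \infty} \sum_k \lambda_{k,n}\, \bigl(\varphi_{\delta_{a_{1,k,n}}} \otimes \varphi_{\delta_{a_{2,k,n}}}\bigr)(f_1 \otimes f_2),
\end{equation}
which is the asserted representation as a limit of convex combinations of product states.

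The step I expect to require the most care is the passage to the limit: Theorem \ref{2.14} only furnishes a \emph{limit point} in the vague topology, so strictly one works with a convergent subnet rather than a sequence, and one must verify that the normalization ($\sum_k \lambda_{k,n} = 1$) is preserved along the net so that the combinations stay genuinely convex and the factors genuinely states. This is exactly what the barycenter-preserving refinement in Remark \ref{2.15}(2) guarantees, so no essential difficulty remains; the remaining bookkeeping — testing against the fixed function $f_1 \otimes f_2$ and using the factorization of point measures — is routine.
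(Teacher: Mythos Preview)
Your proposal is correct and follows essentially the same argument the paper gives in the displayed chain of equalities immediately preceding the corollary: Markov--Riesz--Kakutani to pass to a measure, the weak-$^*$ Riemann approximation (Theorem~\ref{2.14}) to get finitely supported approximants, the factorization $\delta_{(a,b)}=\delta_a\times\delta_b$, and then testing against $f_1\otimes f_2$. Your added care about using Remark~\ref{2.15}(2) to keep the approximants probability measures (hence genuine convex combinations of states) and about nets versus sequences only sharpens what the paper leaves implicit.
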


This means that it is of the separable form (for the definition see Section 6). We will see that  it is not true for 
the quantum case. The important point to note here is that two point correlation function is the principal ingredient of correlation coefficient.

\section{Rules of Quantum Mechanics}
Let us begin this section with a statement borrowed from the Omn\'es book, see page 111 in \cite{omnes}.
"\textit{Every physical system, whatever an atom or a star is assumed to be described by a universal 
kind of mechanic which is Quantum Mechanics}."\newline

Thus, it is obvious that one should quantize both probability calculus and the concept of (classical) composite systems which were given in Sections 2 and 3 respectively. To proceed to quantization, for the sake of clarity, it is desirable
to list the rules of quantum theory. To this end we extract them from Dirac's \cite{dirac} and Omn\'es \cite{omnes} books. One has:

\begin{rle} 
The theory of given individual isolated physical system can be entirely expressed in terms 
of a specific Hilbert space and mathematical notions associated with it, particularly a \textit{specific algebra of operators}.
\end{rle}

\begin{rle}
A specific self-adjoint operator is associated with an \textbf{isolated} physical system. This operator is the system Hamiltonian $H$. It determines Heisenberg's dynamics (which replace Newton's law of motion). The dynamics is expressed by a continuous $1$-parameter unitary group of operators {$U(t)$, $t\in \mathbb{R}$} having the Hamiltonian $H$ as its infinitesimal generator. In particular, the evolution of an observable $O$ is given by
$O_t=U(t)OU^*(t)\equiv \alpha_t(O)$ (in Heisenberg picture)
while the evolution of a state is given by
$\Psi_t=U(t)\Psi$ (in Schr\"odinger picture).
\end{rle}

\begin{rle}
On the specific algebra $\mathfrak{A}$ of operators associated with a given physical system there exist a family of linear positive normalized forms on $\mathfrak{A}$. They form the set of states $\mathfrak{S}$. The interpretation of a state $\varphi$ ($\varphi \in \mathfrak{S}$, $\mathfrak{A}\ni O\rightarrow \varphi(O) \in  \mathbb{C}$) as given by Born is that the number $\varphi(O)$ (real if $\varphi$ and $O$ are self-adjoint) is the expectation value of observable $O$.\newline
\end{rle}\noindent
and 

\begin{rle}
Let two physical systems $S_1$ and $S_2$ be represented by Hilbert spaces $\mathcal{H}_1$, $\mathcal{H}_2$, algebras $\mathfrak{A}_1$, $\mathfrak{A}_2$, sets of states 
$\mathfrak{S}_1$, $\mathfrak{S}_2$ and finally Hamiltonians $H_1$, $H_2$ respectively. When they are combined in one (composite) system $S$ then its Hilbert space $\mathcal{H}$ is equal  $\mathcal{H}_1\otimes \mathcal{H}_2$ and its algebra $\mathfrak{A}=\mathfrak{A}_1\otimes \mathfrak{A}_2$. When the systems $S_1$ and $S_2$ are dynamically independent,  
Hamiltonian associated with the composite system is given by $H=H_1\otimes \mathds{1}+\mathds{1}\otimes H_2$.
\end{rle}\noindent
and finally 

\begin{rle}
Let a composite system $(\mathfrak{A} \equiv \mathfrak{A}_1 \otimes \mathfrak{A}_2,\mathfrak{S}_{\mathfrak{A}}, \alpha_t(\cdot))$ be given. When, one is interested only in time evolution of a subsystem, say that labeled by $"1"$,
then a reduction of (global) Hamiltonian type dynamics should be carried out.
As a result, time evolution of the subsystem $"1"$ is described by a one parameter family of (linear) maps
$T_t : \mathfrak{A}_1 \rightarrow \mathfrak{A}_1$ such that $T_t(f) \geq 0$ for any $t$ and a positive $f \in \mathfrak{A}_1$ (so positivity preserving), and $T_t(\mathds{1}) = \mathds{1}$.
\end{rle}\noindent
To comment the last rule we add

\begin{rem}
\begin{itemize}
\item If one adds Markovianity assumption then $\{T_t\}$ is also a semigroup, i.e. $T_t \circ T_s = T_{t+s}$
for non-negative $t$ and $s$.
\item Frequently, more stronger assumption on positivity - \textit{complete positivity} -\textbf{} is relevant, see \cite{AlLe}, \cite{davies}.
\end{itemize}
\end{rem}\noindent
To illustrate these rules we present:

\begin{exmp} (\textit{Dirac's formalism})
\label{4.5}
\begin{enumerate}
\item (\textit{Dirac's Quantum Mechanics.})  A separable, infinite dimensional  Hilbert space $\mathcal H$ is associated with a system.
The specific algebra $\mathfrak A$ is taken to be  $B(\mathcal{H})$ - the set of all linear, bounded operators on $\mathcal H$. 
The set of states is determined by density matrices, i.e. positive trace class operators 
with trace equal to 1. The expectation value $<A>$ of $A=A^* \in B(\mathcal{H})$ at a state $\varrho$  is given by $<A> = \Tr \varrho A$
where $\varrho$ is a density matrix.
\item (\textit{composite system in Dirac's formalism})
Let $S_i$ be a physical system, $\mathcal{H}_i$ be an associated Hilbert space, $B(\mathcal{H}_i)$ 
be an associated algebra and $\mathfrak{S}_i$ denote set of states (density operators) for $i=1,2$. 
Then Hilbert space, associated algebra and set of states for composite system are given by $\mathcal{H}=\mathcal{H}_1\otimes \mathcal{H}_2$, 
  $B(\mathcal{H}_1\otimes \mathcal{H}_2)\approx B(\mathcal{H}_1)\otimes B(\mathcal{H}_2)$ and $\mathfrak{S}$ where, in general contrary to the classical case, one has $\mathfrak{S}\neq \overline{conv}( \mathfrak{S}_1\otimes \mathfrak{S}_2)$ (for more details, see the next section).
\end{enumerate}
\end{exmp}\noindent
and

\begin{exmp}(\textit{Classical systems})
\begin{enumerate}
\item (\textit{classical system})
Assume that $\mathfrak A$ is an abelian C*-algebra with unit $\mathds{1}$. Then (according to Gelfand-Neimark theorem 
(see eg. section 1.2 in \cite{sakai}) $\mathfrak A$ can be identified with the $C^*$-algebra (see below for the definition of $C^*$-algebra)
of all complex valued continuous functions on $\Gamma$, where $\Gamma$ is a compact Hausdorff space.
Hence, a state (normalized, positive, linear form) on $\mathfrak A$ leads to a probability measure on $\Gamma$, cf. Theorem \ref{2.17}. Consequently, the probabilistic scheme described in Section 2 was recovered.
\item (\textit{classical composite system})
Take (for $i=1,2$) 
 two abelian $C^*$-algebras $\mathfrak{A}_i$ with unit and combine them in one system (cf Rule 4.4).
 Then we have $\mathfrak{ A} \equiv \mathfrak{A}_1\otimes \mathfrak{A}_2=C(\Gamma_1)\otimes C(\Gamma_2)\approx C(\Gamma_1\times \Gamma_2)$. 
 Take a state (normalize, linear, positive form) on $\mathfrak A$. Then, according to Theorem \ref{2.17}, there exists a probability measure on $\Gamma_1 \times \Gamma_2$. 
Thus, we recover a classical composite system described in the previous section. In particular, (see the end of the previous section; Corollary \ref{3.1}) one has $\mathfrak{S}= \overline{conv}(\mathfrak{S}_1\otimes \mathfrak{S}_2)$!
\end{enumerate}
\end{exmp}

To comment the just presented Rules we need some definitions. 
\begin{defn}
Let $\mathfrak A$ be a Banach space. We say that $\mathfrak A$ is a
Banach algebra if it is an algebra, i.e.
a multiplication
\begin{equation}
\mathfrak{A} \times \mathfrak{A} \ni \left\langle f,g\right\rangle \mapsto fg \in \mathfrak{A}
\end{equation}
is defined in such way
that for every $f, g\in \mathfrak A$ 
one has
\begin{equation}
\left\|fg\right\|\leq\left\|f\right\|\left\|g\right\|
\end{equation}
It follows that multiplication in Banach algebra is separately continuous in each variable.
\end{defn}

\begin{defn}
An involution on an algebra $\mathfrak A$ is a antilinear map $f\rightarrow f^*$ such that for all 
 $f, g\in \mathfrak A$  and $\lambda\in \mathbb{C}$ one has 
\begin{equation}
(f^*)^*=f
\end{equation}
\begin{equation}
(fg)^*=g^*f^*
\end{equation}
\begin{equation}
(\lambda f)^*=\overline{\lambda}f^* \quad {\rm{and}} 
\quad (f+g)^* = f^* + g^*
\end{equation}
A $*$-algebra is an algebra with involution. A $*$-Banach algebra $\mathfrak A$ is a $*$-algebra such that $\mathfrak A$ is a Banach algebra and $\|f^*\| = \|f\|$.
\end{defn}

\begin{defn}
A C*-algebra $\mathfrak A$ is a $*$-Banach algebra such that the norm $\| \cdot \|$ satisfies
\begin{equation}
\left\|ff^*\right\|=\left\|f\right\|^2
\end{equation}
\end{defn}

Finally, cf Section 5.1 in \cite{KR}:

\begin{defn}
A $C^*$-algebra $\mathfrak M$, acting on a Hilbert space $\mathcal H$, that is closed in the weak operator topology and contains the unit $\mathds{1}$ 
is said to be a von Neumann algebra (or equivalently, a $W^*$-algebra).
\end{defn}

We will need also a warning.
 Namely, see \cite{Win}, \cite{Wiel}, \cite{Sakai1} 
\begin{prop}
\label{4.11}
It is impossible to find two elements  $a$, $b$ in a Banach algebra $\mathfrak A$ such that 
\begin{equation}
ab - ba = 1.
\end{equation}
\end{prop}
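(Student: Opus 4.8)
The plan is to argue by contradiction. Suppose $a, b \in \mathfrak{A}$ satisfy $ab - ba = \mathds{1}$, where $\mathds{1}$ is the unit of $\mathfrak{A}$. From this single relation I would first extract, by induction on $n$, the commutation identity
\begin{equation}
a^n b - b a^n = n\, a^{n-1}, \qquad n = 1, 2, \ldots,
\end{equation}
with the convention $a^0 = \mathds{1}$. The base case $n=1$ is exactly the hypothesis. For the inductive step I would use the algebraic identity $a^{n+1}b - b a^{n+1} = a\,(a^n b - b a^n) + (ab - ba)\,a^n$, substitute the inductive hypothesis $a^n b - b a^n = n a^{n-1}$ into the first summand and the defining relation $ab - ba = \mathds{1}$ into the second, and collect terms to obtain $a\cdot n a^{n-1} + a^n = (n+1)a^n$, as required.

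Next I would observe that $a$ cannot be nilpotent. Indeed $a \neq 0$ (otherwise $\mathds{1} = 0$), and if $a^k = 0$ for some minimal $k \geq 1$, then the identity above with $n = k$ would force $0 = k\,a^{k-1}$, hence $a^{k-1} = 0$, contradicting minimality. Therefore $a^{n-1} \neq 0$ for every $n$. Taking norms in the commutation identity and using submultiplicativity,
\begin{equation}
n\,\|a^{n-1}\| = \|a^n b - b a^n\| \leq 2\,\|a^n\|\,\|b\| \leq 2\,\|a\|\,\|b\|\,\|a^{n-1}\|,
\end{equation}
and since $\|a^{n-1}\| > 0$ I may divide through to conclude $n \leq 2\,\|a\|\,\|b\|$ for every $n \in \mathbb{N}$.

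This is the contradiction that closes the argument: the right-hand side is a fixed constant, while $n$ is unbounded. The step I would flag as the crux is the \emph{exclusion of nilpotency}: without it the final estimate is vacuous, since dividing by $\|a^{n-1}\| = 0$ would be illegitimate, so one is forced to feed the commutation identity back into itself to guarantee $a^{n-1} \neq 0$. Everything else is the induction together with a one-line norm estimate. It is worth noting that the proof invokes only submultiplicativity of the norm and the presence of a unit $\mathds{1} \neq 0$; completeness of $\mathfrak{A}$ plays no role, which is why the statement holds at the level of normed algebras with unit.
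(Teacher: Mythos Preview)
Your proof is correct and is precisely the classical Wielandt argument. The paper itself does not supply a proof of this proposition; it merely states the result with references to Winter, Wielandt, and Sakai, so your self-contained derivation in fact fills in what the paper leaves to the literature.
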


The principal significance of Proposition \ref{4.11} stems from the following conclusion:  \textit{it is impossible to realize canonical commutation relations in terms of a Banach algebra}. Thus, in particular,
it is impossible to carry out a canonical quantization on finite dimensional spaces.
Consequently, \textbf{ it is difficult to speak about quantumness of finite dimensional systems}.

Now we are in a position to comment the above listed Rules for quantization. The specific algebra mentioned in Rule 4.1
means the collection of bounded functions of, in general, unbounded operators. Note that such operators appear in the theory due to the procedure of quantization. On the other hand, Proposition \ref{4.11} clearly shows that finite dimensional models are not able to describe genuine quantum systems. Therefore, they can only provide so called ``toy'' models. Such models are frequently useful as they can shed some new light on very specific questions which are appearing in non-commutative setting.

Rule 4.3 is saying that a (quantum) observable is a non-commutative counterpart of a stochastic variable, cf Remark \ref{2.8}.
In particular, this implies that quantum probability should appear. The standard form of non-commutative  probability calculus is provided by the pair $(\mathfrak{A}, \varphi)$, where $\mathfrak A$ is a $C^*$-algebra, $\varphi$ is a state, i.e. a linear functional on $\mathfrak A$ such that $\varphi(\mathds{1}) = 1$, and $\varphi(a) \geq 0$ for all $a \geq 0$. It is important to note here that on the additional assumption that $\mathfrak A$ is abelian one can recover the classical probability scheme. Namely, any abelian $C^*$-algebra $\mathfrak A$ with unit $\mathds{1}$ can be identified with the collection of all complex valued functions defined on a compact Hausdorff space $E$.
Therefore, Theorem \ref{2.17} implies the existence of a probability measure $\mu$ on $X$, which is uniquely determined by a state. Consequently, fundamentals of a (classical) probability were obtained. But, as it is impossible to embed a non-commutative $C^*$-algebra into commutative one, there is no hope to embed quantum probability into larger classical probability scheme.

There is also another extremely important motivation for more ``refined'' algebras - for $W^*$-algebras.
Namely, classical mechanics demands the differential and integral calculus for its description. It is natural to expect that quantum mechanics demands non-commutative calculus for its description. 
This is the case. In fact, it was recognized in early days of quantum mechanics, see for example von Neumann book \cite{Neumann}. The non-commutative calculus, including non-commutative integration theory, was developed in the second half of XX century, see \cite{Ne}, \cite{Te1}, and \cite{takesaki}, and it is based on $W^*$-algebra theory. For example, such calculus was successfully applied for 
quantum potential theory, see \cite{Franz}, for
a quantization of Markov-Feller processes, see \cite{R7}, \cite{R8}, \cite{R9} and for statistical mechanics, see \cite{LM}.

Consequently, to simplify our exposition of quantum rules, by a \textit{specific algebra} we will mean either $C^*$-algebra or $W^*$-algebra.

The Rule 4.4 is saying that to form a bigger system consisting of two smaller subsystems, a tensor product of appropriate algebras should be taken. As this concept for Banach spaces is not trivial one, we will clarify this notion in the next section.

We wish to close this section with the final remark concerning the following question:
Why we do not restrict ourselves to Dirac's formalism only? In other words, why we will not restrict ourselves to
formalism specified in Example \ref{4.5}? The answer follows from the following observations, for details see \cite{haag} and \cite{BR}:
\begin{enumerate}
\item Individual features of a system as well as a relation between a system and the region occupied by this system should be taken into account, (see also Examples \ref{7.15}, \ref{7.13}, and \ref{7.19})
\item To take into account a causality one should have a possibility to say how far away is a subsystem $S_1$ 
from $S_2$ (in classical case, subsystems are described by subspaces of an
Euclidean space. So this question has an easy solution). The above question, for quantum case, demands more general setting than that one which is offered by Example \ref{4.5}.
\item Quantum field theory demands more general approach than that given by Dirac's formalism, see \cite{haag}. The important point to note here is that quantum correlations as a phenomenon was observed in quantum field theory many decades ago. Probably, the best example is given by the Reeh and Schlieder theorem \cite{Reeh}. This theorem is saying that any state vector of a quantum field can be approximated by an action of ``local'' operator acting on the vacuum. To this end the operator must exploit the small but non zero long distant correlations which exist in the vacuum.
Other examples of quantum correlations in quantum field theory can be found in \cite{SW1} and \cite{SW2}.
\end{enumerate}

\section{Tensor products}
This section provides a brief exposition of the theory of tensor products, for details we refer the reader to \cite{Ryan}, \cite{DF}, and \cite{DFS}. We begin with the definition of tensor product of linear spaces.
\begin{defn} 
\label{5.1}
Let $X$ and $Y$ are linear spaces. We say that a form $\varphi:X\times Y\rightarrow \mathbb{C}$ 
is bilinear if $\varphi(\lambda_1x_1+\lambda_2x_2,\lambda_3y_1+\lambda_4y_2)=\lambda_1 \lambda_3\varphi(x_1,y_1)+ \lambda_1 \lambda_4\varphi(x_1,y_2)+\lambda_2 \lambda_3\varphi(x_2,y_1)+ \lambda_2 \lambda_4\varphi(x_2,y_2)$ where $x_1,x_2 \in X$, $y_1, y_2 \in Y$ and $\lambda_i \in \mathbb{C}$, $i=1,...,4$.  
Let us denote the linear space of all such bilinear forms by $B(X, Y)$. 
A simple tensor is a linear form $x\otimes y$ on the linear space $B(X, Y)$, ie. $x\otimes y \in B(X, Y)^{\prime}$, such that:

\begin{equation}
(x\otimes y)(A)=A(x,y)       \label{s1eq13}
\end{equation}
for all $A\in B(X, Y)$.
The algebraic tensor product of two linear spaces $X$ and $Y$, $X\odot Y\subset B(X, Y)^{\prime}$, is the set of all linear, finite, 
combinations of simple tensors. A typical $v\in X\odot Y$ is of the form:

\begin{equation}
v=\sum_{i=1}^n\lambda_i x_i\otimes y_i    \label{s1eq14}
\end{equation}
where we emphasize that the decomposition given by (\ref{s1eq14}) is not unique.
\end{defn}

The above construction can be applied to Banach spaces $X$ and $Y$. However, 
the algebraic tensor product of $X$, $Y$ is not automatically a Banach space. 
To obtain tensor product of Banach spaces which itself is a Banach space one 
must define a norm on $X\odot Y$. However, \textit{unlike the finite dimensional case, the tensor product of infinite dimensional Banach spaces behaves mysteriously}, see \cite{takesaki}, vol I. Notes to Section IV.5. In particular,
on $X\odot Y$ one can define various norms. 
It is natural to restrict oneself to norms
satisfying: $\left\|x \otimes y \right\|=\left\|x\right\|\left\|y\right\|$. Such norms 
are said to be cross-norms.
There are exceptional cases, where the cross-norm is uniquely defined. The most important case for Physics 
is that one when Banach spaces $X$ and $Y$ are Hilbert spaces and we want $X \otimes Y$ to be also a Hilbert space. However, in general,  there are plenty of cross-norms on $X\odot Y$. To illustrate this 
phenomenon we give:

\begin{exmp}\textit{Operator norm.}\newline
Let $\mathcal{H}_i$ be a Hilbert space and $B(\mathcal{H}_i)$ denote the space of all 
bounded linear operators on $\mathcal{H}_i$ (for $i=1,2$). Then the operator norm on 
$B(\mathcal{H}_1)\odot B(\mathcal{H}_2)\subseteq B(\mathcal{H}_1\otimes \mathcal{H}_2)$ 
is taken from that on $B(\mathcal{H}_1\otimes \mathcal{H}_2)$. It has the cross-norm property.
The closure of $B(\mathcal{H}_1)\odot B(\mathcal{H}_2)$ with respect to this operator norm will be denoted by $B(\mathcal{H}_1)\otimes B(\mathcal{H}_2)$ and called the tensor product of $B(\mathcal{H}_1)$ and 
$B(\mathcal{H}_2)$.
\end{exmp}

We now define two important others cross-norms on $X \odot Y$:

\begin{exmp}\textit{Injective norm.}\newline
Let $X$ and $Y$ be Banach spaces. The injective norm $\epsilon$ on  $X\odot Y$ is defined  by
\begin{equation}
\epsilon(v)=\sup\left\{|\sum_{i=1}^n f(x_i) g(y_i)|: f \in X^*, \|f\| \leq1; g \in Y^*, \|g\| \leq 1 \right\}
\end{equation}
where $v=\sum_{i=1}^nx_i\otimes y_i$. The completion of
$X\odot Y$ with respect to the norm $\epsilon$ is called the injective tensor product and is denoted by $X\otimes_\epsilon Y$.
\end{exmp}\noindent
and
\begin{exmp}\textit{Projective norm.}\newline
Let $X$ and $Y$ be Banach spaces. The projective norm $\pi$ on  $X\odot Y$ is defined  by
\begin{equation}
\pi(v)=\inf\left\{\sum_{i=1}^n\left\|x_i\right\|\left\|y_i\right\|:\ v=\sum_{i=1}^nx_i\otimes y_i\right\}
\end{equation}
The completion of
$X\odot Y$ with respect to the norm $\pi$ is called the projective tensor product and is denoted as $X\otimes_\pi Y$.
\end{exmp}

Both norms, injective and projective, have the cross-norm property. Moreover, the projective norm is the largest  cross-norm while the injective norm is the smallest cross-norm. Moreover, if $\mathfrak{A}_1$ and $\mathfrak{A}_2$ are $C^*$-algebras, $\| \cdot \|_{\pi}$ is rarely a $C^*$-norm. 
The importance of projective norms follows from the old Grothendieck result \cite{Gro}, see also \cite{DF}.

\begin{thm}
\label{5.5}
Let $X$ and $Y$ be Banach spaces.
Then, there exists an isometric isomorphism between the Banach space $\mathfrak{B}(X,Y)$
of all bounded bilinear functionals on $X\times Y$ and the space $(X\otimes_{\pi} Y)^*$ of all continuous linear functionals on $(X\otimes_{\pi} Y)$ given by 
\begin{equation}
\hat{\varphi}(x\otimes y)=\varphi(x,y)
\end{equation} 
where $\varphi \in \mathfrak{B}(X,Y)$, $x \in X$, and $y \in Y$.
\end{thm}\noindent
The modification of the projective norm, the operator space projective norm, leads to, see \cite{EfRu}, Section 7.2,
\begin{thm}
\label{5.6}
Let $\mathfrak{M}\subseteq B(\mathcal{H})$ and $\mathfrak{N}\subseteq B(\mathcal{K})$ be two von Neumann algebras.
Denote by $\mathfrak{M}_*$ the predual of $\mathfrak M$, i.e. such Banach space that $(\mathfrak{M}_*)^*$ is isomorphic to $\mathfrak M$, i.e. $(\mathfrak{M}_*)^*\cong \mathfrak M$.
There is an isometry 
\begin{equation}
(\mathfrak{M} \otimes \mathfrak{N})_* = \mathfrak{M}_* \hat{\otimes}_{\pi} \mathfrak{N}_*
\end{equation}
where the von Neumann algebra $\mathfrak{M} \otimes \mathfrak{N}$ is the weak closure of the set $\{ A \otimes B; A \in \mathfrak{M}, B\in \mathfrak{N} \}$.
In particular, 
\begin{equation}
B(\mathcal{H} \otimes \mathcal{K})_* = B(\mathcal{H})_* \hat{\otimes}_{\pi} B(\mathcal{K})_*.
\end{equation}
 $\mathfrak{M}_* \hat{\otimes}_{\pi} \mathfrak{N}_*$ denotes the closure of $\mathfrak{M}_* \otimes \mathfrak{N}_*$
with respect to the operator space projective norm.
\end{thm}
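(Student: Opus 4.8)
The natural strategy is to prove the \emph{dual} form of the assertion and then appeal to the uniqueness of preduals. Concretely, I would show that $\mathfrak{M}_* \hat{\otimes}_{\pi} \mathfrak{N}_*$ is a predual of the von Neumann algebra $\mathfrak{M} \otimes \mathfrak{N}$, that is,
\begin{equation}
\label{dualform}
(\mathfrak{M}_* \hat{\otimes}_{\pi} \mathfrak{N}_*)^* \cong \mathfrak{M} \otimes \mathfrak{N}
\end{equation}
as an isometric (in fact completely isometric) identification. Since the predual of a von Neumann algebra is unique up to isometric isomorphism, taking preduals on both sides of (\ref{dualform}) yields the asserted isometry $(\mathfrak{M} \otimes \mathfrak{N})_* = \mathfrak{M}_* \hat{\otimes}_{\pi} \mathfrak{N}_*$. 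The whole argument is the operator-space counterpart of Theorem \ref{5.5}: one systematically replaces the class of bounded (bilinear) maps by the class of completely bounded ones, the point being that the operator space projective norm is designed precisely so that its dual linearises completely bounded bilinear forms.

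First I would record the operator-space structures involved. Each von Neumann algebra $\mathfrak{M} \subseteq B(\mathcal{H})$ carries the canonical operator space structure inherited from $B(\mathcal{H})$, and, being a dual space $\mathfrak{M} = (\mathfrak{M}_*)^*$, its predual $\mathfrak{M}_*$ is itself an operator space (with the operator space structure dual to that of $\mathfrak{M}$); likewise for $\mathfrak{N}$. I would then quote the fundamental duality for the operator space projective tensor product, the completely bounded analogue of Theorem \ref{5.5} (see \cite{EfRu}, Section~7): for operator spaces $V$ and $W$,
\begin{equation}
(V \hat{\otimes}_{\pi} W)^* = \mathrm{CB}(V, W^*),
\end{equation}
where $\mathrm{CB}(V, W^*)$ denotes the completely bounded maps from $V$ into $W^*$, the identification being that a completely bounded bilinear form on $V \times W$ corresponds to such a map. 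Applying this with $V = \mathfrak{M}_*$, $W = \mathfrak{N}_*$ and using $(\mathfrak{N}_*)^* = \mathfrak{N}$ gives
\begin{equation}
\label{cbform}
(\mathfrak{M}_* \hat{\otimes}_{\pi} \mathfrak{N}_*)^* = \mathrm{CB}(\mathfrak{M}_*, \mathfrak{N}).
\end{equation}

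The heart of the proof is then the identification $\mathrm{CB}(\mathfrak{M}_*, \mathfrak{N}) = \mathfrak{M} \otimes \mathfrak{N}$, which I would realise through \emph{slice maps}. In one direction, each $u \in \mathfrak{M} \otimes \mathfrak{N}$ defines a map $T_u : \mathfrak{M}_* \to \mathfrak{N}$ by the left slice $T_u(\phi) = (\phi \otimes \mathrm{id})(u)$; normality of the slice maps guarantees that $T_u$ is well defined, lands in $\mathfrak{N}$, and one checks $\|T_u\|_{\mathrm{cb}} = \|u\|$. In the other direction, the product functionals $\phi \otimes \psi$ (with $\phi \in \mathfrak{M}_*$, $\psi \in \mathfrak{N}_*$) are normal on $\mathfrak{M} \otimes \mathfrak{N}$, and their finite sums separate the weak-$^*$ dense algebraic tensor product $\mathfrak{M} \odot \mathfrak{N}$; hence the canonical image of $\mathfrak{M}_* \odot \mathfrak{N}_*$ is norm dense in $(\mathfrak{M} \otimes \mathfrak{N})_*$. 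Combined with the isometry statement this gives surjectivity and closes the identification (\ref{dualform}).

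The main obstacle, and the place where the genuinely operator-space (as opposed to merely Banach-space) input is indispensable, is concentrated in the two assertions just made: (i) that the operator space projective norm on $\mathfrak{M}_* \odot \mathfrak{N}_*$ coincides isometrically with the predual norm of $\mathfrak{M} \otimes \mathfrak{N}$, and (ii) that \emph{every} completely bounded $T : \mathfrak{M}_* \to \mathfrak{N}$ actually arises from an element of the \emph{spatial} (weak-closure) tensor product $\mathfrak{M} \otimes \mathfrak{N}$, rather than from some larger tensor product. Assertion (i) is exactly the point at which the Banach-space projective norm of Theorem \ref{5.5} fails to be a matricial cross-norm and must be refined to the operator space projective norm. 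Assertion (ii) requires the representation theory of completely bounded maps (the Wittstock/Stinespring-type factorisation) together with normality arguments --- for instance Tomiyama's slice-map property and the commutation theorem $(\mathfrak{M} \otimes \mathfrak{N})' = \mathfrak{M}' \otimes \mathfrak{N}'$ --- to pin the resulting operator down inside $B(\mathcal{H} \otimes \mathcal{K})$ and to verify that it lies in the weak closure of $\mathfrak{M} \odot \mathfrak{N}$. The special case $\mathfrak{M} = B(\mathcal{H})$, $\mathfrak{N} = B(\mathcal{K})$ then follows at once, since there $\mathfrak{M} \otimes \mathfrak{N} = B(\mathcal{H} \otimes \mathcal{K})$ and the preduals are the respective trace-class spaces.
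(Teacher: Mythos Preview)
The paper does not prove Theorem~\ref{5.6} at all; it merely states the result and refers the reader to \cite{EfRu}, Section~7.2. Your outline is precisely the standard argument one finds there: dualise via the operator-space analogue of Theorem~\ref{5.5} to obtain $(\mathfrak{M}_* \hat{\otimes}_{\pi} \mathfrak{N}_*)^* = \mathrm{CB}(\mathfrak{M}_*, \mathfrak{N})$, identify the latter with the spatial tensor product $\mathfrak{M}\,\overline{\otimes}\,\mathfrak{N}$ through slice maps and the representation/normality machinery, and conclude by uniqueness of preduals. So your approach is correct and coincides with the one the paper defers to; there is nothing further to compare.
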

The above given identifications \textit{will be used for definition of entangled states} on von Neumann algebras, see the next section. However, here, we wish to make two remarks.
\begin{rem}
As it was mentioned, the identification given by Theorem \ref{5.6} will be used to define entangled normal states. On the other hand, using the theory of tensor products of (abstract) $C^*$ and $W^*$-algebras, see sections IV.4 and IV.5 in \cite{takesaki}, one can show that there is an equivalent way to describe the predual of the tensor product of $W^*$-algebras. However, as the emphasis will be put on the $C^*$-algebraic case, neither the detailed description of the predual of the tensor product of $W^*$-algebras nor the explicit form of the norm will be used. This explains why we drop further details on the operator space projective norm.
\end{rem}
and
\begin{rem}
One can show, see Chapter 1 in \cite{sakai}, or \cite{KR}, that $\mathfrak{M}_*\subset \mathfrak{ M}^*$.
Consequently, any element $\varphi$ in $\mathfrak{M}_*$  is a linear bounded form on $\mathfrak M$. But, being in $\mathfrak{M}_*$, $\varphi$ possesses extra continuity properties. It is weak-$^*$ continuous or equivalently $\varphi$ is called to be normal. The ``physical'' significance of such states follows from the fact that normal states correspond to \textit{density matrices}.
\end{rem}

The important property of the injective norm is contained in (see \cite{Gro} and 
\cite{DF}):
\begin{thm}
\label{5.8}
Let $\varphi$ be a linear form on $E\odot F$ where $E$ and $F$ are Banach spaces. Then, $\varphi$ is continuous with respect to the injective norm,\newline
 $\varphi\in (E\otimes_{\epsilon} F)^*$,
if and only if there exist (positive) 
Borel measure $\mu$ on $B_{E^*}\times B_{F^*}$ (so on the Cartesian product of unit balls in duals of $E$ and 
$F$ with the weak $*$-topology) such that for all $z\in E\odot F$

\begin{equation}
\label{integral}
\left\langle \varphi,\ z \right\rangle=\int_{B_{E^*}\times B_{F^*}}\left\langle x^*\otimes y^*,\ z \right\rangle d\mu(x^*,\ y^*)
\end{equation}
The measure $\mu$ can be chosen such that
\begin{equation}
\left\|\mu\right\|:=\mu(B_{E^*}\times B_{F^*})=\left\|\varphi\right\|_{(E\otimes_{\epsilon} F)^*}
\end{equation}
\end{thm}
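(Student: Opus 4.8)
The plan is to realize the injective tensor product as a closed subspace of continuous functions on a compact space and then invoke the Riesz representation theorem. Set $K = B_{E^*} \times B_{F^*}$, equipped with the product of the weak-$*$ topologies; by the Banach--Alaoglu theorem each factor is compact, hence $K$ is compact by Tychonoff's theorem. Define a map $J : E \odot F \to C(K)$ by $J(z)(x^*, y^*) = \langle x^* \otimes y^*, z\rangle$. First I would check that $J$ really lands in $C(K)$: each $J(z)$ is a finite sum $\sum_i x^*(x_i)\, y^*(y_i)$ of products of weak-$*$ continuous evaluation functionals, hence continuous on $K$. The decisive observation, which is nothing but the definition of the injective norm $\epsilon$ recalled earlier, is that
\[
\| J(z)\|_\infty = \sup_{(x^*,y^*)\in K} |\langle x^*\otimes y^*, z\rangle| = \epsilon(z),
\]
so $J$ is an isometry of $(E\odot F, \epsilon)$ onto the linear subspace $M := J(E\odot F)\subset C(K)$.

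With this embedding in hand, the easy (``if'') direction is a direct estimate: given a representing measure $\mu$, one has $|\langle \varphi, z\rangle| \le \int_K |\langle x^*\otimes y^*, z\rangle|\, d|\mu| \le \|\mu\|\,\epsilon(z)$, so $\varphi$ extends continuously to $E \otimes_\epsilon F$ with $\|\varphi\| \le \|\mu\|$. For the (``only if'') direction, suppose $\varphi \in (E\otimes_\epsilon F)^*$. I would transport $\varphi$ to a form $\psi$ on $M$ by $\psi(J(z)) = \langle \varphi, z\rangle$; since $J$ is isometric, $\psi$ is well defined and $\|\psi\|_{M^*} = \|\varphi\|$. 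Applying the Hahn--Banach theorem extends $\psi$ to $\tilde\psi \in C(K)^*$ with $\|\tilde\psi\| = \|\psi\| = \|\varphi\|$. Finally, the Riesz--Markov--Kakutani representation theorem (Theorem \ref{2.17} together with its complex-scalar version) produces a regular Borel measure $\mu$ on $K$ with $\tilde\psi(g) = \int_K g\, d\mu$ for all $g \in C(K)$ and $\|\mu\| = \|\tilde\psi\|$. Specializing to $g = J(z)$ yields precisely the claimed integral formula, and combining the two directions gives the sharp norm identity $\|\mu\| = \|\varphi\|_{(E\otimes_\epsilon F)^*}$.

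The embedding-plus-Riesz skeleton is routine; the genuine subtlety, and where I expect the main obstacle to lie, is the parenthetical assertion that $\mu$ may be taken \emph{positive}, so that $\|\mu\| = \mu(K)$ as written. The Hahn--Banach step only yields a complex (signed) measure in general, with $\|\mu\|$ understood as total variation. Positivity of $\mu$ genuinely uses positivity of $\varphi$ — exactly the case of the states that motivate the theorem. The difficulty is that the subspace $M$ need not contain the constant function $\mathds{1}_K$, so a positivity-preserving extension of $\psi$ is not automatic. To secure it one must argue either by enlarging $M$ to a function system containing $\mathds{1}_K$ and invoking a positive extension theorem before applying Theorem \ref{2.17} directly, or by decomposing the extended functional $\tilde\psi$ into its positive and negative parts in the spirit of Grothendieck's original treatment and controlling the total variation throughout. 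I would present the general complex-measure representation as the core of the proof and then isolate this positivity refinement as the step requiring the extra hypothesis on $\varphi$.
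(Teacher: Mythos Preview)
The paper does not supply its own proof of this theorem; it is stated with a reference to Grothendieck \cite{Gro} and Defant--Floret \cite{DF}. Your embedding-plus-Hahn--Banach-plus-Riesz argument is exactly the standard route found in those sources and is correct as far as the complex-measure version goes.

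There is, however, a genuine misconception in your final paragraph. You write that ``positivity of $\mu$ genuinely uses positivity of $\varphi$''. This is not so: the representing measure can be taken positive for \emph{every} $\varphi \in (E\otimes_\epsilon F)^*$, with no hypothesis on $\varphi$ whatsoever, and this is what the parenthetical ``(positive)'' in the statement (and the sentence following the theorem in the paper) asserts. The mechanism is not a positive-extension theorem for function systems containing $\mathds{1}_K$, but the balancedness of the unit balls $B_{E^*}$, $B_{F^*}$ under scalar rotation. For any $g = J(z) \in M$ and any unimodular $\lambda$ one has $g(\lambda x^*, y^*) = \lambda\, g(x^*, y^*)$, whence
\[
\sup_K \mathrm{Re}\, g \;=\; \sup_K |g| \;=\; \|g\|_\infty .
\]
This lets you dominate $\mathrm{Re}\,\psi$ on $\mathrm{Re}\,M \subset C_{\mathbb{R}}(K)$ by the sublinear functional $p(f) = \|\varphi\|\cdot \sup_K f$. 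The real Hahn--Banach extension dominated by $p$ is automatically a positive functional on $C_{\mathbb{R}}(K)$ (since $f \le 0$ forces the extension to be $\le 0$), and Theorem \ref{2.17} then yields a positive measure of total mass exactly $\|\varphi\|$. So the dichotomy you propose---a ``core'' signed-measure representation plus a positivity refinement under extra hypotheses---should be replaced by this single argument, which delivers positivity in full generality.
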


Formula (\ref{integral}) explains why bilinear forms in $(E\otimes_{\epsilon} F)^*\subset (E\otimes_{\pi} F)^* = \mathfrak{B}(E,F)$
will be called integral. It is worth pointing out that the measure $\mu$ can be taken to be positive. Consequently, there is a similarity between the form of integral  forms and separable states, cf Corollary \ref{3.1}. However, this similarity is somehow misleading as 
in (\ref{integral}) any order relation (positivity) on $E$ and $F$ is not taken into account.

\begin{rem}
To see, in details, the above similarity we need some preliminaries. Assume additionally that $E$ ($F$) are ordered Banach spaces, i.e. $E$ ($F$) contains a cone $E^+$ ($F^+$ respectively). Denote by $B^+_{E^*}$ ($B^+_{F^*}$) the set of all positive functionals $\varphi$ ($\psi$) on $E$ ($F$ respectively) such that $\varphi(f) \geq 0$ for $f \in E^+$ and $\|\varphi \| \leq 1$ (and analogously for $B^+_{F^*}$). It is easy to observe that if the measure $\mu$ employed in Theorem
\ref{5.8} is supported on $B^+_{E^*} \times B^+_{F^*}$ then (\ref{integral}) provides states of the same form as those given by Corollary \ref{3.1}. In other words, there is a possibility to get a family of ``separable'' states within the very general framework of Grothendieck approach to the theory of tensor products.
\end{rem}

The projective tensor product gains in interest if we realize that Rules $4.1-4.3$ and $4.5$ provide a nice example
of application of this tensor product.
Namely,
let $\mathfrak{A}$ stands for algebra of observables. We assume $\mathfrak A$ is either a $C^*$-algebra or (when speaking about normal states) a $W^*$-algebra. Obviously, always, it is a Banach space.
The set of all states $\mathfrak{S}$ is a subset of  $\mathfrak{A}^*$ (and $\mathfrak{A}^*$ is also a Banach space) while the collection of all density matrices gives a subset of $\mathfrak{A}_*$ ( $\mathfrak{A}_*$ is the predual of $\mathfrak{A}$, so it is also a Banach space).
The Born interpretation, cf Rule 4.3, implies
\begin{equation}
\label{32}
\mathfrak{A}\times \mathfrak{S} \ni \left\langle A, \varphi\right\rangle\rightarrow \varphi(A)\in \mathbb{C}
\end{equation}\noindent
where $\varphi(A)$ is interpreted as the expectation value of $A$ at the state $\varphi \in \mathfrak S $. 
Thus the Born's interpretation of Quantum Mechanics gives an element of $\mathfrak{B}(\mathfrak{A},\mathfrak{A}^*)$
since the  form $\hat{A}(\cdot, \cdot)$ on $\mathfrak{A} \times \mathfrak{S}$ defined by (\ref{32}) can be extended to the bilinear continuous form on $\mathfrak{A} \times \mathfrak{A}^*$ (or on $\mathfrak{A} \times \mathfrak{A}_*$, if one was interested in density matrices only).

Clearly, (\ref{32}) provides only one specific form on $\mathfrak{A} \times \mathfrak{A}^*$ (or on $\mathfrak{A} \times \mathfrak{A}_*$ respectively). However, it is crucial to note that Rule 4.3 combined with Rule 4.5 leads to the following recipe
\begin{equation}
\label{33}
\mathfrak{A}\times \mathfrak{S} \ni \left\langle A, \varphi\right\rangle\rightarrow \varphi(T_t(A))\in \mathbb{C}
\end{equation}
where $T_t \in \{T_t\}$ is a dynamical map. Obviously, in this way we are getting the large collection of bilinear, continuous forms.
 
 On the other hand, Theorem \ref{5.5} says
\begin{equation}
\mathfrak{B}(\mathfrak{A},\mathfrak{A}^*)\cong (\mathfrak{A}\otimes_{\pi}\mathfrak{A}^*)^* \label{grot}
\end{equation}
If the set of states $\mathfrak S$ consists of normal states only (so, for example, the collection of density matrices in Dirac's formalism of quantum mechanics is relevant) 
then one can rewrite (\ref{grot}) as
\begin{equation}
\mathfrak{B}(\mathfrak{A},\mathfrak{A}_*)\cong (\mathfrak{A}\otimes_{\pi}\mathfrak{A}_*)^*
\end{equation}
Then, using the another identification (again due to Grothendieck results, see \cite{Gro})
\begin{equation}
\label{40}
L(\mathfrak{A},\mathfrak{A})\cong (\mathfrak{A}\otimes_{\pi}\mathfrak{A}_*)^*,
\end{equation}
where $L(\mathfrak{A},\mathfrak{A})$ stands for the set of all bounded linear maps from $\mathfrak{A}$ to $\mathfrak{A}$,
it is not difficult to see that 
\begin{equation}
\label{41}
\mathfrak{B}(\mathfrak{A},\mathfrak{A}_*)\cong L(\mathfrak{A},\mathfrak{A})
\end{equation}

Therefore, (\ref{33}) gives not a large collection of continuous forms - \textbf{It gives the whole set of bilinear continuous forms} on $\mathfrak{A} \times \mathfrak{A}_*$, which are also positive, i.e. $\varphi(T_t(A)) \geq 0$ 
if $A\geq 0$ and $\varphi \geq 0$. In other words, \textit{Grothendieck theory of tensor products is perfectly compatible with the quantization rules of quantum mechanics}. One can say even more. Namely, the Grothendieck approach distinguishes also the scheme based on density matrices from that which is based on general (non-normal) states, for more details see the next section.

Here, we restrict ourselves to the remark that in particular, a subset of the family of positive maps 
$L^+(\mathfrak{A},\mathfrak{A}) \equiv \left\{T:\mathfrak{A}\rightarrow\mathfrak{A};\rm{linear}\ \rm{and}\ T(\mathfrak{A}^+)\subset\mathfrak{A}^+\right\}$ will play an extremely important role in an analysis of specific subsets of states, see Section 10.

We wish to end this section with a brief summary on the theory of tensor product of C*-algebras. We need these results for an implementation of Rule 4.4 which gives a starting point for a description of quantum composite systems.
Let $\mathfrak{A}_1$ and $\mathfrak{A}_2$ be C*-algebras with unit. Obviously,
 $\mathfrak{A}_1 \odot\mathfrak{A}_2$ can be constructed as before since $\mathfrak{A}_i$, $i=1,2$,  is also a Banach space).
As we wish to get a tensor product which is still a C*-algebra, we must define a C*-norm $\alpha$ on 
 $\mathfrak{A}_1 \odot\mathfrak{A}_2$ i.e. a norm such that $\alpha(x^*x)=(\alpha(x))^2$ and $\alpha(xy)\leq \alpha(x)\alpha(y)$.
Again, in general, see \cite{takesaki} there are plenty of such norms. As usually we will consider concrete $C^*$-algebras. Thus,  we will  use the operator norm. Consequently, the completion of the algebraic tensor product $\mathfrak{A}_1 \odot\mathfrak{A}_2$
with respect to the operator norm will be denoted by $\mathfrak{A}_1 \otimes\mathfrak{A}_2$.

In some cases (for so called nuclear C*-algebras) the tensor product is uniquely defined.
Nice examples of such algebras are provided by abelian algebras $\mathcal A$ (again, classical systems offer a great simplification) as well as  $M_n(\mathbb{C})$, where $n< \infty$. We remind that $M_n(\mathbb{C})$ can be considered as an approximation of physical situation, (cf the discussion following Proposition \ref{4.11}). Moreover,  they are the basic tool in many papers in Quantum Information Theory.
However, as it was mentioned, genuine quantum systems need infinite dimensional spaces. But, this means that tensor products used in the theory of composite systems as well as their properties depend on the proper choice of the norm.  

\section{Quantum composite systems}
In this section we want to describe the quantization of classical composite systems (cf Section 3). Analogously to description of classical composite systems we will assume only the logical independence. Thus, neither dynamical nor statistical independence will not be assumed. Therefore, a quantum composite system will be determined by the quadruple
\begin{equation}
\left(\mathfrak{A}\equiv \mathfrak{A}_1\otimes \mathfrak{A}_2, \mathfrak{S} \equiv \mathfrak {S}_{\mathfrak{A}}, \{T_t\}, \varphi_0\right)
\end{equation}
where $\mathfrak{A}$ (so also $\mathfrak{A}_i$) stands either for a $C^*$-algebra or a $W^*$-algebra. If $\mathfrak A$ is a $C^*$-algebra ($W^*$-algebra) then $\mathfrak S$ stands for the set of all states on the global system $\mathfrak A$ (all density matrices - normal states on global system respectively). $\{T_t\}$ stands for the set of dynamical maps while $\varphi_0$ is a distinguished state (playing the role of distinguished probability measure). Frequently, it is convenient to think that the algebra $\mathfrak{A}_i$ is associated with some particular region (in $\mathbb{R}^k$), $i=1,2$, cf examples given in the next section.

Let us consider the form of the second ingredient $\mathfrak S$ of the above definition (so the set of normalized, positive linear forms on $\mathfrak{A}_1\otimes \mathfrak{A}_2$).
The first attempt, following the classical case, would be to put  $\mathfrak{S}=\mathfrak{S}_1\otimes \mathfrak{S}_2$ or $\mathfrak{S}=\overline{conv}(\mathfrak{S}_1\otimes \mathfrak{S}_2)$.
Surprisingly these sets do not contain all states. Namely, one has (see Exercise 11.5.11 in \cite{KR})
\begin{exmp}
\label{exm6.1}
Let
$\mathfrak{A}_1=B(\mathcal{H})$ and $\mathfrak{A}_2=B(\mathcal{K})$ where $\mathcal{H}$ and $\mathcal{K}$ are 2-dimensional Hilbert spaces.
Consider the vector state $\omega_x(\cdot)=(x,\cdot\ x)$ with $x=\frac{1}{\sqrt{2}}(e_1\otimes f_1+e_2\otimes f_2)$ where $\left\{e_1,e_2\right\}$ and 
$\left\{f_1,f_2\right\}$  are orthonormal bases in $\mathcal{H}$ and $\mathcal{K}$ respectively.
Let $\rho$ be any state in the norm closure of the convex hull of product states, i.e. $\rho\in \overline{conv}(\mathfrak{S}_1\otimes \mathfrak{S}_2)$. Then, one can show that 
\begin{equation}
\left\|\omega_x-\rho\right\|\geq\frac{1}{4}.
\end{equation}
\end{exmp}

\begin{rem}
The reader should note that $\omega_x$ can always be approximated by a finite linear combination of simple tensors (cf Definition \ref{5.1} and formula (\ref{s1eq14})). However, here we wish to approximate $\omega_x$ by a convex combination of positive (normalized) functionals (cf Theorem \ref{2.14}) and this makes the difference.
\end{rem}

Consequently, contrary to the classical case (see Corollary \ref{3.1}) even in the simplest non-commutative case, the space of all states of $\mathfrak{A}_1\otimes \mathfrak{A}_2$
is not norm closure of $conv(\mathfrak{S}_1\otimes \mathfrak{S}_2)$. \textit{It means, in mathematical terms,
that for non-commutative case the weak$^*$ Riemann approximation property of a (classical) measure does not hold} (cf Remark \ref{2.15}(3)). Thus, we are in position to give:
\begin{defn}
\label{6.1}
\begin{itemize} 
\item \textit{$C^*$-algebra case.}\newline
Let $\mathfrak{A}_i$, $i=1,2$ be a $C^*$-algebra, $\mathfrak S$ the set of all states on $\mathfrak A\equiv \mathfrak{A}_1 \otimes \mathfrak{A}_1$, i.e. the set of all normalized positive forms on $\mathfrak A$. The subset $\overline{conv}(\mathfrak{S}_1\otimes \mathfrak{S}_2)$ in 
$\mathfrak S$ will be called the set of separable states and will be denoted by $\mathfrak{S}_{sep}$. The closure is taken with respect to the norm of $\mathfrak{A}^*$.
The subset $\mathfrak{S}\setminus \mathfrak{S}_{sep} \subset \mathfrak{S}$ is called the subset of entangled states.
\item \textit{$W^*$-algebra case, (cf. Theorem \ref{5.6}.)} \newline
Let $\mathfrak{M}_i$, $i=1,2$ be a $W^*$-algebra, $\mathfrak{M} = \mathfrak{M}_1 \otimes \mathfrak{M}_2$ be the spacial tensor product of $\mathfrak{M}_1$ and $\mathfrak{M}_2$, $\mathfrak S$ the set of all states on $\mathfrak M$, and $\mathfrak{S}^n$ the set of all normal states on $\mathfrak M$, i.e. the set of all normalized, weakly$^*$-continuous  positive forms on $\mathfrak M$ (equivalently, the set of all density matrices). The subset
$\overline{conv}^{\pi}(\mathfrak{S}_1^n\otimes \mathfrak{S}_2^n)$ in 
$\mathfrak S^n$ will be called the set of separable states and will be denoted by $\mathfrak{S}_{sep}^n$. \textit{The closure is taken with respect to the operator space projective norm on $\mathfrak{M}_{1,*} \odot \mathfrak{M}_{2,*}$.}
The subset $\mathfrak{S^n}\setminus \mathfrak{S}_{sep}^n \subset \mathfrak{S^n}$ is called the subset of normal entangled states.
\end{itemize}
\end{defn}
 
\begin{rem}
\begin{itemize}
\item As a separable state has the form of an arbitrary classical state (cf Corollary \ref{3.1}), it is naturally to adopt the convention that $\mathfrak{S}_{sep}$ ($\mathfrak{S}_{sep}^n$) contains only classical correlations. On the other hand, the set of entangled states is the set where the quantum (so extra) correlations can occur.
\item The difference between $C^*$-algebra case and $W^*$-algebra case stems from the Grothendieck's theory of tensor products. In particular, see Theorem \ref{5.6} and note how naturally the projective tensor product is appearing in Definition \ref{6.1}. Furthermore, to appreciate the use of
the topology defined by the operator space projective norm, we recall that the set of all normal states on $\mathfrak M$ is weakly-$*$ dense in the set of all states on $\mathfrak M$; see Example 4.1.35(2) in \cite{BR}.
\item The principal significance of normality of a state, from physical point of view, follows from existence of the number operator, see eg Theorem 5.2.14 in \cite{BR} and remarks given prior to this theorem.
In other words, a normal state corresponds to such situation when the number of particles has well defined sense.
\end{itemize}
\end{rem}

Having defined separable, and entangled states we turn to the question why other arguments given prior to Corollary \ref{3.1} are not working in the non-commutative setting. Our first observation is the following:

\begin{fc}
\label{6.4}
\begin{enumerate}
\item \textit{classical case.} \newline
Let $\delta_a$ be a Dirac's measure on a product measure space, i.e.
$\delta_a$ is given on $\Gamma_1 \times \Gamma_2$. Note that the marginal of the point measure $\delta_a$ gives another point measure, i.e. $\delta_a|_{\Gamma_1}=\delta_{a_1}$. Here we put
$a\in \Gamma_1 \times \Gamma_2$, $a=\left(a_1,a_2\right)$.  The same in ``physical terms'' reads: \textit{ a reduction of a pure state is again a pure state}.
\item \textit{non-commutative case.}\newline
Let $\mathcal{H}$ and $\mathcal{K}$ are finite dimensional Hilbert spaces. Without loss of generality we can assume that dim$\mathcal{H}$=dim$\mathcal{K} = n$. 
Let $\omega_x( \cdot) = (x,\cdot\ x)$ be a state on  $B(\mathcal{H})\otimes B(\mathcal{K})$ where $x$ is assumed to be of the form
\begin{equation}
x=\frac{1}{\sqrt{n}}\left(\sum_i e_i\otimes f_i\right)
\end{equation}
Here $\{e_i\}$ and $\{f_i\}$ are basis in $\mathcal{H}$ and $\mathcal{K}$ respectively.
Then, we have
\begin{equation}
\begin{split}
\omega_x\left(A\otimes \mathds{1}\right)&=\frac{1}{n}\left(\sum_i e_i\otimes f_i, A\otimes \mathds{1} \sum_j e_j\otimes f_j\right)\\
&=\frac{1}{n}\sum_{i,j}\left(e_i,Ae_j\right)\left(f_i,f_j\right)= \Tr_{\mathcal{H}}\frac{1}{n}\mathds{1}A \equiv \Tr_{\mathcal{H}}\varrho_0 A,
\end{split}
\end{equation}
where $\varrho_0 = \frac{1}{n}\mathds{1}$ is
``very non pure'' state. In other words, \textit{the non-commutative counterpart of the marginal of a point measure (pure state) does not need to be again a point measure (pure state)}. Consequently, the crucial ingredient of the discussion leading to Corollary \ref{3.1} is not valid in non-commutative case.
\end{enumerate}
\end{fc}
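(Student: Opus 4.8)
The plan is to establish both parts by direct computation, since Fact \ref{6.4} amounts to evaluating a marginal in each of the two settings and then reading off the qualitative nature of the result.

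For part (1), I would work at the level of functionals, as licensed by Theorem \ref{2.17}. The marginal of a measure on $\Gamma_1 \times \Gamma_2$ is defined (cf. the marginal measures $\mu_1$, $\mu_2$ introduced at the end of Section 2) by $\delta_a|_{\Gamma_1}(f) = \delta_a(f \otimes \mathds{1}_{\Gamma_2})$ for $f \in C(\Gamma_1)$. Writing $a = (a_1, a_2)$ and using the defining property $\delta_a(g) = g(a)$, the right-hand side collapses to $(f \otimes \mathds{1}_{\Gamma_2})(a_1, a_2) = f(a_1) = \delta_{a_1}(f)$, whence $\delta_a|_{\Gamma_1} = \delta_{a_1}$. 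This is the precise form of the slogan that a reduction of a pure state is again pure, and it requires nothing beyond unwinding the definitions.

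For part (2), I would compute the reduced functional $A \mapsto \omega_x(A \otimes \mathds{1})$ on the vector $x = \frac{1}{\sqrt{n}}\sum_i e_i \otimes f_i$ by expanding the inner product bilinearly, obtaining
\begin{equation}
\omega_x(A \otimes \mathds{1}) = \frac{1}{n}\sum_{i,j}(e_i, A e_j)(f_i, f_j).
\end{equation}
The decisive step is to invoke orthonormality of $\{f_i\}$, so that $(f_i, f_j) = \delta_{ij}$ collapses the double sum to $\frac{1}{n}\sum_i (e_i, A e_i) = \frac{1}{n}\Tr_{\mathcal{H}} A$. Recognizing this as $\Tr_{\mathcal{H}} \varrho_0 A$ with $\varrho_0 = \frac{1}{n}\mathds{1}$ identifies the reduced state as the maximally mixed one, which for $n > 1$ is visibly far from pure.

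There is no genuine computational obstacle here; the work is entirely routine. The subtlety is interpretive. What makes the reduction mixed is precisely the orthonormality of the $\{f_i\}$, which annihilates the off-diagonal terms $(f_i, f_j)$ for $i \neq j$; had $x$ been a product vector $e \otimes f$, the same computation would have returned the pure vector state $A \mapsto (e, A e)$. Thus the loss of purity is a direct signature of the entanglement encoded in $x$. The real payoff is the contrast with part (1): the implication ``point measure $\mapsto$ point measure'' that drives the classical derivation of Corollary \ref{3.1} has no non-commutative analogue, and this is exactly why the separability argument of Section 3 cannot be transported to the quantum setting.
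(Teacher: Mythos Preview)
Your proposal is correct and follows essentially the same approach as the paper: the computation you give for part (2) is line-for-line the one displayed in the statement of Fact \ref{6.4}, and your treatment of part (1) simply supplies the one-line verification that the paper leaves implicit. The interpretive remarks you add about orthonormality of $\{f_i\}$ killing the off-diagonal terms, and the contrast with a product vector, are accurate and helpful context but go slightly beyond what the paper itself records.
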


The second difficulty follows from the geometrical characterization of the set of states. Namely,
in geometrical description of a convex set in finite dimensional spaces one can distinguish two kinds of convex closed sets: 
simplexes and non-simplexes. Namely, let $K$ be a convex compact set. 
From Krein-Milman theorem, the set $K$ has extreme points $\{k_i\}$ 
and $K=\overline{conv}\left\{k_i\right\}$. Thus $K$ 
is a convex hull of its extreme points $\{k_i\}$. If any point of $K$ can be given uniquely as convex combination of extreme points
then $K$ is called a simplex. For example, a triangle is a simplex, but a circle is not a simplex. This gives an intuition. However,
here, considering infinite dimensional algebras and being interested in certain subsets of positive forms on these algebras, the definition of Choquet is more suitable.
Namely, let $K$ be a base of a convex cone $C$ with apex at the origin. The cone $C$ gives rise to the order $\leq$ ($a\leq b$ if and only if $b - a \in C$).
$K$ is said to be a simplex if $C$ equipped with the order $\leq$ is a lattice, see \cite{choquet} or \cite{BR} for details. (Lattice is a partially ordered set in which every two elements have a supremum and an infimum).
The importance of this notion follows from the well known result saying that in the classical case, the set of all states forms a simplex while this is not true for the quantum case. More precisely, see Example 4.2.6 in \cite{BR}
\begin{prop}
Let $\mathfrak{A}$ be a C*-algebra. Then the following conditions are equivalent
\begin{enumerate}
\item The state space $\mathfrak{S}_{\mathfrak{A}}$ is a simplex.
\item $\mathfrak{A}$ is abelian algebra.
\item Positive elements $\mathfrak{A}^+$ of $\mathfrak{A}$ form a lattice.
\end{enumerate}
\end{prop}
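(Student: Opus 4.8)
The plan is to establish the cycle of implications $(2)\Rightarrow(3)\Rightarrow(1)\Rightarrow(2)$, since the equivalence of the three conditions is most economically handled as a chain rather than three separate biconditionals. The two ``easy'' arrows will be $(2)\Rightarrow(3)$ and the lattice-theoretic bookkeeping, while the genuine content—the connection between the abstract order structure and the convex-geometric notion of a simplex—sits in the passage between $(3)$ and $(1)$.

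First I would prove $(2)\Rightarrow(3)$. If $\mathfrak{A}$ is abelian, then by the Gelfand--Naimark theorem (invoked already in Example 4.6) we may identify $\mathfrak{A}$ with $C(\Gamma)$ for a compact Hausdorff $\Gamma$, and the order on self-adjoint elements is the pointwise order on real-valued continuous functions. The supremum and infimum of two such functions are given pointwise by $\max$ and $\min$; the only subtlety is that $\max(f,g)$ and $\min(f,g)$ are again continuous, which is immediate. Hence $\mathfrak{A}^+$ (more precisely the self-adjoint part ordered by $\mathfrak{A}^+$) is a lattice. The converse direction $(3)\Rightarrow(2)$, which I would fold into the chain, is the substantive half: here one argues contrapositively. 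If $\mathfrak{A}$ is non-abelian it contains a copy of $M_2(\mathbb{C})$ (or at least two non-commuting self-adjoint elements generating such a copy), and inside $M_2(\mathbb{C})$ one exhibits two positive elements—typically two non-commuting rank-one projections $p,q$—whose least upper bound in the order fails to exist. The standard computation shows that a putative supremum $p\vee q$ would have to dominate both projections yet be minimal among such majorants, and for non-commuting projections no such element exists; this breaks the lattice property.

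Next I would address $(1)\Leftrightarrow(3)$ via Choquet's definition adopted in the text: the state space $\mathfrak{S}_{\mathfrak{A}}$ is a base for the cone of positive functionals $(\mathfrak{A}^*)^+$, and by definition $\mathfrak{S}_{\mathfrak{A}}$ is a simplex precisely when this dual cone is a lattice under its own order. The point to establish is the duality between the lattice structure on $\mathfrak{A}^+$ and the lattice structure on $(\mathfrak{A}^*)^+$. The clean route is to cite the structural fact (available in \cite{BR} and \cite{choquet}) that for a $C^*$-algebra the self-adjoint part $\mathfrak{A}_{sa}$ is an order-unit space with predual/dual an $L$-space picture, and that an order-unit space is a lattice if and only if its dual base is a Choquet simplex. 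I would therefore reduce $(1)\Leftrightarrow(3)$ to this general order-theoretic equivalence, verifying only that the hypotheses (the positive cone is proper, generating, and the unit $\mathds{1}$ is an order unit) hold for any $C^*$-algebra with unit.

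The main obstacle will be the non-abelian direction $(3)\Rightarrow(2)$, i.e.\ producing explicitly two positive elements of a non-commutative $C^*$-algebra without a supremum. The delicate part is not the $M_2(\mathbb{C})$ computation itself but the reduction step: one must argue that non-commutativity of the whole algebra forces the existence of two non-commuting self-adjoint elements whose spectral projections already violate the lattice condition, and that this local failure in a subalgebra genuinely propagates to a failure in $\mathfrak{A}$ (the supremum in a subalgebra and in the ambient algebra need not coincide, so some care is required). I expect to handle this by choosing the projections so that any candidate upper bound in $\mathfrak{A}$ restricts to an upper bound in the $M_2$-block and then deriving a contradiction there. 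Given that the paper explicitly points to Example 4.2.6 in \cite{BR}, I would lean on that reference for the fine details of this step rather than reconstructing the full argument.
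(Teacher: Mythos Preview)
The paper does not actually prove this proposition; it merely refers the reader to Example~4.2.6 in \cite{BR} and records the statement. Your proposal therefore goes well beyond what the paper itself supplies, and the overall architecture you sketch---$(2)\Rightarrow(3)$ via Gelfand--Naimark, $(1)\Leftrightarrow(3)$ through the order-unit/base-norm duality underlying Choquet's definition of a simplex, and $(3)\Rightarrow(2)$ by contraposition---is the route taken in the cited reference, so in spirit you are aligned with the paper's (outsourced) treatment.

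One genuine issue in your $(3)\Rightarrow(2)$ step: the assertion that a non-abelian $C^*$-algebra contains a copy of $M_2(\mathbb{C})$, or that two non-commuting self-adjoint elements generate such a copy, is false in general. A standard counterexample is the algebra of continuous $M_2(\mathbb{C})$-valued functions on $[0,1]$ that are diagonal at $0$: it is non-abelian, yet since a nonzero projection cannot be connected continuously to zero inside the projections, no system of matrix units---hence no copy of $M_2$---can live inside it. The classical argument (Sherman, with the sharper anti-lattice form due to Kadison) avoids this reduction entirely: one works directly with two non-commuting positive elements $a,b\in\mathfrak{A}$ and shows by functional-calculus manipulations that no least upper bound $a\vee b$ exists in the self-adjoint part of $\mathfrak{A}$. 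You correctly flagged the subalgebra-versus-ambient-algebra difficulty as the delicate point, and it is precisely there that the $M_2$ shortcut breaks down; since you ultimately defer to \cite{BR} for the fine details, the gap is more a misstatement of the mechanism than a fatal error in strategy.
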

Therefore in quantum case the set of states is not a simplex (contrary to the classical case). Consequently,
in quantum case, all possible decompositions of a given state should be taken into account. More details on decomposition theory will be given in the next section.

We wish to close this section with a basic package of terminology used in open systems  and quantum information theory (so also in an analysis of quantum composite systems).
A linear map $\alpha : B(\mathcal H) \to B(\mathcal K)$ is called $k$-positive if a map $id_{M_k} \otimes \alpha : M_k \otimes B(\mathcal H) \to M_k \otimes B(\mathcal K)$ is positive, where $M_k \equiv M_k(\mathbb{C})$ denotes the algebra of $k\times k$ matrices with complex entries. A map $\alpha$ is called completely positive if it is $k$-positive for any $k$. A completely positive map $\alpha$ will be shortly called a CP map.
A positive map $\alpha:B(\mathcal H) \longrightarrow B(\mathcal K)$ is called
\textit{decomposable}
if
there are completely positive maps $\alpha_1,\alpha_2: B(\mathcal H)\longrightarrow B(\mathcal K)$
such that $\alpha=\alpha_1+\alpha_2\circ\tau_{\mathcal H}$, where $\tau_{\mathcal H}$ stands for a transposition map on $B(\mathcal H)$. Let $\mathcal P$, $\mathcal{P}_c$ and $\mathcal{P}_d$
denote the set of all positive, completely positive
and decomposable maps from $B(\mathcal H)$ to $B(\mathcal K)$, respectively. Note that
$$ 
\mathcal{P}_c \subset \mathcal{P}_d \subset \mathcal P
$$ 
(see also \cite{Ch1}- \cite{Ch2}, \cite{MM}).

Finally, let us define the family of \textit{PPT} (transposable) states on
$B(\mathcal H) \otimes B(\mathcal K)$
\begin{equation}
\mathfrak{S}_{PPT}=\{\varphi\in\mathfrak{S}: \varphi\circ(id_{B(\mathcal H)} \otimes \tau_{\mathcal K})\in \mathfrak{S}\}.
\end{equation}
where, as before, $\tau_{\mathcal K}$ stands for the transposition map, now defined on $B(\mathcal K)$.
Note that due to the positivity of the transposition $\tau_{\mathcal K}$ every
separable state $\varphi$ is transposable, so
$$ 
\mathfrak{S}_{sep}\subset\mathfrak{S}_{PPT}\subset\mathfrak{S}.
$$ 

\section{Decomposition theory}
This section provides a brief exposition of the decomposition theory; for proofs and further details we refer the reader to  Chapter IV in \cite{BR}, or to Chapter 6 in \cite{choquet} or to \cite{Alfsen}.
In the preceding section we note that a state of a quantum system can be decomposed in many ways. We wish to study this question in details.
The general idea of decomposition theory, applied to a convex compact subset $K$ of states, $K\subseteq \mathfrak S$, is to express 
the complex structure of $K$ as a sum of more simpler compounds. To this end, we wish to find a measure 
$\mu$ which is supported by extremal points $Ext({K})$ of $K$ and which decomposes a state $\omega \in {K}$ 
in the form 

\begin{equation}
\label{47}
\omega(A)=\int_{{K}}\omega'(A)d\mu (\omega')
\end{equation}\noindent
where $A$ is an observable. To comment the above strategy, firstly, we recall that for the classical case (see Section 3) Dirac's measures (so pure states) played an important role
in arguments leading to Corollary \ref{3.1}. This explains why we will be interested in a decomposition of a quantum state into convex combination of pure states. Thus, 
the claim that the measure $\mu$ should be supported by the subset $Ext(K)$ should be clear.
Secondly, to understand the framework of the strategy we note (cf Rules 1-5) that with a system we associate the pair $(\mathfrak{A}, \mathfrak S)$ consisting of the specific ($C^*$)-algebra
$\mathfrak A$ and the set of states $\mathfrak S$. But $(\mathfrak{A}, \mathfrak{S})\subset(\mathfrak{A}, \mathfrak{A}^*)$ and $(\mathfrak{A}, \mathfrak{A}^*)$ forms a dual pair. Hence $\omega'(A)$
can be considered as a function $\mathfrak{S} \supseteq K\ni \omega' \mapsto \hat{A}(\omega') \equiv \omega'(A)$. Moreover, $\hat{A}(\cdot)$ can be considered as an affine function on $K$. Hence, (\ref{47}) can be rewritten as

\begin{equation}
\label{48}
\hat{A}(\omega)=\int_{{K}}\hat{A}(\omega')d\mu (\omega')
\end{equation}
This indicates that, in fact, we are studying the barycentric decompositions, where the barycenter of a measure $\mu$ is defined as
\begin{defn}
\label{7.1}
Let $K$ be a compact convex subspace in locally compact space $X$ and let $\mu$ be a positive non-zero 
measure on $K$. We say that
\begin{equation}
b({\mu})=\mu\left(K\right)^{-1}\int_K xd\mu(x)
\end{equation}
is a barycenter of a measure $\mu$, where the integral is understood in the weak sense.
\end{defn}\noindent
We will need
\begin{defn}
Let a $C^*$-algebra $\mathfrak A$ be given. A state
$\omega \in \mathfrak{S}_{\mathfrak A}$ is a
 pure state if and only if positive linear functionals $\varphi$ majorized by $\omega$ (that is $\omega - \varphi$ is still positive)
are of the form $\lambda\omega$ where $\lambda\in \left(0,1\right]$.
\end{defn}\noindent
and
\begin{prop}
Let $\mathfrak{A}$ be a C*-algebra (not necessary with unit) and let $B_{\mathfrak{A}}$ denote positive linear 
functionals on  $\mathfrak{A}$ with norm less than or equal to one. Then $B_{\mathfrak{A}}$ is a convex, weakly 
$*$-compact subset of the dual  $\mathfrak{A}^*$ whose extremal points are pure states. 
\end{prop}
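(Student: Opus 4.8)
The plan is to verify the three claims in turn: convexity, weak$^*$-compactness, and the characterization of extreme points, the last being the only nontrivial part.

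I would dispose of convexity and compactness first. For convexity, given $\varphi,\psi\in B_{\mathfrak A}$ and $t\in[0,1]$, the functional $t\varphi+(1-t)\psi$ is again positive and has norm at most $t\|\varphi\|+(1-t)\|\psi\|\le 1$, so it lies in $B_{\mathfrak A}$. For compactness I would note that $B_{\mathfrak A}$ sits inside the closed unit ball of $\mathfrak A^*$, which is weak$^*$-compact by Banach--Alaoglu; it therefore suffices to check that $B_{\mathfrak A}$ is weak$^*$-closed. Positivity is a weak$^*$-closed condition, since for each fixed $a\in\mathfrak A$ the evaluation $\varphi\mapsto\varphi(a^*a)$ is weak$^*$-continuous, and $B_{\mathfrak A}$ is the intersection of the half-spaces $\{\varphi:\varphi(a^*a)\ge 0\}$ with the (weak$^*$-closed) unit ball.

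The substantive claim is that every nonzero extreme point $\omega$ of $B_{\mathfrak A}$ is a pure state, and I would argue this in two stages. First, that $\|\omega\|=1$: if instead $0<\|\omega\|<1$, then writing $\omega=\|\omega\|\,(\omega/\|\omega\|)+(1-\|\omega\|)\,0$ exhibits $\omega$ as a proper convex combination of the distinct points $\omega/\|\omega\|$ (a norm-one positive functional) and $0$, both in $B_{\mathfrak A}$, contradicting extremality; hence $\omega$ is a state. Second, that $\omega$ is pure in the sense of the preceding definition: suppose $0\le\varphi\le\omega$ with $\varphi$ positive. Writing $\omega=\varphi+(\omega-\varphi)$ with both summands positive, and setting $t=\|\varphi\|$, I would invoke the additivity of the norm on positive functionals, $\|\varphi\|+\|\omega-\varphi\|=\|\omega\|=1$, to conclude $\|\omega-\varphi\|=1-t$. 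If $t\in(0,1)$ the normalized functionals $\varphi/t$ and $(\omega-\varphi)/(1-t)$ are states, and $\omega=t\,(\varphi/t)+(1-t)\,\bigl((\omega-\varphi)/(1-t)\bigr)$ is a proper convex combination, so extremality forces $\varphi/t=\omega$, i.e.\ $\varphi=t\omega$. The degenerate values $t=0$ and $t=1$ give $\varphi=0$ and $\varphi=\omega$ respectively, so in every case $\varphi$ is a scalar multiple of $\omega$, which is exactly the purity condition.

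The step I expect to be the main obstacle is the additivity $\|\varphi\|+\|\psi\|=\|\varphi+\psi\|$ for positive functionals, on which the entire decomposition argument rests. In the unital case this is immediate, since $\|\varphi\|=\varphi(\mathds{1})$ for positive $\varphi$, but because the proposition allows $\mathfrak A$ to be non-unital I would establish it through an approximate identity $\{e_\lambda\}$, using the standard fact that $\|\varphi\|=\lim_\lambda\varphi(e_\lambda)$ for positive $\varphi$ (together with the monotonicity $\|\varphi\|\le\|\omega\|$ whenever $0\le\varphi\le\omega$, obtained by comparing the values $\varphi(e_\lambda)\le\omega(e_\lambda)$). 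Finally I would remark that, strictly speaking, $0$ is itself an extreme point of $B_{\mathfrak A}$, so the full extreme boundary is $\{0\}$ together with the pure states; the statement as phrased identifies the nonzero extreme points.
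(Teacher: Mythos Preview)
The paper does not actually prove this proposition: Section~7 is explicitly a summary ``without proofs,'' and the result is quoted from \cite{BR} (Theorem~2.3.15 there). Your argument is correct and is precisely the standard one found in that reference: Banach--Alaoglu plus weak$^*$-closedness for compactness, the rescaling trick to force $\|\omega\|=1$, and then the convex decomposition $\omega=t(\varphi/t)+(1-t)((\omega-\varphi)/(1-t))$ together with norm additivity on the positive cone to obtain purity. Your handling of the non-unital case via an approximate identity for the identity $\|\varphi\|=\lim_\lambda\varphi(e_\lambda)$ is exactly what is needed, and your closing remark is on point: the careful statement in \cite{BR} reads ``whose extremal points are $0$ and the pure states,'' so the paper's formulation has silently dropped the zero functional.
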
\noindent
and also
\begin{prop}
The set of states $\mathfrak{S}_{\mathfrak{A}}$ is convex but it is 
weakly $*$-compact if and only if $ \mathfrak{A}$ has a unit. In the latter case the extreme 
points of $\mathfrak{S}_{\mathfrak{A}}$ are pure states. Thus, it follows from Krein-Milman theorem, that $\mathfrak{S}_{\mathfrak{A}}=\overline{conv}(\mathfrak{S}_{\mathfrak{A}}^P)$ where $\mathfrak{S}_{\mathfrak{A}}^p$
stands for the set of all pure states in $\mathfrak{S}_{\mathfrak{A}}$.
\end{prop}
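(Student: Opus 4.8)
The plan is to establish the statement in four stages---convexity, the compactness dichotomy, the identification of extreme points, and the final Krein--Milman conclusion---leaning throughout on the preceding proposition, which already supplies the weak-$*$ compactness of $B_{\mathfrak{A}}$ (the positive functionals of norm at most one) and the fact that its extreme points are the pure states. Convexity is immediate: for $\omega_1,\omega_2\in\mathfrak{S}_{\mathfrak{A}}$ and $t\in[0,1]$ the functional $t\omega_1+(1-t)\omega_2$ is again positive, and since the norm of positive functionals is additive, its norm equals $t+(1-t)=1$; in the unital case one sees this directly from $\omega(\mathds{1})=t\omega_1(\mathds{1})+(1-t)\omega_2(\mathds{1})=1$.

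For the compactness equivalence I would argue both directions from the weak-$*$ compactness of $B_{\mathfrak{A}}$. If $\mathfrak{A}$ has a unit, then a positive functional attains its norm at $\mathds{1}$, so $\mathfrak{S}_{\mathfrak{A}}=\{\omega\in B_{\mathfrak{A}}:\omega(\mathds{1})=1\}$; as $\omega\mapsto\omega(\mathds{1})$ is weak-$*$ continuous, this is a weak-$*$ closed subset of the compact set $B_{\mathfrak{A}}$, hence compact. The converse I would prove contrapositively: if $\mathfrak{A}$ has no unit, I would exhibit $0$ in the weak-$*$ closure of $\mathfrak{S}_{\mathfrak{A}}$, while $0\notin\mathfrak{S}_{\mathfrak{A}}$, so the set fails to be closed and a fortiori fails to be compact. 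This direction is the main obstacle, since it requires producing, for every finite family $a_1,\dots,a_n\in\mathfrak{A}$ and $\epsilon>0$, a genuine state whose values on all the $a_i$ are smaller than $\epsilon$. The tool is an approximate identity $(e_\lambda)$: in the unitization $\tilde{\mathfrak{A}}$ the functional $a+\lambda\mathds{1}\mapsto\lambda$ is a state restricting to $0$ on $\mathfrak{A}$, and it is realized as a weak-$*$ limit of honest states of $\mathfrak{A}$ constructed from $(e_\lambda)$.

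The third assertion follows by a face argument in the unital case. I would show that $\mathfrak{S}_{\mathfrak{A}}$ is a face of $B_{\mathfrak{A}}$: if $\omega\in\mathfrak{S}_{\mathfrak{A}}$ decomposes as $\omega=t\psi_1+(1-t)\psi_2$ with $\psi_1,\psi_2\in B_{\mathfrak{A}}$ and $t\in(0,1)$, then evaluating at $\mathds{1}$ gives $t\psi_1(\mathds{1})+(1-t)\psi_2(\mathds{1})=1$ with both $\psi_i(\mathds{1})=\|\psi_i\|\leq1$, forcing $\psi_1(\mathds{1})=\psi_2(\mathds{1})=1$, i.e. $\psi_1,\psi_2\in\mathfrak{S}_{\mathfrak{A}}$. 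Since the extreme points of a face are exactly the extreme points of the ambient set that happen to lie in it, the extreme points of $\mathfrak{S}_{\mathfrak{A}}$ are precisely the extreme points of $B_{\mathfrak{A}}$ contained in $\mathfrak{S}_{\mathfrak{A}}$; by the preceding proposition these are the pure states (the functional $0$, the only other extreme point of $B_{\mathfrak{A}}$, being excluded because $0\notin\mathfrak{S}_{\mathfrak{A}}$).

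Finally, having established that in the unital case $\mathfrak{S}_{\mathfrak{A}}$ is convex, weak-$*$ compact, and has extreme-point set $\mathfrak{S}_{\mathfrak{A}}^P$, I would invoke the Krein--Milman theorem to conclude $\mathfrak{S}_{\mathfrak{A}}=\overline{conv}(\mathfrak{S}_{\mathfrak{A}}^P)$, with the closure understood in the weak-$*$ topology. I expect the only genuinely nontrivial point to be the non-unital half of the compactness statement; every other step is a short deduction from the preceding proposition together with the elementary properties of positive functionals.
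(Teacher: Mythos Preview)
The paper does not actually supply a proof of this proposition: Section~7 is explicitly a ``brief exposition of the decomposition theory; for proofs and further details we refer the reader to Chapter~IV in \cite{BR}'' and the proposition is stated bare, as a standard fact imported from Bratteli--Robinson. So there is nothing to compare against except the textbook argument itself, and your outline is precisely that argument: closedness of $\{\omega\in B_{\mathfrak{A}}:\omega(\mathds{1})=1\}$ inside the compact ball for the unital direction, the approximate-identity construction showing $0\in\overline{\mathfrak{S}_{\mathfrak{A}}}^{w^*}$ for the non-unital direction, the face argument to transfer extreme points from $B_{\mathfrak{A}}$ to $\mathfrak{S}_{\mathfrak{A}}$, and Krein--Milman to finish. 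This is correct and is exactly what one finds in the references the paper cites.
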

It is of interest to note, see Corollary I.2.4 in \cite{Alfsen}
\begin{prop}
A point $\omega$ in a compact convex set $K$ is extreme if and only if $\delta_{\omega}$ (so the point measure) is the only measure in $M_1(K)$ with barycenter $\omega$, where $M_1(K)$ denotes the set of probability measures on $K$.
\end{prop}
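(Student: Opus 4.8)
The plan is to prove both implications, the substantive one being that extremality of $\omega$ forces $\delta_\omega$ to be the only representing measure. Throughout I would rely on two standard properties of the barycenter map on $M_1(K)$: first, that every $\mu\in M_1(K)$ admits a barycenter $b(\mu)\in K$ (this is exactly the content of Definition \ref{7.1} together with the compactness of $K$); and second, that $b(\cdot)$ is affine, i.e. $b(\lambda\mu_1+(1-\lambda)\mu_2)=\lambda\, b(\mu_1)+(1-\lambda)\, b(\mu_2)$, which follows at once from the defining relation $f(b(\mu))=\int_K f\, d\mu$ (valid for every continuous linear functional $f$) together with the fact that such functionals separate the points of $K$.

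For the easy direction I would argue by contraposition. If $\omega$ is not extreme, then $\omega=\tfrac12(\omega_1+\omega_2)$ with $\omega_1\neq\omega_2$ in $K$, and the measure $\mu:=\tfrac12(\delta_{\omega_1}+\delta_{\omega_2})\in M_1(K)$ has barycenter $\tfrac12(\omega_1+\omega_2)=\omega$, yet $\mu\neq\delta_\omega$ since at least one of $\omega_1,\omega_2$ differs from $\omega$. This already shows that uniqueness of the representing measure forces $\omega$ to be extreme.

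For the converse, assume $\omega$ is extreme and suppose, toward a contradiction, that some $\mu\in M_1(K)$ with $b(\mu)=\omega$ satisfies $\mu\neq\delta_\omega$. Then $\operatorname{supp}\mu\neq\{\omega\}$, so there is a point $p\in\operatorname{supp}\mu$ with $p\neq\omega$, and by Hahn--Banach I would choose a continuous linear functional $f$ with $f(p)>f(\omega)$. Setting $c:=f(\omega)=\int_K f\, d\mu$ and $E:=\{x\in K: f(x)>c\}$, continuity of $f$ together with $p\in\operatorname{supp}\mu$ yields $\mu(E)>0$, while $\mu(E)=1$ would force $\int_K f\, d\mu>c$, a contradiction; hence $\lambda:=\mu(E)\in(0,1)$. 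I would then split $\mu=\lambda\mu_1+(1-\lambda)\mu_2$ into its normalized restrictions to $E$ and to $E^c$, and set $\omega_i:=b(\mu_i)$. Affineness gives $\omega=\lambda\omega_1+(1-\lambda)\omega_2$, while the estimates $f(\omega_1)=\tfrac1\lambda\int_E f\, d\mu>c$ and $f(\omega_2)=\tfrac1{1-\lambda}\int_{E^c} f\, d\mu\le c$ show $\omega_1\neq\omega_2$, contradicting the extremality of $\omega$ and finishing the proof.

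I would expect the main obstacle to be the middle step of the converse: manufacturing, from an arbitrary non-Dirac representing measure, a genuine convex decomposition $\omega=\lambda\omega_1+(1-\lambda)\omega_2$ with $\omega_1\neq\omega_2$. The splitting of $\mu$ into restrictions is routine, but one must verify carefully that the half-space $E$ captures strictly positive but not full mass (so that both $\mu_1,\mu_2$ are bona fide probability measures) and, crucially, that the separating functional $f$ distinguishes the two resulting barycenters; this is precisely where $p\in\operatorname{supp}\mu$ and the strict inequality $f(p)>f(\omega)$ are used. The existence and affineness of the barycenter map, though indispensable, are standard facts of the theory and I would simply invoke them.
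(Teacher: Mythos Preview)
Your proof is correct. The paper, however, does not supply its own proof of this proposition: it is stated as a quoted result with a bare reference to Alfsen (Corollary~I.2.4), in keeping with the paper's explicit policy that Section~7 ``summarizes without proofs the relevant material on the decomposition theory.'' So there is nothing in the paper to compare against beyond the citation.

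For what it is worth, your argument is essentially the standard one found in Alfsen and in Choquet: the forward direction by exhibiting $\tfrac12(\delta_{\omega_1}+\delta_{\omega_2})$, and the converse by splitting a non-Dirac representing measure over a half-space determined by a separating functional, then using affineness of the barycenter map to extract a nontrivial convex decomposition of $\omega$. The only place one might tighten the exposition is the verification that $\mu(E)<1$: you correctly note that $\mu(E)=1$ would give $\int_K f\,d\mu>c$, and this strict inequality holds because $f-c>0$ on $E$ and a strictly positive measurable function cannot integrate to zero against a nonzero positive measure. Everything else is clean.
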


To have compactness of the set of states, in the sequel, we will assume that the specific algebra associated with a system possesses the unit. It should be noted that for algebras appearing in physical models, this condition is satisfied.

 Having clarified the role of pure states, so also the choice of $Ext(K)$ in the above description of decomposition strategy,  let us turn to the measure theoretical 
aspects of the description of $\mathfrak{S}_{\mathfrak{A}}$.
At the first guess it appear reasonable that measurability of $Ext(\mathfrak{S}_{\mathfrak{A}})$ would 
be sufficient to ensure that one can find reasonable measures such that $\mu\left(\mathfrak{S}_{\mathfrak{A}}^p\right)=1$ i.e. the measure 
$\mu$ is supported by $Ext(\mathfrak{S}_{\mathfrak{A}})$. However, surprisingly, this is not the case. 
To overcome this problem, an additional condition as well as the concept of pseudosupported measure must be introduced. But, this involves the Baire structure.

Let us clarify this point.
Let $X$ be a locally compact convex set (for example, the space of states). There is a well defined Borel structure (cf \cite{H}) and Baire 
structure on $X$. Namely, Borel sets of a locally compact Hausdorff space $X$ are elements of $\sigma$-algebra generated by all compact subsets of $X$. Thus
Borel sets of $X$ contain all countable unions of closed sets, ${F}_{\sigma}$,
 as well as all countable intersections of open sets, ${G}_{\sigma}$. Note that this result follows on axioms of $\sigma$-algebra.

Therefore, the following definition of Baire structure is natural:
\begin{defn}
Baire sets of $X$ are defined as elements of $\sigma$-algebra $\mathcal F$ generated by compact (${G}_{\sigma}$)
subsets.
\end{defn}\noindent
To appreciate this concept we note (cf Section 4.1.2 in \cite{BR}).
\begin{rem}
Assume $K$ is a compact convex subset of a real locally convex topological vector space. Then
\begin{enumerate}
\item
The Baire sets of the compact space $K$ are defined as elements of the $\sigma$-algebra $\mathcal F$ generated by the closed $G_{\sigma}$-subsets, or the open $F_{\sigma}$-subsets of $K$.
\item
$\mathcal F$ is the smallest $\sigma$-algebra such that all continuous functions are measurable.
\end{enumerate}
\end{rem}\noindent
Now we are in position to give, cf Section 4.1.2 in \cite{BR}:
\begin{defn}
\begin{enumerate}
\item Let $M_+(K)$ denote the set of all positive (Radon) measures on $K$, (K is a compact set). The support of measure $\mu \in M_+(K)$ is defined as the smallest 
closed subset $C$  of $K$ such that $\mu(C)=\mu(K)$.
\item The measure $\mu$ is said to be pseudosupported by an arbitrary set 
$A\subseteq K$ if $\mu(B)=0$ for all Baire sets $B$ such that $B\cap A=\emptyset$
\end{enumerate}
\end{defn}
Although the concepts of support and pseudosupport look similarly, one can provide an example indicating that the similarity can be confusing (cf Section 4.1.2 in \cite{BR}). Namely,
\begin{exmp}
One can find a probability measure $\mu$ and a Borel subset $A$ such that  
$\mu$ is pseudosupported by $A$, but $\mu(A)=0$.
\end{exmp}
This example illustrates ``unpleasant'' measure-theoretical case which will demand a special care in the strategy to decompose the subset $K \subseteq \mathfrak{S}_{\mathfrak A}$.

After these preliminaries we turn to more detailed presentation of barycentric decomposition. To this end,
let $K$ be a compact convex set, and $M_1(K)$ be the set of positive normalized, i.e probability, measures on $K$, which is a (weakly-$*$) compact. 
To proceed with a description of barycentric decomposition, let us take an arbitrary element $\omega$ of $K$ and define the collection $M_{\omega}(K)$ of probability measures $\mu$ having 
as the barycenter the fixed element $\omega$ of $K$:

\begin{equation}
M_{\omega}(K)=\left\{\mu \in M_1(K):\ \int_{K}\omega'd\mu(\omega')=\omega   \right\}
\end{equation}

\begin{rem}
Note that $\mu \in M_{\omega}(K)$ is equivalent to the statement : $\mu 
\sim \delta_{\omega}$, i.e. the measure $\mu$ is equivalent to the Dirac measure $\delta_{\omega}$, where the equivalence of probabilistic measures measures $\mu$ and $\mu^{\prime}$ is defined as
$$ \int_K \nu d\mu(\nu) = \int_K \nu d\mu^{\prime}(\nu).$$
We remind, the integrals are understood in the weak sense, (see \cite{Alfsen} for details).
\end{rem}
The following result (see Section 4.1.2 in \cite{BR}; cf also the formula (\ref{48})) is saying that the barycenter $b(\mu)$ of a general measure  $\mu$ exists.
\begin{prop}
For a $\mu\in M_1(K)$ there exists a unique point b($\mu$) in the set $K$ such that 
\begin{equation}
f(b(\mu))=\int_{K}f(\omega')d\mu(\omega')
\end{equation}
for all affine, continuous, real-valued functions $f$ on $K$.
\end{prop}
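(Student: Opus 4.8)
The plan is to establish existence and uniqueness of the barycenter $b(\mu)$ separately, relying on the fact that $K$ is compact convex and that $\mu \in M_1(K)$ is a probability measure. First I would set up the right dual framework: let $A(K)$ denote the space of real-valued affine continuous functions on $K$, and observe that the map sending $f \in A(K)$ to $\int_K f(\omega')\,d\mu(\omega')$ is a positive linear functional. The existence of a point $b(\mu) \in K$ realizing all these integrals is exactly a statement that this functional is an evaluation at some point of $K$; the content is that $K$ is compact.

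For existence, I would proceed by a finite-intersection (or net-limit) argument. For each $f \in A(K)$ set $c_f = \int_K f\,d\mu$ and consider the closed slabs $H_f = \{x \in K : f(x) = c_f\}$. Each $H_f$ is a closed, hence compact, subset of $K$, and it is nonempty because the continuous affine image $f(K)$ is a compact interval containing the average $c_f$ (here one uses that $\mu$ is a probability measure, so $c_f$ lies between $\min_K f$ and $\max_K f$). The key step is to check that the family $\{H_f\}_{f \in A(K)}$ has the finite intersection property: given $f_1,\dots,f_n$, the vector-valued map $\omega' \mapsto (f_1(\omega'),\dots,f_n(\omega'))$ pushes $\mu$ forward to a probability measure on the compact convex set $(f_1,\dots,f_n)(K) \subseteq \mathbb{R}^n$, whose barycenter (an ordinary finite-dimensional average of a probability distribution over a convex compact set) lies in that set and is the image of some point of $K$; that point lies in $H_{f_1} \cap \dots \cap H_{f_n}$. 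By compactness of $K$, the total intersection $\bigcap_{f} H_f$ is then nonempty, and any point in it is a barycenter $b(\mu)$.

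Uniqueness is the easier half: if two points $b, b'$ both satisfy $f(b) = f(b') = \int_K f\,d\mu$ for every $f \in A(K)$, then $f(b - b') = 0$ for all such $f$. Since $K$ is a compact convex subset of a locally convex space, the continuous affine functions separate points (this is a consequence of the Hahn--Banach separation theorem applied to the ambient locally convex topological vector space), and hence $b = b'$.

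The main obstacle is the finite-intersection step in the existence argument, specifically verifying that the finite-dimensional barycenter actually is attained as $f$-values at a genuine point of $K$ rather than merely lying in the convex hull of the image; this is where the convexity of $K$ together with compactness is essential, and where one must be careful that the pushforward's barycenter pulls back into $K$. Everything else reduces to standard facts about compact convex sets and the separating property of $A(K)$, both of which may be invoked from the references \cite{BR} and \cite{Alfsen} cited in the excerpt.
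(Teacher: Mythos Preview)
The paper does not supply its own proof of this proposition; Section 7 is explicitly a summary ``without proofs'' of the decomposition theory, and this result is simply attributed to Section 4.1.2 of \cite{BR}. Your argument is correct and is essentially the standard route found in those references.

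Regarding the obstacle you flag at the end: it is not actually an obstacle. Since each $f_i$ is affine and $K$ is convex, the image $(f_1,\dots,f_n)(K)\subset\mathbb{R}^n$ is itself convex (and compact), not merely its convex hull. The barycenter of the pushforward $F_*\mu$ therefore lies in $(f_1,\dots,f_n)(K)$ directly---if it did not, a separating hyperplane in $\mathbb{R}^n$ would give a linear functional whose integral against $F_*\mu$ exceeds its supremum on $F(K)$, a contradiction---and hence equals $F(x)$ for some $x\in K$. So the finite intersection property goes through cleanly.

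One small notational slip: in the uniqueness part you write ``$f(b-b')=0$'', but $f$ is affine, not linear; the correct statement is $f(b)=f(b')$ for every $f\in A(K)$. The conclusion via Hahn--Banach separation is of course unaffected.
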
\noindent
To convince the reader that there are nontrivial decompositions, we need (see Definition 4.1.2 in \cite{BR})
\begin{defn}
Let $K$ be a compact set. 
On the set of (all, Radon) positive measures on K, $M_+(K)$, the relation $\geq$ can be defined in the following way: $\mu \geq \nu$ if
$$ \mu(f) \geq \nu(f) $$
for all real continuous convex functions $f$ on $K$.
\end{defn}\noindent
To appreciate this relation, we note (see Proposition 4.1.3 in \cite{BR})
\begin{prop}
\label{7.13b}
\begin{enumerate}
\item The relation $\geq$ on $M_+(K)$ is a partial ordering.
\item $\mu \in M_{\omega}(K)$ if and only if $\mu \geq \delta_{\omega}$.
\item Each $\omega \in K$ is the barycenter of a measure $\mu \in M_1(K)$ which is maximal for the order $\geq$.
\end{enumerate}
\end{prop}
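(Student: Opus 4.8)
The plan is to treat the three assertions separately, since they rest on different tools. For the first, reflexivity and transitivity of $\geq$ are immediate from the defining inequalities $\mu(f) \geq \nu(f)$ (ranging over continuous convex $f$), so the only substantive point is antisymmetry. If $\mu \geq \nu$ and $\nu \geq \mu$, then $\mu(f) = \nu(f)$ for every continuous convex $f$, hence also for every difference of two such functions. I would invoke the standard fact from decomposition theory that the differences of continuous convex functions are uniformly dense in $C(K)$: they form a linear subspace containing the continuous affine functions, which already separate points of $K$ and contain the constants, and one checks it is a sublattice so that a Stone--Weierstrass argument applies. Density then forces $\mu(f) = \nu(f)$ for all $f \in C(K)$, i.e.\ $\mu = \nu$ as Radon measures.

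For the second assertion I would use Jensen's inequality in its barycentric form. Suppose first $\mu \in M_{\omega}(K)$, so $b(\mu) = \omega$. Every continuous convex $f$ equals the pointwise supremum of its continuous affine minorants (a Hahn--Banach separation argument), and for each affine minorant $a \leq f$ the barycenter identity gives $\mu(a) = a(\omega)$; from $\mu(f) \geq \mu(a) = a(\omega)$ and supremizing over $a$ one gets $\mu(f) \geq f(\omega) = \delta_{\omega}(f)$, i.e.\ $\mu \geq \delta_{\omega}$. Conversely, if $\mu \geq \delta_{\omega}$, I would test the inequality on continuous affine functions: since both $a$ and $-a$ are convex, $\mu(a) \geq a(\omega)$ and $\mu(a) \leq a(\omega)$, forcing $\mu(a) = a(\omega)$ for all affine $a$. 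Applied to the constant $\mathds{1}$ this shows $\mu$ is a probability measure, and by the defining property of the barycenter it shows $b(\mu) = \omega$, i.e.\ $\mu \in M_{\omega}(K)$.

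The third assertion is where the real work lies, and I would prove it by Zorn's lemma applied to $M_{\omega}(K)$, which is nonempty because $\delta_{\omega} \in M_{\omega}(K)$. By the second assertion $M_{\omega}(K) = \{\mu \in M_1(K) : \mu \geq \delta_{\omega}\}$. Given a chain $\{\mu_{\alpha}\}$ in $M_{\omega}(K)$, I would produce an upper bound using weak-$*$ compactness of $M_1(K)$ (Banach--Alaoglu together with the fact that $\mu \geq 0$ and $\mu(\mathds{1})=1$ are weak-$*$ closed conditions): pass to a weak-$*$ convergent subnet $\mu_{\alpha_{\gamma}} \to \mu^{*}$. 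For each fixed $\alpha$ and each continuous convex $f$, the value $\mu_{\beta}(f)$ increases for $\beta \geq \alpha$ along the chain, so passing to the limit along the subnet gives $\mu^{*}(f) \geq \mu_{\alpha}(f)$; thus $\mu^{*} \geq \mu_{\alpha}$ for all $\alpha$, and $\mu^{*} \geq \delta_{\omega}$ keeps $\mu^{*} \in M_{\omega}(K)$. Zorn then yields a $\geq$-maximal element $\mu$ of $M_{\omega}(K)$ with $b(\mu) = \omega$. Finally I would note that maximality within $M_{\omega}(K)$ coincides with maximality within $M_1(K)$: any $\nu \in M_1(K)$ with $\nu \geq \mu$ satisfies $\nu \geq \mu \geq \delta_{\omega}$, hence lies in $M_{\omega}(K)$, so $\nu = \mu$ by maximality and the antisymmetry of part one.

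The hard part will be the chain-upper-bound step in part three: one must check that the monotonicity $\mu_{\beta}(f) \geq \mu_{\alpha}(f)$ survives passage to a weak-$*$ limit, which is exactly why the subnet is taken eventually beyond each fixed index $\alpha$ and why only continuous (convex) test functions are used. A secondary subtlety is the density of differences of convex functions used in part one; if one prefers to avoid the Stone--Weierstrass route, it may instead be cited directly from the decomposition-theory references quoted at the start of this section.
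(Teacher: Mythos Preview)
Your proposal is correct and follows the standard argument found in the cited reference (Proposition 4.1.3 of \cite{BR}): antisymmetry via uniform density of $S(K)-S(K)$ in $C(K)$, the equivalence in (2) via the barycentric Jensen inequality tested on affine functions, and (3) via Zorn's lemma together with weak-$*$ compactness of $M_1(K)$. The paper itself does not supply a proof but simply quotes the result from \cite{BR}, so there is no alternative approach to compare against; your write-up is exactly what one would expect to see if the argument were spelled out in full.
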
\noindent
and (see Theorem 4.1.15 in \cite{BR})
\begin{prop}
\label{7.13a}
Let $K$ be a convex compact subset of a locally convex Hausdorff space. The following two conditions are equivalent:
\begin{enumerate}
\item each $\omega \in K$ is the barycenter of a unique maximal measure,
\item $K$ is a simplex
\end{enumerate}
\end{prop}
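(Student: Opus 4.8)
The plan is to prove the Choquet--Meyer uniqueness theorem by translating both conditions into a single analytic statement about the \emph{upper envelope} of continuous functions, and then showing that the lattice structure of the cone is exactly what makes this envelope additive. Throughout, write $A(K)$ for the continuous affine functions on $K$, let $S\subset C(K)$ denote the cone of continuous convex functions, and for $f\in C(K)$ define the upper envelope
\[
\hat f(x)=\inf\{a(x):a\in A(K),\ a\ge f\},
\]
which is concave, upper semicontinuous, dominates $f$, and satisfies $\hat a=a$ for $a\in A(K)$. I would use two facts whose proofs are routine once the order machinery behind Proposition \ref{7.13b} is available: first, $\hat f(x)=\sup\{\mu(f):\mu\in M_x(K)\}$, and for $f\in S$ this supremum is attained by a maximal measure (every $\mu\in M_x(K)$ is dominated in the order $\geq$ by a maximal one, and $\mu\mapsto\mu(f)$ increases along $\geq$ precisely because that order is defined using convex functions); second, the existence of maximal representing measures, which is Proposition \ref{7.13b}(3).

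The linchpin is the lemma where the word ``simplex'' really enters: $K$ is a simplex (equivalently, the cone $C$ of which $K$ is a base is a lattice, in Choquet's sense recalled above) if and only if the envelope is additive on $S$, i.e.\ $\widehat{f+g}=\hat f+\hat g$ for all $f,g\in S$. The inequality $\widehat{f+g}\le\hat f+\hat g$ is automatic, since a sum of affine majorants of $f$ and of $g$ is an affine majorant of $f+g$; the content is the reverse inequality, and it is here that one must use the existence of infima in the ordered cone to split a single affine majorant of $f+g$ into separate majorants of $f$ and of $g$. I expect this equivalence to be the main obstacle: the forward direction forces one to carry out the decomposition inside $C$ while tracking the order induced by $C$, and the converse must manufacture the lattice operations out of additivity. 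Everything after the lemma is comparatively soft.

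Granting the lemma, I would argue $(2)\Rightarrow(1)$ thus. Assume $K$ is a simplex, fix $\omega\in K$, and set $L(f)=\hat f(\omega)$ for $f\in S$. By the lemma $L$ is additive, and it is positively homogeneous, monotone, dominated by $\|\cdot\|_\infty$, and agrees with evaluation at $\omega$ on $A(K)$; hence it extends (M.\ Riesz / Hahn--Banach for the positive cone) to a positive normalized functional on $C(K)$, that is, to a probability measure $\mu_0$ with $\mu_0(f)=\hat f(\omega)$ for all $f\in S$ and $b(\mu_0)=\omega$. If $\nu$ is \emph{any} maximal measure with barycenter $\omega$, then for each $f\in S$ we have $\nu(f)\le\hat f(\omega)=\mu_0(f)$, so $\mu_0\ge\nu$ in the order $\geq$; since $\nu$ is maximal this forces $\nu=\mu_0$, giving uniqueness of the maximal representing measure.

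For $(1)\Rightarrow(2)$ I would run the same dictionary backwards. Assuming each $\omega$ has a unique maximal representing measure $\mu_\omega$, the supremum defining $\hat f(\omega)$ for $f\in S$ is attained by a maximal measure, and uniqueness collapses this to $\hat f(\omega)=\mu_\omega(f)$. Linearity of $\mu_\omega$ then yields, for $f,g\in S$,
\[
\widehat{f+g}(\omega)=\mu_\omega(f+g)=\mu_\omega(f)+\mu_\omega(g)=\hat f(\omega)+\hat g(\omega),
\]
and as $\omega$ is arbitrary the envelope is additive on $S$, so by the lemma $K$ is a simplex. The two implications give the equivalence. The only delicate points beyond the lemma are measure-theoretic: in the non-metrizable case a maximal measure need not be carried by $Ext(K)$, so ``maximal'' rather than ``supported on extreme points'' is the correct notion throughout, and the pseudosupport subtleties flagged earlier in this section are exactly what make that substitution necessary.
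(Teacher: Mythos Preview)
The paper does not supply a proof of this proposition: it is stated with the parenthetical reference ``see Theorem 4.1.15 in \cite{BR}'' and no argument is given. Your sketch is the standard Choquet--Meyer proof via the upper envelope $\hat f$ and its additivity on the cone of convex functions, which is essentially the route taken in the references the paper cites (Bratteli--Robinson, Alfsen, Choquet). The outline is sound: the equivalence ``$K$ simplex $\Leftrightarrow$ $\widehat{f+g}=\hat f+\hat g$ on $S$'' is indeed the crux, and your two implications from it are correctly organised. One small point worth tightening in $(2)\Rightarrow(1)$: having built $\mu_0$ with $\mu_0(f)=\hat f(\omega)$ for $f\in S$, you should note explicitly that $\mu_0$ is itself maximal (since $\mu_0\ge\nu$ for \emph{every} $\nu\in M_\omega(K)$, in particular for any maximal $\nu$, forcing $\mu_0=\nu$); this is implicit in your argument but deserves a sentence.
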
\noindent
Note that as we are interested in convex compact sets which are not simplexes, the above results guarantee a nontrivial decomposition of any non-pure state in $K$.

On the other hand, results presented in this section clearly indicate that the measure-theoretic properties of the extremal points $Ext(K)$, of $K$, are of fundamental importance.
Therefore, we turn to such situation in which subsets of points in $K \subseteq \mathfrak S$ have good measurability properties as well as the imposed conditions on $K$ are acceptable from ``physical'' point of view.
This is nontrivial task as for a general $C^*$-algebra $\mathfrak A$, the set of pure states may be as pathological as the set of extremal points of a general convex compact set, see Section  4.1.4 in \cite{BR}.

 Thus, to avoid pathological properties of $Ext(K)$ we will
make the following assumption, (see \cite{ruelle1}, \cite{ruelle2}, and Definition 4.1.32 in \cite{BR}).

\begin{defn}\textit{ Ruelle's SC condition}\newline
\label{7.8}
Let $\mathfrak{A}$ be a C*-algebra with unit, and $\mathfrak{F}$ a subset of the state space $\mathfrak{S}_{\mathfrak{A}}$. 
$\mathfrak F$ is said to satisfy separability condition (SC) if there exists a sequence of sub-C*-algebras 
$\left\{\mathfrak{A}_n\right\}$ such that $\bigcup_{n=1}^{\infty}\mathfrak{A}_n$ is dense in $\mathfrak{A}$ 
and each $\mathfrak{A}_n$ contains a two-sided, closed, separable ideal $\mathcal{I}_n$ such that 
$$\mathfrak{F}=\left\{\omega,\ \omega\in \mathfrak{S_{\mathfrak{A}}},\ \left\|\omega|_{\mathcal{I}_n}\right\|=1,\ n\geq 1 \right\}.$$
\end{defn}
 
To illustrate widespread applications of SC-condition we give several examples having clear ``physical'' interpretation. Note, that these examples also illustrate the phrase ``specific algebras'' used in Rule 4.1.
\begin{exmp}
\label{sepalgebra}
\begin{enumerate}
\item Let a $C^*$-algebra
$\mathfrak{A}$ be  separable one. Put $\mathcal{I}_n=\mathfrak{A}$ for any $n$. 
Then SC-condition is fulfilled. We emphasize that 
in this case the weak-$*$-topology on the state space $\mathfrak{S}_{\mathfrak{A}}$ is metrizable. Then, Borel and Baire 
structure coincides. Consequently, measure-theoretic description of $\mathfrak{S}_{\mathfrak{A}}$ is greatly simplified.
\item
Obviously, the basic algebra in quantum information literature $\mathfrak{A} \equiv M_{n}(C)$ is a nice example of a separable algebra.
\end{enumerate}
\end{exmp}

The next example is relevant for Dirac's formalism of quantum mechanics. 
\begin{exmp}
\label{dirac's}
Let
$\mathfrak{A}$ be equal to $B(\mathcal{H})$ and suppose that  $\mathcal{H}$ is infinite dimensional Hilbert space. 
By $\mathcal{F}_C(\mathcal{H})$ we denote the space of compact operators on $\mathcal H$ and 
$\mathcal{F}_T(\mathcal{H})$ denotes subspace of trace class operator.
We have following isomorphisms, see \cite{schatten} 
\begin{equation}
\left(\mathcal{F}_C(\mathcal{H})\right)^*=\mathcal{F}_T(\mathcal{H})
\end{equation}
\begin{equation} 
\left(\mathcal{F}_T(\mathcal{H})\right)^*=B(\mathcal{H}) 
\end{equation}
Moreover, see \cite{schatten}, both $\mathcal{F}_C(\mathcal{H})$ and $\mathcal{F}_T(\mathcal{H})$ are two-sided ideals in $B(\mathcal{H})$. Normal states (so density matrices) $\mathfrak{S}_{B(\mathcal{H})}^n \equiv \mathfrak{S}^n$(positive, weakly$^*$-continuous, normalized forms on $B(\mathcal{H})$) have the following characterization (cf Proposition 2.6.14 in \cite{BR})
\begin{equation}
\mathfrak{S}^n=\left\{\omega \in \mathfrak{S}_{B(\mathcal{H})},\ \left\|\omega|_{\mathcal{F}_C}\right\|=1\right\}  
\end{equation}
Obviously, the specification:
$\mathfrak{A}=B(\mathcal{H})$, $\mathfrak{A}_n=B(\mathcal{H})$ and $\mathcal{I}_n=\mathcal{F}_C(\mathcal{H})$ 
in Definition \ref{7.8} shows that the condition SC is fulfilled for $\mathfrak{S}^n$. 
\end{exmp}
It is worth pointing out that one can say more. 
\begin{rem}
On the set $\mathfrak{S}^n$ (so on the set of all density matrices)\newline the weak-$*$ and 
uniform topology (of the predual) coincide, see Proposition 2.6.15 in \cite{BR}. In other words, weak-$*$-topology is metrizable on $\mathfrak{S}^n$.
Consequently, Baire and Borel structures coincide. However, this is not true for the set of all states $\mathfrak{S}_{B(\mathcal{H})}$. In particular, $\mathfrak{S}_{B(\mathcal{H})}$, for the weak-$^*$ topology is not a metric space.
\end{rem}\noindent
The next example is relevant to lattice models, models of solid state physics, spin chains.

\begin{exmp} \textit{UHF - uniformly hyperfinite algebra.}\newline 
\label{7.15}
$\mathbb{Z}^n$ is the Cartesian product of integer numbers $\mathbb{Z}$ with $n=\left\{1,\ 2,\ 3,\ \ldots N\right\}$.  Let $\alpha\in \mathbb{Z}^n$ be an arbitrary, fixed site of the lattice. With each site $\alpha$ we associate a Hilbert space $\mathcal{H}_{\alpha}$. It is required that each Hilbert space $\mathcal{H}_{\alpha}$ is finite dimensional.
The Hilbert space associated with a finite subset $\Lambda\subset Z^n$ is given by 
$\mathcal{H}_{\Lambda}=\otimes_{\alpha\in \Lambda}\mathcal{H}_{\alpha}$.
We put $\mathfrak{A}_{\Lambda}=B(\mathcal{H}_{\Lambda})$. The algebra of operators associated with the whole 
space $\mathbb{Z}^n$ is equal to $\mathfrak{A}=\overline{\bigcup_{\Lambda}\mathfrak{A}_{\Lambda}}$.
It is easy to check that SC is satisfied. \newline

The just described model, with the algebra $\mathfrak{A}=\overline{\bigcup_{\Lambda}\mathfrak{A}_{\Lambda}}$, has unexpected property, see Example 4.1.31 in \cite{BR}. Namely, its set of pure states $Ext(\mathfrak{S}_{\mathfrak{A}})$ 
is a weakly-$*$ dense subset in $\mathfrak{S}_{\mathfrak{A}}$. Consequently, in general, the set  $Ext(\mathfrak{S}_{\mathfrak{A}})$  does not 
have to be closed.
\end{exmp}\noindent
The preceding example can be generalized.

\begin{exmp}\textit{Quasi local algebras.}\newline
\label{7.13}
Replace $\mathbb{Z}^n$ by $\mathbb{R}^n$, i.e we replace integer numbers $\mathbb{Z}$ by real numbers $\mathbb{R}$.
In physical terms it means that now we are interested in continuous systems. We associate a Hilbert  space $\mathcal{H}_{\Lambda}$  (usually infinite dimensional) with the region (bounded subset) $\Lambda\subset \mathbb{R}^n$. 
Put  $\mathfrak{A}_{\Lambda}=B(\mathcal{H}_{\Lambda})$. It means that with a region $\Lambda$ in $\mathbb{R}^n$ we associate \textit{local} observables and these observables generate the (specific) algebra $\mathfrak{A}_{\Lambda}$. For $\Lambda\subset \Lambda'$ we have $\Lambda'=\Lambda \cup (\Lambda' \setminus \Lambda)$ and according to Rule 4.4
one has $\mathcal{H}_{\Lambda'}=\mathcal{H}_{\Lambda}\otimes \mathcal{H}_{\Lambda'\setminus \Lambda}$. Then, the algebra of all
observables is given by $\mathfrak{A}=\overline{\bigcup_{\Lambda}\mathfrak{A}_{\Lambda}}$. 
\end{exmp}\noindent
To say more on this model we need:

\begin{defn}
A state $\omega\in \mathfrak{S}_{\mathfrak{A}}$ is called locally normal if $\omega|_{\mathfrak{A}_{\Lambda}}$ is normal for every $\Lambda$.
\end{defn}

Thus, a state is locally normal, if its restriction to an algebra associated with any bounded region is given by a (local) density matrix. Let us observe that
for each $\Lambda$ in $\mathbb{R}^n$ one has $\mathcal{F}_C(\mathcal{H}_{\Lambda})\subset B(\mathcal{H}_{\Lambda})$.
Therefore, the condition SC holds for this model.
Moreover, one has, see Theorem 2.6.16 in \cite{BR}
\begin{prop}
Denote by $\mathfrak{S}_l$ the set of locally normal states on the algebra $\mathfrak A$ defined in Example \ref{7.13}.
The set $\mathfrak{S}_l$ is metrizable in weak-$*$ topology.
\end{prop}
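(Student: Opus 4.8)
The plan is to build an explicit metric on $\mathfrak{S}_l$ out of the local trace-norm distances and to show that it induces precisely the weak-$*$ topology. First I would fix an exhausting sequence: since $\mathbb{R}^n$ is $\sigma$-compact one can choose bounded regions $\Lambda_1 \subset \Lambda_2 \subset \cdots$ (say, balls of integer radius) which are cofinal, meaning every bounded $\Lambda$ is contained in some $\Lambda_k$. Because $\mathfrak{A}_\Lambda = B(\mathcal{H}_\Lambda)$ embeds in $\mathfrak{A}_{\Lambda_k}$ for $\Lambda\subset\Lambda_k$, we get $\bigcup_k \mathfrak{A}_{\Lambda_k} = \bigcup_\Lambda \mathfrak{A}_\Lambda$, which is norm-dense in $\mathfrak{A}$. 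For a locally normal state $\omega$, its restriction $\omega|_{\mathfrak{A}_{\Lambda_k}}$ is by definition a normal state on $B(\mathcal{H}_{\Lambda_k})$, hence is represented by a density matrix $\rho_\omega^{(k)}$ on $\mathcal{H}_{\Lambda_k}$, and I set
\[
 d(\omega,\omega') = \sum_{k=1}^\infty 2^{-k}\, \|\rho_\omega^{(k)} - \rho_{\omega'}^{(k)}\|_1 ,
\]
where $\|\cdot\|_1$ is the trace norm. Since $\|\rho_\omega^{(k)}-\rho_{\omega'}^{(k)}\|_1 \le 2$ the series converges, and $d(\omega,\omega')=0$ forces all restrictions to agree on the dense subalgebra, hence $\omega=\omega'$; symmetry and the triangle inequality are inherited termwise, so $d$ is a genuine metric.

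The crux is the identification of the $d$-topology with the weak-$*$ topology, and here the decisive input is the fact (Proposition 2.6.15 in \cite{BR}, cf. the Remark following Example \ref{dirac's}) that on the normal state space of $B(\mathcal{H}_{\Lambda_k})$ the weak-$*$ topology coincides with the uniform topology of the predual, i.e. with the trace-norm topology on density matrices. I would establish the coincidence of the two topologies by checking that a net converges in one if and only if it converges in the other. If $d(\omega_\alpha,\omega)\to 0$, then for every fixed $k$ one has $\|\rho_{\omega_\alpha}^{(k)}-\rho_\omega^{(k)}\|_1\to 0$, so $\omega_\alpha(A)=\Tr(\rho_{\omega_\alpha}^{(k)}A)\to \Tr(\rho_\omega^{(k)}A)=\omega(A)$ for every $A\in\mathfrak{A}_{\Lambda_k}$; as this holds for all $k$ and $\bigcup_k\mathfrak{A}_{\Lambda_k}$ is dense while all states have norm one, a routine three-$\varepsilon$ approximation gives $\omega_\alpha(A)\to\omega(A)$ for every $A\in\mathfrak{A}$, i.e. weak-$*$ convergence.

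Conversely, if $\omega_\alpha\to\omega$ weakly-$*$, then for each $k$ the restrictions $\omega_\alpha|_{\mathfrak{A}_{\Lambda_k}}$ converge to $\omega|_{\mathfrak{A}_{\Lambda_k}}$ in the weak-$*$ topology of the normal state space; by Proposition 2.6.15 this is the same as trace-norm convergence, so $\|\rho_{\omega_\alpha}^{(k)}-\rho_\omega^{(k)}\|_1\to 0$ for each $k$. Given $\varepsilon>0$, choosing $N$ with $\sum_{k>N}2^{-k}\cdot 2<\varepsilon/2$ and using that the maximum over the finitely many $k\le N$ of the local trace distances is eventually $<\varepsilon/2$ yields $d(\omega_\alpha,\omega)\to 0$. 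Hence the weak-$*$ and $d$-topologies have the same convergent nets and therefore agree, so $\mathfrak{S}_l$ is metrizable.

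The step I expect to be the real obstacle — and the only non-formal one — is the converse direction, because weak-$*$ convergence is in general strictly weaker than norm convergence of the representing density matrices. It is precisely the coincidence of the weak-$*$ and uniform topologies on normal states that rescues the argument, and this in turn depends essentially on \emph{local normality}: it is what forces every restriction $\omega|_{\mathfrak{A}_{\Lambda_k}}$ to lie in the separable predual $\mathcal{F}_T(\mathcal{H}_{\Lambda_k})$, where the trace norm is available and Proposition 2.6.15 applies. Without local normality the restrictions would only be general states on $B(\mathcal{H}_{\Lambda_k})$, whose weak-$*$ topology need not be metrizable, and the construction would collapse.
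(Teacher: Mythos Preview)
Your proof is correct and follows the standard route: construct an explicit metric from countably many local trace-norm distances and show it recovers the weak-$*$ topology via the coincidence of weak-$*$ and trace-norm topologies on normal states of each $B(\mathcal{H}_{\Lambda_k})$ (Proposition 2.6.15 in \cite{BR}). The paper itself does not give a proof but simply invokes Theorem 2.6.16 in \cite{BR}, whose argument is essentially the one you have written out; your identification of the ``real obstacle'' is exactly right and is handled correctly.
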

Consequently, the measure-theoretic properties of the set of locally normal states for quasi local algebra described in Example \ref{7.13} are simplified!

Example \ref{7.13} still can be generalized. The motivation of that follows from Rule 1: observables of a system associated with a region $\Lambda \subseteq \mathbb{R}^n$ do not need to generate the
algebra $B(\mathcal{H}_{\Lambda})$. We emphasize that this problem is in ``the heart'' of Haag-Kastler approach to Quantum Field Theory, see \cite{HK}.
\begin{exmp}
\label{7.19}
Let $\mathbb{R}^n$ be a ``coordinate'' space of a field and $\Lambda\subset \mathbb{R}^n$ be a region. With the region $\Lambda$ we associate a C*-algebra $\mathfrak{A}_{\Lambda}$ (which is not necessary of the form $B(\mathcal{H}_{\Lambda})$). However, we are assuming that conditions given in Definition \ref{7.8} hold for the family $\{ \mathfrak{A}_{\Lambda} \}_{\Lambda \subset \mathbb{R}^n}$. Put
$\mathfrak{A}=\overline{\bigcup_{\Lambda}\mathfrak{A}_{\Lambda}}$. Obviously, the definition of $\mathfrak{ A}$ guarantees that the condition SC is fulfilled.
\end{exmp}
\begin{rem}
The extremely important point to note here is that Examples \ref{7.13}, \ref{7.19} as well as \ref{7.15}
give a possibility to discuss how far one subsystem is  from another. Consequently, this gives a chance to speak about propagation of an effect. Thus, the proper framework for a discussion of a form of causality is provided. Moreover, quasi local algebras are basic ones in quantum field theory; for example, a quasi local algebra is an essential ingredient of the Reeh-Schlieder theorem (see also the end of Section 4.
\end{rem}
Having clarified measure-theoretic aspects of decomposition theory we wish to state the main result. To this end we need:
\begin{defn}
A face $F$ of a compact convex set $K$ is defined  to be a convex subset of $K$ with the following property: if $\omega\in F$ can be written as $\omega=\lambda_1\omega_1+\lambda_2\omega_2$ where $\lambda_i\geq 0$, $i=1,2$,  $\lambda_1+\lambda_2=1$, $\omega_1$ and $\omega_2$ are in $K$ then we have $\omega_i\in F$ ($i=1,2$).
\end{defn}\noindent

\begin{exmp}
\begin{enumerate}
\item One point face, $F=\{ \omega \} \subset K$ is an extremal point of $K$.
\item Let $\mathfrak M$ be a von Neumann algebra. Then $\mathfrak{S}^n_{\mathfrak M}$ is a face in
$\mathfrak{S}_{\mathfrak M}.$
\end{enumerate}
\end{exmp}
The promised result is, see Theorem 4.2.5 in \cite{BR}:
\begin{thm}
\label{7.22}
Let $\mathfrak{A}$ be a C*-algebra with identity and $\omega$ be a state over $\mathfrak{A}$.
There are measures $\mu$ (determined by structures induced by the state $\omega$) over $\mathfrak{S}_{\mathfrak{A}}$ such that any $\mu$ is pseudosupported by pure states 
$Ext(\mathfrak{S}_{\mathfrak{A}})$. Moreover, if additionally $\omega$ is in a face $F$ of $\mathfrak{S}_{\mathfrak{A}}$ 
satisfying the separability condition SC then the set of extremal points $Ext(F)$ of $F$, is a Baire subset of the pure states on $\mathfrak{A}$ and 
$$\mu(Ext(F)) =1.$$
\end{thm}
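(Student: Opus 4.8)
The plan is to separate the statement into the unconditional existence assertion and the sharper conclusion valid under the separability condition SC. For the first part I would begin from the fact that, since $\mathfrak A$ has a unit, the state space $\mathfrak{S}_{\mathfrak A}$ is a weak-$^*$ compact convex set, so by Proposition \ref{7.13b}(3) the given state $\omega$ is the barycenter of some $\mu \in M_1(\mathfrak{S}_{\mathfrak A})$ that is maximal for the Choquet order $\geq$. The content of the first assertion is then the classical Choquet-theoretic fact that a maximal measure is pseudosupported by the extreme points. To establish it I would, for each continuous convex $f$ on $\mathfrak{S}_{\mathfrak A}$, introduce the upper (concave, upper-semicontinuous) envelope $\hat f$; maximality of $\mu$ is equivalent to $\mu(\hat f)=\mu(f)$ for all such $f$, while $\hat f \geq f$ with equality on $Ext(\mathfrak{S}_{\mathfrak A})$. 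Hence each $\{\hat f > f\}$ is a Baire set disjoint from the pure states on which $\mu$ vanishes, and assembling these shows that $\mu$ is pseudosupported by $Ext(\mathfrak{S}_{\mathfrak A})$ in the sense of the definition recalled above.

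For the second part I would first record the purely convex-geometric fact that for a face one has $Ext(F)=F\cap Ext(\mathfrak{S}_{\mathfrak A})$, so the extreme points of $F$ are automatically pure states on $\mathfrak A$; this yields the inclusion asserted in the statement. Next I would show that $F$ is a Baire set. Writing $g_n(\omega)=\|\omega|_{\mathcal I_n}\|$, the separability of the ideal $\mathcal I_n$ in Definition \ref{7.8} lets me realize this norm as a supremum of the continuous functions $\omega\mapsto|\omega(a)|$ over a countable norm-dense subset of the unit ball of $\mathcal I_n$; thus each $g_n$ is convex and lower semicontinuous. Since $0\le g_n\le 1$ on states, the set $\{g_n\ge 1\}=\{g_n=1\}$ is a $G_\delta$, and $F=\bigcap_n\{g_n=1\}$ is a countable intersection of Baire sets, hence Baire.

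The mass concentration then follows from a Jensen-type argument. Because each $g_n$ is convex and lower semicontinuous with $g_n\le 1$ and $g_n(\omega)=1$ (as $\omega\in F$), the barycentric inequality applied to $b(\mu)=\omega$ gives $1=g_n(\omega)\le\int g_n\,d\mu\le 1$, forcing $g_n=1$ $\mu$-almost everywhere, so that $\mu(F)=1$.

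The hard part will be to upgrade ``$F$ is Baire'' to ``$Ext(F)$ is Baire'' and then to conclude $\mu(Ext(F))=1$. Here I would exploit the separability of the ideals to show that the weak-$^*$ topology is metrizable on $F$: the map $\omega\mapsto(\omega|_{\mathcal I_n})_n$ sends $F$ into a countable product of the (metrizable) state spaces of the separable C*-algebras $\mathcal I_n$, and using an approximate identity of $\mathcal I_n$ together with the normalization $\|\omega|_{\mathcal I_n}\|=1$ one checks that this map is injective on the dense subalgebra $\bigcup_n\mathfrak A_n$, hence on $\mathfrak A$, so it is a homeomorphism onto its image. On a metrizable convex set Baire and Borel structures coincide and the extreme boundary is a $G_\delta$, so $Ext(F)$ is a Baire subset of the pure states. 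Finally $F\setminus Ext(F)=F\setminus Ext(\mathfrak{S}_{\mathfrak A})$ is a Baire set disjoint from $Ext(\mathfrak{S}_{\mathfrak A})$, so the pseudosupport property from the first part gives it $\mu$-measure zero; combined with $\mu(F)=1$ this yields $\mu(Ext(F))=1$. I expect the metrizability step, turning the abstract separability of the ideals into genuine control of the topology of $F$, to be the main technical obstacle.
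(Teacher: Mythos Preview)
The paper does not supply its own proof of this theorem: the statement is quoted from Bratteli--Robinson (\cite[Theorem 4.2.5]{BR}), and the paper only comments on it in the subsequent remark. Your outline is precisely the standard Choquet--Ruelle argument that underlies that reference: produce a maximal measure with barycenter $\omega$, use the envelope characterization of maximality to get pseudosupport on the extreme boundary, then under SC show $F$ is a $G_\delta$ (hence Baire) via lower semicontinuity of $\omega\mapsto\|\omega|_{\mathcal I_n}\|$, push the mass onto $F$ by Jensen, metrize $F$ through the restriction embedding into $\prod_n \mathfrak{S}_{\mathcal I_n}$, and finish with the $G_\delta$ property of the extreme boundary in the metrizable setting together with the face identity $Ext(F)=F\cap Ext(\mathfrak{S}_{\mathfrak A})$. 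This is correct and matches the route taken in \cite{BR}.

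One point of interpretation worth flagging: the parenthetical ``determined by structures induced by the state $\omega$'' and the paper's Remark (4) after the theorem make clear that the intended measures are the \emph{orthogonal} measures coming from abelian von Neumann subalgebras of $\pi_\omega(\mathfrak A)'$ via Tomita's correspondence, not an arbitrary Choquet-maximal measure obtained from Proposition~\ref{7.13b}(3). The link is that orthogonal measures associated with \emph{maximal} abelian subalgebras are maximal in the Choquet order (\cite[\S4.2.1]{BR}), after which your argument applies verbatim; so your existence claim is consistent with, but slightly less specific than, what the theorem as stated in the paper is asserting. A minor technical remark on your embedding step: $F$ is a $G_\delta$ and typically not weak-$^*$ closed (e.g.\ the normal states on $B(\mathcal H)$), so ``injective continuous $\Rightarrow$ homeomorphism onto the image'' is not automatic; the clean way to get continuity of the inverse is to use compactness of the ambient $\mathfrak{S}_{\mathfrak A}$ to show that every weak-$^*$ cluster point of a net in $F$ whose restrictions converge must lie in $F$ and coincide with the limit, which is exactly the uniformization your approximate-identity computation is reaching for.
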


\begin{rem}
\begin{enumerate}
\item Obviously, the results stated in above  Theorem are significant for non trivial faces, i.e. when $F$ is consisting of more than one point.
\item Theorem \ref{7.22} says that the strategy described at the beginning of this section is working, i.e. a state $\omega \in F$ can be decomposed into pure states.
\item Note, we do not claim that for any $\omega \in F \subseteq \mathfrak{S}_{\mathfrak{A}}$ there is a unique measure giving the desirable decomposition. In fact, it would be equivalent that $\mathfrak A$ is abelian, cf Proposition \ref{7.13a}; see also page 360 in \cite{BR}.
\item Measures appearing in Theorem \ref{7.22} are of very special type. They are in the class of so called \textit{orthogonal measures}, see Section 4.2 in \cite{BR} for details.
\item We remind that in Examples: in \ref{sepalgebra} the set of all states; in \ref{dirac's} the subset of all normal states; in \ref{7.13} the subset of all locally normal states; all the mentioned subsets have simplified measure-theoretic description. Namely, for these subsets, weak-$*$ topology is metrizable. In particular, pseudosupporting
is equivalent to the notion of support.
\end{enumerate}
\end{rem}

Instead of describing orthogonal measures in details,  we restrict ourselves to the following enlightening example:
\begin{exmp}
\label{7.24}
To be specific, consider
$\mathfrak{A}=B(\mathcal{H})$ with finite dimensional Hilbert space $\mathcal H$, thus dim$\mathcal{H}< \infty$. Any state $\omega$ on $\mathfrak{A}$, has the following form: $\omega(\cdot)=Tr(\rho \ \cdot)$ where $\rho$ is a positive operator on $\mathcal H$ with trace equal to $1$, $Tr \rho = 1$ (a density matrix). Now, we wish to construct GNS
representation of $B(\mathcal{H})$ induced by the state $\omega$ (cf Section 2.3.3 in \cite{BR}). To this end,
$B(\mathcal{H})$ is equipped with the scalar product
$$B(\mathcal{H}) \times B(\mathcal{H}) \ni \left\langle A,B\right\rangle \mapsto \Tr A^*B \in \mathbb{C} $$
The completion of $B(\mathcal{H})$ with respect to the above scalar product gives the Hilbert space $\mathcal{H}_{\omega}.$
It is worth noting that in the Hilbert space $\mathcal{H}_{\omega}$, there exists specific vector $\rho^{\frac{1}{2}} \in \mathcal{H}_{\omega}$,
such that the set $\{ A \rho^{\frac{1}{2}}; \quad A \in B(\mathcal H) \}$ is dense in $\mathcal{H}_{\omega}.$
Such vector is called the cyclic one and will be denoted by $\Omega_{\omega}$. Moreover,
\begin{equation}
\omega(A)=(\Omega_{\omega},A\Omega_{\omega}) \equiv \Tr \rho^{\frac{1}{2}} A \rho^{\frac{1}{2}},
\end{equation}
where $A \in B(\mathcal H)$.
Let us define two (different in general) representations of $B(\mathcal H)$
\begin{equation}
\pi_l(A)\psi=A\psi
\end{equation}
and
\begin{equation}
\pi_r(A)\psi=\psi A
\end{equation}
where $A \in B(\mathcal H)$, $\psi \in \mathcal{H}_{\omega}$.
Thus, the representation $\pi_l$ is defined as the left multiplication while $\pi_r$ is defined as the right multiplication.
Let us observe that $\pi_l(B(\mathcal{H}))\equiv \{ \pi_l(A); A \in B(\mathcal{H})\}$ and similarly  defined 
$\pi_r(B(\mathcal{H}))$ are subsets of $B(\mathcal{H}_{\omega})$.
One can say more. Namely $\pi_l(B(\mathcal{H}))$ and $\pi_r(B(\mathcal{H}))$ commute with  each other as operators in $B(\mathcal{H}_{\omega})$, i.e. $AB = BA$ for $A \in \pi_l(B(\mathcal{H}))$ and $B \in \pi_r(B(\mathcal{H}))$.
To fix terminology, the triple $(\mathcal{H}_{\omega}, \pi_{\omega}(\cdot), \Omega_{\omega})$ is called the GNS representation, where to follow the standard notation we put $\pi_{\omega}(\cdot) = \pi_l(\cdot)$. 

Let $\mathfrak B$ be the abelian subalgebra in $\pi_r(B(\mathcal{H}))$ generated by the maximal collection of pairwise orthogonal projectors $\{p^{\pi}_i\}_i$ in $\pi_r(B(\mathcal{H}))$, i.e. 
$\{p^{\pi}_i\}_i$ is the subset of $ \pi_r(B(\mathcal{H}))\subset \pi_l(B(\mathcal{H}))'$ such 
that $p^{\pi}_ip^{\pi}_j=\delta_{ij}p^{\pi}_i$, $p^{\pi}_i = (p^{\pi}_i)^*$  and $\sum p^{\pi}_i=\mathds{1}$. $\pi_l(B(\mathcal{H}))'$ stands for the set of all operators in $B(\mathcal{H}_{\omega})$ which commute with operators in $\pi_l(B(\mathcal{H}))$
while maximality means that each $p^{\pi}_i$ can be taken as $p^{\pi}_i = \pi_r(|e_i><e_i|) \equiv \pi_r(p_i)$, where $e_i$ is a vector of norm one in $\mathcal H$, and finally the family of vectors $\{e_i\}$ forms a basis in $\mathcal H$.

One can observe that

\begin{equation}
\begin{split}
\omega(A)&= \Tr \left(\rho^{\frac{1}{2}}\cdot A\rho^{\frac{1}{2}}\right)=\left(\mathds{1}\rho^{\frac{1}{2}},\ A\rho^{\frac{1}{2}}\right)=\left(\left(\sum_ip^{\pi}_i\right)\rho^{\frac{1}{2}},\ A\rho^{\frac{1}{2}}\right)\\
&=\sum_i\left(\rho^{\frac{1}{2}}p_i,\ A\rho^{\frac{1}{2}}\right)=\sum_iTr\ p_i\rho^{\frac{1}{2}}A\rho^{\frac{1}{2}}
\end{split}
\end{equation}

Define $\widetilde{\omega_i}(A)=Tr\ p_i\rho^{\frac{1}{2}}A\rho^{\frac{1}{2}}$. 
Trace of a positive operator is a positive number, so $\widetilde{\omega_i}(\cdot)$ is a positive form.
$\widetilde{\omega_i}(\mathds{1})=Tr\ p_i\mathds{1}\rho=\lambda_i\geq 0$. 
Consequently, defining 
$\omega_i(\cdot)=\frac{1}{\lambda_i}\widetilde{\omega_i}(\cdot)$ for those $i$ that $\lambda_i \neq 0$  we have the following decomposition of a given state 
 
\begin{equation}
\omega (\cdot)=\sum_i\lambda_i\omega_i(\cdot)
\end{equation}
\end{exmp}
\begin{rem}
\begin{enumerate}
\item
The presented concept of decomposition (based on orthogonal measures) can be viewed as a generalization of spectral decomposition. 
\item The strategy of the construction of orthogonal measures stems from the deep Tomita's result relating existence of such measure with certain abelian subalgebra, see Sections 4.1.3 and 4.2.1 in \cite{BR}.
Consequently, the above recipe provides, in general, many such measures.
\end{enumerate}
\end{rem}
To sum up this section: as the family of states in quantum mechanics does not form a simplex, a state can be decomposed in many ways. A decomposition of a state can be realized using measure-theoretical approach
and \textit{relevant results were provided}. We noted that extreme points of some subsets of states can exhibit ``bad'' measure-theoretic properties.
To avoid such cases, an auxiliary condition, separability condition SC, should be imposed. Fortunately, all essential physical models, which were described in this section, satisfy this condition. Consequently, the program of decomposing of states can be carried out and the main tool is provided by Theorem \ref{7.22}.

\section {Quantum correlations}

In this section we provide  a detailed exposition of quantization of the principal measure of correlations, correlation coefficient, see Section 2. In particular, to measure quantum correlations, we will define and study the coefficient of quantum correlations. Moreover, we will indicate why the techniques described in Sections 4-7 are indispensable for that purpose.

But before going into details we emphasize that existence of correlations in the quantum theory, likewise in the classical case, is not equivalent to the existence of causal relations, cf Introduction. In other words,
correlation between two variables is not a sufficient condition 
to establish a causal relationship. It is worth citing Dirac's point of view on this question, see page 4 in \cite{dirac}:
\textit{Causality applies 
only to a system which is left undisturbed. If a system is small we can not observe it 
without producing serious disturbance and hence we can not expect to find any causal 
connection between results of our observations. Causality will still be assumed to apply to 
undisturbed system and the equations which will be set up to describe an undisturbed 
system will be differential equations expressing a causal connection between conditions at
one time and conditions at a latter time}.

The important point to note here is that an analysis of causality demands \textit{equations expressing a causal connection}.
To this end, one should say what an effect of propagation is like. On the other hand, in classical probability theory, cf Section 2, to handle (classical) correlations, such additional requirement is not necessary. We will adopt this point of view as the important hint for the quantization, and therefore we will study correlations only.

As the first step of quantization procedure we recall that the correlation coefficient for the classical case was given as (cf Definition \ref{2.11}.)

$$
C(X,Y)=\frac{E(XY)-E(X)E(Y)}{(E(X^2)-E(X)^2)^\frac{1}{2}(E(Y^2)-E(Y)^2)^\frac{1}{2}}
$$
 Let us rewrite this definition in the quantum context. Let a $C^*$-algebra  $\mathfrak A$ be a specific algebra of observables (cf Rule 4.1), 
 $\varphi$ a state on $\mathfrak A$ (cf Rule 4.3), and $A,A' \in \mathfrak{A}$ be observables. Let us replace the classical expectation value $E(X)$ by the quantum one $\varphi(A) \equiv \left\langle A\right\rangle$. We note

\begin{equation}
\left\langle \left(A- \left\langle A \right\rangle \right)^2 
\right\rangle = \left\langle A^2  \right\rangle -   \left\langle A  \right\rangle^2
\end{equation}
and
\begin{equation}
\left\langle \left(A- \left\langle A \right\rangle \right) \left( A'- \left\langle A' \right\rangle     \right) 
\right\rangle = \left\langle AA'  \right\rangle -   \left\langle A  \right\rangle         \left\langle  A'  \right\rangle 
\end{equation}
Consequently, in the quantum context, one can write
\begin{equation}
C_q(A,A')=\frac{\left\langle \left(A- \left\langle A \right\rangle \right) \left( A'- \left\langle A' \right\rangle     \right) \right\rangle } 
{\left\langle \left(A- \left\langle A \right\rangle \right)^2 \right\rangle^\frac{1}{2}
\left\langle \left(A'- \left\langle A' \right\rangle \right)^2 \right\rangle ^\frac{1}{2}}
\end{equation}
and this form of correlation coefficient agrees with that given in Omn\'es  book, see page 363 in \cite{omnes}. Finally, we note that an application of Schwarz inequality shows that $C_q(A,A') \in [-1, +1]$, so $C_q(A,A')$ is normalized.

As the second step we wish to show that the correlation coefficient, $C_q(A,A')$, can recognize the ``very entangled'' states. Namely one has, see page 363 in \cite{omnes}:

\begin{exmp}
Let us consider the composite system such that its algebra of observables is given by $B(\mathcal{H}) \otimes B(\mathcal{H})$ and we take a state $\varphi$ of the form $\varphi(\cdot) = \Tr (\varrho \ \cdot)$,
where $\varrho$ is a density matrix (on the Hilbert space $\mathcal{H} \otimes \mathcal{H}$). Let us select two observables of the form
$A=0\cdot P_{e_0}+1 \cdot P_{e_1}$ and the same for $A'$, where $P_e$ stands for the orthogonal projector on the vector $e$; to shorten notation we write $A$ instead of $A\otimes \mathds{1}$ and $A'$ instead of $\mathds{1}\otimes A' $. 
We want to find a special state which gives maximal value of $C_q$. Observe that the condition 
$C_q=1$ gives
\begin{eqnarray}\nonumber
0&=&\left(\left\langle A^2\right\rangle -2\left\langle A\right\rangle^2 + \left\langle A\right\rangle^2 \right)
\left(\left\langle (A')^2\right\rangle -2\left\langle A'\right\rangle^2 + \left\langle A'\right\rangle^2 \right)\\
&&-\left[ \left\langle AA'\right\rangle  -2\left\langle A\right\rangle \left\langle A'\right\rangle+ \left\langle A\right\rangle  \left\langle A'\right\rangle   \right]^2
\end{eqnarray}
or equivalently for our choice of $A$ and $A'$ ($A^2 = A$, etc)
\begin{equation}
0=\left\langle A\right\rangle \left\langle A'\right\rangle\left[ 1-  \left\langle A\right\rangle   -  \left\langle A'\right\rangle 
+2 \left\langle AA'\right\rangle   \right]
-  \left\langle AA'\right\rangle^2                                     \label{cor}
\end{equation}
Let us adopt the following convention: $\rho_{ij',kl'}=
\left\langle ij'|\rho| kl'\right\rangle$. Assuming additionally that $dim \mathcal{H} = 2$, so considering two dimensional case, one has  $\Tr\rho_{\varphi}AA'=\rho_{11,11}$, 
$\Tr\rho_{\varphi}\mathds{1}\otimes A'=(\rho_{11,11}+\rho_{01,01})$, 
$\Tr\rho_{\varphi}A\otimes \mathds{1}=(\rho_{11,11}+\rho_{10,10})$. The formula 
(\ref{cor}) can be rewritten as

\begin{equation}
0=  \left(\rho_{11,11}+\rho_{10,10}\right) \left(\rho_{11,11}+\rho_{01,01}\right) 
\left(1-\rho_{10,10}-\rho_{01,01}\right) -\rho_{11,11}^2   \label{cor2}                      
\end{equation}
One can define maximally entangled state by
\begin{equation}
\Psi=\frac{1}{\sqrt{2}}\left(\left|10 \right\rangle- \left|01\right\rangle   \right)
\equiv \frac{1}{\sqrt{2}}(e_1 \otimes e_0 - e_0 \otimes e_1).
\end{equation}
We put $\rho_{\Psi}=\left|\Psi \right\rangle \left\langle \Psi\right|$. Then, 
$(\rho_{\Psi})_{11,11}=0=(\rho_{\Psi})_{00,00}$, $(\rho_{\Psi})_{01,01}=\frac{1}{2}$, $(\rho_{\Psi})_{10,10}=\frac{1}{2}$.
Obviously, (\ref{cor2}) is fulfilled for the state $\rho_{\Psi}$. Thus, the state $\rho_{\Psi}$, where $\Psi$ is a maximally entangled vector, gives an example of maximal correlation coefficient for the observables $A$ and $A'$, i.e. $C_q(A,A') =1 $.
\end{exmp}\noindent
The next example is:

\begin{exmp}
Let $\omega$ be a separable state on $\mathfrak{A} = \mathfrak{A}_1 \otimes \mathfrak{A}_2$,
$$ \omega(\cdot) =   \sum_{i=1}^{n} \lambda_i (\omega^1_i \otimes \omega^2_i)(\cdot),   $$
where $\omega^k_i$, $k=1,2$, $i=1,2,...,n$ are states in $\mathfrak{S}_{\mathfrak{A}_k}$.
It is a simple matter to check that, in general,
$$\omega(A_1 \otimes \mathds{1} \cdot \mathds{1} \otimes A_2)\neq \omega(A_1 \otimes \mathds{1})\omega(\mathds{1} \otimes A_2) $$
for $A_k \in \mathfrak{A}_k$. Therefore, the state $\omega$ contains some correlations. However, as the state $\omega$ is separable one, these correlations are considered to be of classical nature only.
\end{exmp}

To sum up, the straightforward quantization of the correlation coefficient gives a device for finding the size of correlations. BUT, the coefficient $C_q$ is not able to distinguish correlations of quantum nature from that of classical nature. Thus, a new measure of quantum correlation should be introduced. This will be done by defining the coefficient of quantum correlations, see \cite{Maj2003}, \cite{Maj2004}. The basic idea to define ``pure'' quantum correlations is to ``subtract'' classical correlations. In other words we will look for the best approximation of a given state $\omega$ by separable states, cf Example \ref{exm6.1}. However, a given state $\omega$, in general, can possess various decompositions. Thus, to carry out the analysis of such approximations we should use the decomposition theory, cf Section 7.

Having clarified the basic idea we proceed with the study of coefficient of (quantum) correlations for a quantum composite system 
specified by $(\mathfrak{A} =  \mathfrak{A}_1 \otimes \mathfrak{A}_2, \mathfrak{S}_{\mathfrak{A}} )$,
where $\mathfrak{A}_i$ are $C^*$-algebras. Thus we will consider $C^*$-algebra case (cf Definition \ref{6.1}).
We begin with the definition of the
restriction maps 

\begin{equation}
(r_1\omega)(A)=\omega(A\otimes \mathds{1}) 
\end{equation}
\begin{equation}
 (r_2\omega)(B)=\omega(\mathds{1}\otimes B),
\end{equation}
where $\omega \in \mathfrak{S}_{\mathfrak{A}}$, $A \in \mathfrak{A}_1$, and $B \in \mathfrak{A}_2$. Clearly,
$r_i:\mathfrak{S}_{\mathfrak{A}}\rightarrow\mathfrak{S}_{\mathfrak{A}_i}$ and the restriction map $r_i$ is continuous (in weak-$*$ topology), $i =1,2$, see Proposition 4.1.37 in \cite{BR}.
We recall that the decomposition procedure is starting with a ``good'' measure on the state space $\mathfrak S$ (so from $M_{\omega}(\mathfrak{ S})$). Hence, 
let us take a ``good'' measure $\mu$ on $\mathfrak{S}_{\mathfrak{A}}$. Define 
\begin{equation}
\label{69}
\mu_i(F_i)=\mu(r_i^{-1}(F_i))
\end{equation}
 for $i=1,2$, where $F_i$ is a Borel subset in $\mathfrak{S}_{\mathfrak{A}_i}$.
 It is easy to check that the formula (\ref{69}) provides the well defined measures $\mu_i$ on $\mathfrak{S}_{\mathfrak{A}_i}$, $i=1,2$.
Having two measures $\mu_1$, $\mu_2$ on $\mathfrak{S}_{\mathfrak{A}_1}$, and $\mathfrak{S}_{\mathfrak{A}_2}$ respectively, we want to "produce" a new measure $\boxtimes \mu$
on $\mathfrak{S}_{\mathfrak{A}_1}\times \mathfrak{S}_{\mathfrak{A}_2}$. 
To this end, firstly, 
let us consider the case of finitely supported probability measure $\mu$:
\begin{equation}
\mu=\sum_{i=1}^N\lambda_i \delta_{\rho_i}
\end{equation}
where $\lambda_i\geq0$, $\sum_{i=1}^N\lambda_i=1$, and $\delta_{\rho_i}$ denotes the Dirac's measure. 
We define
\begin{equation}
\label{71}
\mu_1=\sum_{i=1}^N\lambda_i\delta_{r_1 \rho_i}
\end{equation}
and 
\begin{equation}
\label{72}
\mu_2=\sum_{i=1}^N\lambda_i\delta_{r_2 \rho_i}.
\end{equation}
Then
\begin{equation}
\label{73}
\boxtimes\mu=\sum_{i=1}^N\lambda_i\delta_{r_1 \rho_i}\times \delta_{r_2 \rho_i} 
\end{equation}
provides a well defined  measure on $\mathfrak{S}_{\mathfrak{A}_1}\times \mathfrak{S}_{\mathfrak{A}_2}$. Here $\mathfrak{S}_{\mathfrak{A}_1}\times \mathfrak{S}_{\mathfrak{A}_2}$ is understood as a measure space obtained as a product of two measure spaces $\mathfrak{S}_{\mathfrak{A}_1}$ and $\mathfrak{S}_{\mathfrak{A}_2}$. A measure structure on 
$\mathfrak{S}_{\mathfrak{A}_i}$ is defined as the Borel structure determined by the corresponding weak-$*$ topology on $\mathfrak{S}_{\mathfrak{A}_i}$, $i =1,2$.

An arbitrary fixed decomposition of a state $\omega \in \mathfrak{S}_{\mathfrak{A}}$ corresponds to a measure $\mu$ such that
$\omega = \int_{\mathfrak S} \nu d\mu(\nu)$. As there are, in general, many decompositions (cf Section 7) we will be interested in measures from the following set
$$M_{\omega}(\mathfrak{S}_{\mathfrak{A}})\equiv M_{\omega} = \{ \mu : \omega = \int_{\mathfrak S} \nu d\mu(\nu) \},$$
i.e. the set of all Radon probability measures on $\mathfrak{S}_{\mathfrak A}$
 with the fixed barycenter $\omega$. Take an arbitrary measure $\mu$ from $M_{\omega}$. 
 By Theorem \ref{2.14} (cf also Remark \ref{2.15}) there exists a net of discrete measures (having a finite support) $\mu_k$ such that $\mu_k \to \mu$, and the convergence is understood in the  weak-$^*$ topology on $\mathfrak{S}_{\mathfrak{A}}$.
 
 Defining $\mu_1^k$ ($\mu_2^k$) analogously as $\mu_1$ ($\mu_2$ respectively; cf equations (\ref{71}), (\ref{72})),
 one has $\mu_1^k \to \mu_1$ and $\mu_2^k 
 \to \mu_2$, where again the convergence is taken in the weak-$^*$ topology on $\mathfrak{S}_{\mathfrak{A}_1}$ ($\mathfrak{S}_{\mathfrak{A}_2}$ respectively).
 To see this, note that $\mu_k \to \mu$ means that for any continuous function $f \in C(\mathfrak{S}_{\mathfrak A})$,
 \begin{equation}
 \label{trick}
  \mu_k(f) \to \mu(f).
  \end{equation} 
  But note, that $g \circ r_i$ is in
 $C(\mathfrak{S}_{\mathfrak A})$ for any $g \in C(\mathfrak{S}_{{\mathfrak A}_i})$. Thus, plugging
 $f= g \circ r_i$ in (\ref{trick}) one gets the convergence of $\mu_i^k$.
 
 Then define, for each $k$, $\boxtimes\mu^k$ as it was done in (\ref{73}).
 We can verify that $\left\{\boxtimes\mu^k\right\}$ is convergent (in weak $*$-topology) to a measure on  $\mathfrak{S}_{\mathfrak{A}_1}\times \mathfrak{S}_{\mathfrak{A}_2}$.
 To see this, take a continuous function $g$ on 
 $\mathfrak{S}_{\mathfrak{A}_1}\times \mathfrak{S}_{\mathfrak{A}_2}$. Observe that this two variable function gives rise to the following function $g(r_1 \cdot, r_2 \cdot) = \tilde{g}(\cdot)$, and obviously $\tilde{g}$ is a continuous function on $\mathfrak{S}_{\mathfrak{A}}$. Therefore,
$$\boxtimes\mu_k(g) = \left(\sum_{i=1}^{N_k} \lambda_{i,k} \delta_{r_1\rho_{i,k}} \times \delta_{r_2\rho_{i,k}}\right)(g)= \sum_{i=1}^{N_k} \lambda_{i,k} g(r_1\rho_{i,k},r_2\rho_{i,k})$$
$$= \sum_{i=1}^{N_k} \lambda_{i,k}\tilde{g}(\rho_{i,k}) = \left(\sum_{i=1}^{N_k} \lambda_{i,k} \delta_{\rho_{i,k}}\right)(\tilde{g}),$$
and the last term is convergent, by definition, to $\mu(\tilde{g})$.
 
 Consequently, taking the weak-$^*$ limit we arrive at the measure $\boxtimes \mu$ on  $\mathfrak{S}_{\mathfrak{A}_1}\times \mathfrak{S}_{\mathfrak{A}_2}$. It follows easily that $\boxtimes\mu$ does not depend on the chosen approximation procedure.

To grasp the idea which is behind the construction let us consider a very simple example:

\begin{exmp}
Let us fix a state $\omega$ and take a discrete, finite supported, measure $\mu_0$ in $M_{\omega}$; i.e.
$$\mu_0 = \sum_{i=1}^{N} \lambda_i \delta_{\rho_i},
$$
\end{exmp}\noindent
where $\lambda_i \geq0$ and $\sum_i \lambda_i = 1$. Note, that $\rho_i \in \rm{supp} \mu_0$ for any $i$.
Define, as before, $\boxtimes \mu_0 = \sum_i \lambda_i \delta_{r _1\rho_i}\times  \delta_{r _2\rho_i}$
and note that the measure $\boxtimes \mu_0$ on $\mathfrak{S}_{\mathfrak{A}_1}\times \mathfrak{S}_{\mathfrak{A}_2}$ defines the new state $\tilde{\omega}$ in the following way:
\begin{equation}
\tilde{\omega}(A_1\otimes A_2)=\int_{\mathfrak{S}_{\mathfrak{A}_1}\times \mathfrak{S}_{\mathfrak{A}_2}}{\varphi}(A_1\otimes A_2)(d\boxtimes\mu_0)(\varphi)
\end{equation}
where $\varphi \in \mathfrak{S}_{\mathfrak{A}_1}\times \mathfrak{S}_{\mathfrak{A}_2}$, i.e. $\varphi = (\varphi_1,\varphi_2)$. We have defined $\varphi(A_1 \otimes A_2)$ as $\varphi_1(A_1)\varphi_2(A_2)$
which can be considered as $(\varphi_1 \otimes \varphi_2)(A_1 \otimes A_2) \equiv \varphi(A_1 \otimes A_2)$.
Thus
\begin{equation}
\begin{split}
\tilde{\omega}(A_1\otimes A_2)&=\sum_{i=1}^N\lambda_i (r_1 \rho_i, r_2 \rho_i)(A_1\otimes A_2)\\
&=\sum_{i=1}^N\lambda_i(r_1 \rho_i)(A_1)(r_2 \rho_i)(A_2)\\
&=\sum_{i=1}^N\lambda_i (r_1 \rho_i)\otimes (r_2 \rho_i)(A_1\otimes A_2)
\end{split}
\end{equation}
Hence $\tilde{\omega}$ is a separable state which originates from the given one.\newline

Now, we are in position to give the definition of the coefficient of quantum correlations, $d(\omega, A_1,A_2) \equiv d(\omega, A) $, where $A_i \in \mathfrak{A}_i$.
\begin{defn}
Let a quantum composite system $\left(\mathfrak{A}=\mathfrak{A}_1\otimes \mathfrak{A}_2, \mathfrak{S}_{\mathfrak{A}} \right)$ be given. Take a
 $\omega\in \mathfrak{S}_{\mathfrak{A}}$. We define the coefficient of quantum correlations as
\begin{equation}
d(\omega, A)= \inf_{\mu\in M_{\omega}(\mathfrak{S}_{\mathfrak{A}})} \left|\int_{\mathfrak{S}_{\mathfrak{A}}}\xi(A) d\mu(\xi)-\int_{\mathfrak{S}_{\mathfrak{A}_1}\times \mathfrak{S}_{\mathfrak{A}_2}}\xi(A)(d\boxtimes\mu)(\xi)\right| \label{fin}
\end{equation}
\end{defn}

The formula (\ref{fin}) is a "measure" of extra non classical type of correlations. Namely, following the strategy of Kadison-Ringrose example, see Example \ref{exm6.1}, 
an evaluation of a distance between the given state $\omega$ and the set of approximative separable states is done.

It is a simple matter to see that 
$d(\omega, A)$ is equal to $0$ if the state $\omega$ is a separable one (cf arguments given prior to Theorem \ref{9.7}). The converse statement is much less obvious. However,
we are able to prove it. Namely:

\begin{thm}
\label{8.5}
Let $\mathfrak{A}$ be the tensor product of two C*-algebras $\mathfrak{A}_1$, $\mathfrak{A}_2$.
Then state $\omega \in \mathfrak{S}_{\mathfrak{A}}$ is separable if and only if $d(\omega, A)=0$ for all $A\in \mathfrak{A}_1\otimes \mathfrak{A}_2$
\end{thm}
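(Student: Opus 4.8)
The plan is to rewrite the coefficient $d(\omega, A)$ in a transparent form and then read the statement as a separation property of a convex weak-$*$ compact set of separable states. Since every $\mu \in M_{\omega}(\mathfrak{S}_{\mathfrak{A}})$ has barycenter $\omega$ and $\xi \mapsto \xi(A)$ is a weak-$*$ continuous affine function on $\mathfrak{S}_{\mathfrak{A}}$, the first integral in (\ref{fin}) equals $\omega(A)$ by Definition \ref{7.1}. Writing $\tilde{\omega}_{\mu}$ for the separable state attached to $\boxtimes\mu$ (defined for finitely supported $\mu$ as in the example above and extended to general $\mu$ by the weak-$*$ limit constructed prior to the definition of $d$), the second integral equals $\tilde{\omega}_{\mu}(A)$. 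Hence
\begin{equation}
d(\omega, A) = \inf_{\mu \in M_{\omega}}\left| \omega(A) - \tilde{\omega}_{\mu}(A)\right|,
\end{equation}
and every $\tilde{\omega}_{\mu}$ lies in $\mathfrak{S}_{sep}$, so $d(\omega, A)$ measures the distance, along the single observable $A$, from $\omega$ to the separable states produced by the $\boxtimes$-construction.

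For the implication $(\Rightarrow)$ I would note that the set $P = \{\eta_1 \otimes \eta_2 : \eta_i \in \mathfrak{S}_{\mathfrak{A}_i}\}$ of product states is the continuous image of the weak-$*$ compact set $\mathfrak{S}_{\mathfrak{A}_1} \times \mathfrak{S}_{\mathfrak{A}_2}$, hence itself weak-$*$ compact. If $\omega$ is separable, then $\omega$ lies in the closed convex hull of $P$, so it is the barycenter of a probability measure $\mu$ carried by $P$. For $\mu$-almost every $\xi = \eta_1 \otimes \eta_2$ one has $r_1\xi = \eta_1$ and $r_2\xi = \eta_2$, so $r_1\xi \otimes r_2\xi = \xi$; therefore $\tilde{\omega}_{\mu}(A) = \int (r_1\xi \otimes r_2\xi)(A)\, d\mu(\xi) = \int \xi(A)\, d\mu(\xi) = \omega(A)$, and the infimum defining $d(\omega, A)$ is already attained at $0$ for every $A$.

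The substance is the converse. I would introduce $S = \{\tilde{\omega}_{\mu} : \mu \in M_{\omega}\} \subseteq \mathfrak{S}_{sep}$ and show it is convex and weak-$*$ compact. Compactness of $M_{\omega}$ holds because it is the weak-$*$ closed fibre $b^{-1}(\{\omega\})$ inside the weak-$*$ compact set of probability measures on $\mathfrak{S}_{\mathfrak{A}}$, and convexity is immediate since the barycenter map is affine. The map $\Phi:\mu \mapsto \tilde{\omega}_{\mu}$ is affine, and for each fixed $A$ the function $\mu \mapsto \tilde{\omega}_{\mu}(A) = \int (r_1\xi \otimes r_2\xi)(A)\, d\mu(\xi)$ is weak-$*$ continuous, since $\xi \mapsto (r_1\xi \otimes r_2\xi)(A)$ is a bounded continuous function on $\mathfrak{S}_{\mathfrak{A}}$ (the maps $r_i$ are weak-$*$ continuous, products of continuous functions are continuous, and general $A$ is reached by uniform approximation). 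Thus $S = \Phi(M_{\omega})$ is convex and weak-$*$ compact. Now assume $d(\omega, A) = 0$ for all $A$ but, for contradiction, $\omega \notin S$. By the Hahn-Banach separation theorem in $(\mathfrak{A}^*, w^*)$ there exist a self-adjoint $A \in \mathfrak{A}$ and $\delta > 0$ with $\mathrm{Re}\,\omega(A) \geq \delta + \sup_{\psi \in S}\mathrm{Re}\,\psi(A)$. Then for every $\mu \in M_{\omega}$,
\begin{equation}
\left| \omega(A) - \tilde{\omega}_{\mu}(A)\right| \geq \mathrm{Re}\!\left(\omega(A) - \tilde{\omega}_{\mu}(A)\right) \geq \delta,
\end{equation}
so $d(\omega, A) \geq \delta > 0$, a contradiction. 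Hence $\omega \in S \subseteq \mathfrak{S}_{sep}$.

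The main obstacle I anticipate is not the separation step, which is routine once $S$ is known to be convex and weak-$*$ compact, but the identification $S \subseteq \mathfrak{S}_{sep}$ together with the representing measure used in $(\Rightarrow)$: both rely on the coincidence of the weak-$*$-closed and the norm-closed convex hulls of $P$, whereas Definition \ref{6.1} fixes $\mathfrak{S}_{sep}$ by the norm closure while the whole $\boxtimes$- and barycenter-machinery lives in the weak-$*$ topology. In the finite-dimensional case relevant to quantum information all these topologies agree and $P$ is compact, so the identification is automatic; in the general $C^*$-setting one must invoke the metrizability furnished by the separability condition SC (cf.\ Definition \ref{7.8} and Example \ref{sepalgebra}) to ensure that the weak-$*$ limits produced above stay inside the norm-closed set of separable states. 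Establishing this coincidence is where the real care is required.
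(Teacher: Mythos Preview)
Your argument is correct and rests on the same two ingredients the paper isolates: weak-$*$ compactness of $M_\omega(\mathfrak{S}_{\mathfrak{A}})$ and weak-$*$ continuity of $\mu \mapsto \boxtimes\mu$ (equivalently of $\mu \mapsto \tilde\omega_\mu$). The paper, having established these in its Steps 1--2, finishes in Step~3 by invoking the Weierstrass theorem---the infimum in $d(\omega,A)$ is attained at some $\mu_0$---and then reads off separability; you instead package continuity and compactness into the single statement that $S=\{\tilde\omega_\mu:\mu\in M_\omega\}$ is convex and weak-$*$ compact, and separate $\omega$ from $S$ by Hahn--Banach. Your finish is a little more transparent about why a \emph{single} representing $\mu_0$ works for all $A$ simultaneously, but the substance is the same. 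Your forward direction via a representing measure supported on the compact set $P$ of product states is the abstract form of the approximation the paper points to before Theorem~\ref{9.7}. The closure subtlety you raise at the end---that both routes place $\omega$ only in the weak-$*$ closed convex hull of $P$, whereas Definition~\ref{6.1} takes the norm closure---is genuine and applies equally to the paper's argument, which does not address it either.
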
\noindent
\begin{proof}
The basic idea of the proof of the statement that $d(\omega, A)=0$ implies separability of $\omega$ 
relies on the study of continuity properties of the function
\begin{equation}
\label{gwiazdka3}
M_{\omega}(\mathfrak{S}_{\mathfrak A}) \ni \mu \mapsto \int_{\mathfrak{S}_{\mathfrak{A}}}\xi(A) d\mu(\xi)-\int_{\mathfrak{S}_{\mathfrak{A}_1}\times \mathfrak{S}_{\mathfrak{A}_2}}\xi(A)(d\boxtimes\mu)(\xi)
\end{equation}
and the proof falls naturally into few steps.
\begin{enumerate}
\item $M_{\omega}(\mathfrak{S}_{\mathfrak A})$ is a compact set.

We begin by recalling that $\mathfrak{S}_{\mathfrak A}$ is a compact set ($\mathfrak A$ has the unit
$\mathds{1}$) and the set of positive Radon measures $M^+(\mathfrak{S}_{\mathfrak A})$ is also compact (all in weak-$*$ topologies!). Take $\{\mu_{\alpha}\} \subset M_{\omega}(\mathfrak{S}_{\mathfrak A})$ such that $\mu_{\alpha} \to \mu$ (weakly).
But this implies
$$ \int \hat{A}(\varphi)d\mu_{\alpha}(\varphi) = \omega(A) \equiv \hat{A}(\omega) \to \int \hat{A}(\varphi)d\mu(\varphi)$$
Thus $\int \hat{A}(\varphi)d\mu(\varphi) = \omega(A)$. Hence $\mu \in M_{\omega}(\mathfrak{S}_{\mathfrak A})$. So $M_{\omega}(\mathfrak{S}_{\mathfrak A})$ being a closed subset of a compact set $M^+(\mathfrak{S}_{\mathfrak A})$ is a compact set.
\item The mapping $M_{\omega}(\mathfrak{S}_{\mathfrak A}) \ni \mu \mapsto \boxtimes\mu \in M^+(\mathfrak{S}_{{\mathfrak A}_1} \times \mathfrak{S}_{{\mathfrak A}_2})$ is weakly continuous.

To prove this, we should show that for any $\mu_0 \in M_{\omega}(\mathfrak{S}_{{\mathfrak A}})$
and any neighborhood $V(\boxtimes\mu_0; g_1,...,g_k)$ of $\boxtimes\mu_0$ there exists a neighborhood $U(\mu_0; f_1,...,f_k)$ of $\mu_0$ such that $\boxtimes\left(U(\mu_0; f_1,...,f_k)\right)\subseteq V(\boxtimes\mu_0; g_1,...,g_k)$ where $V\equiv V(\boxtimes\mu_0; g_1,...,g_k) = \{\boxtimes\mu: |\boxtimes\mu_0(g_i) - \boxtimes\mu(g_i)| < \epsilon, i=1,...,k \}$,
$g_i \in C(\mathfrak{S}_{{\mathfrak A}_1} \times \mathfrak{S}_{{\mathfrak A}_2})$ while
$U \equiv U(\mu_0; f_1,...,f_k) = \{ \mu: |\mu_0(f_i) - \mu(f_i)|< \epsilon_1, i=1,...,k \}$, $f_i \in C(\mathfrak{S}_{{\mathfrak A}})$.

The first step of the proof is to take
$\mu_0$ and $\mu$ in $M_{\omega}(\mathfrak{S}_{{\mathfrak A}})$ such that
\begin{equation}
\label{gwiazdka}
| \mu_0(f) - \mu(f)| < \epsilon \quad {\rm{for}} \quad f \in C(\mathfrak{S}_{\mathfrak A}).
\end{equation}
So, for simplicity, we put $k=1$ in the definition of neighborhoods $U$ and $V$.
Further, note that $f$ of the form
\begin{equation}
\label{gwiazdka2}
f(\rho) = g(r_1(\rho), r_2(\rho)) \quad \rho \in \mathfrak{S}_{\mathfrak A},
\end{equation}
where $g(\cdot, \cdot)$ is a continuous (two variables) function on $\mathfrak{S}_{{\mathfrak A}_1} \times \mathfrak{S}_{{\mathfrak A}_2}$ satisfing (\ref{gwiazdka}).

Let $\mu_0^n$ ($\mu^n$) be a weak-$^*$ Riemann approximation for $\mu_0$ ($\mu$ respectively). Then
$$|\mu_0^n(f) - \mu^n(f)| \leq |\mu_0^n(f) - \mu_0(f)| + |\mu_0(f) - \mu(f)| + |\mu(f) - \mu^n(f)| < \epsilon', $$
for all $f$ of the form (\ref{gwiazdka2}).

As a next step, let us consider a sequence $\boxtimes\mu_0^n$ ($\boxtimes\mu^n$) defining $\boxtimes\mu_0$ ($\boxtimes\mu$ respectively).
Note, that for any $f$ of the form (\ref{gwiazdka2}), one has
$$|\boxtimes\mu^n_0(f) - \mu_0^n(f)| = |\sum_{i=1}^{N} \lambda_{i,n} \delta_{r_1\rho_{i,n}} \times \delta_{r_2\rho_{i,n}}(f) - \sum_{i=1}^{N} \lambda_{i,n} \delta_{\rho_{i,n}}(f)|$$
$$=|\sum_{i=1}^{N} \lambda_{i,n} g(r_1(\rho_{i,n}), r_2(\rho_{i,n})) - \sum_{i=1}^{N} \lambda_{i,n} g(r_1(\rho_{i,n}), r_2(\rho_{i,n}))| = 0,$$
where $N < \infty$, and analogously for the second sequence. Therefore for any $f$ of the form (\ref{gwiazdka2}) one has
$$|\boxtimes\mu_0(g) -\boxtimes\mu(g)| \leq |\boxtimes\mu_0(g) - \boxtimes\mu_0^n(g)| + 
|\boxtimes\mu^n_0(g) - \mu_0^n(f)| +|\mu_0^n(f) - \mu_0(f)|$$
$$+|\mu_0(f) - \mu(f)| + |\mu(f) - \mu^n(f)| + |\mu^n(f) - \boxtimes\mu^n(g)| + |\boxtimes\mu^n(g) - \boxtimes\mu(g)| < 5 \epsilon,$$
for large enough $n$. Thus we have shown that for any $g \in C(\mathfrak{S}_{{\mathfrak A}_1} \times \mathfrak{S}_{{\mathfrak A}_2})$ 
\begin{equation}
|\boxtimes\mu_0(g) -\boxtimes\mu(g)| < 5 \epsilon,
\end{equation}
provided that $|\mu_0(f) - \mu(f)| < \epsilon$.
Therefore, if $V= \{\boxtimes\mu; |\boxtimes\mu_0(g_i) - \boxtimes\mu(g_i)| < \epsilon, i = 1,...,k\}$
with $g_i \in C(\mathfrak{S}_{{\mathfrak A}_1} \times \mathfrak{S}_{{\mathfrak A}_2})$ then there exists $U= \{ \mu; |\mu_0(f_i) - \mu(f_i)| <\frac{1}{5} \epsilon, i=1,...,k \}$ with $f_i$ of the form (\ref{gwiazdka2}) such that $\boxtimes(U) \subseteq V$.
But this means the continuity of the considered mapping.
\item The continuity proved in the second step implies that the function (\ref{gwiazdka3}) is a real valued, continuous function defined on a compact space. Hence, by Weierstrass theorem, infimum is attainable. Therefore, the condition $d(\omega, A) =0$ means that
\begin{equation}
\omega(A) = \int_{\mathfrak{S}_{\mathfrak{A}}} \xi(A) d\mu_0(\xi) = \int_{\mathfrak{S}_{\mathfrak{A}_1} \times \mathfrak{S}_{\mathfrak{A}_2}} \xi(A) d \boxtimes\mu_0(\xi),
\end{equation}
for all $A = A_1 \otimes A_2$. But, this means the separability of $\omega$.
\end{enumerate} 
\end{proof}
Theorem \ref{8.5} may be summarized by saying that any separable state contains ``classical'' correlations only. Therefore, \textbf{an entangled state contains ``non-classical'' (or pure quantum) correlations}.

To comment the question of separability of normal states we have two remarks:
\begin{rem}
\begin{enumerate}
\item (\textit{indirect way})

As we have considered $C^*$-algebra case, taking a normal state $\varphi \in \mathfrak{S}_{\mathfrak M}^n \equiv \mathfrak{S}_{\mathfrak M} \cap \mathfrak{M}_* \subset \mathfrak{S}_{\mathfrak M}$, we can apply Theorem \ref{8.5} for its analysis. If $d(\varphi, A)=0$ we are getting a ``separable'' decomposition of $\varphi$. However, still one must check whether components of the decomposition are normal or not. In other words, one must examine whether the measure providing the given decomposition is supported by $\mathfrak{S}_{\mathfrak M}^n$. It is worth pointing out that Theorem \ref{7.22} provides examples of measures being supported by $Ext(\mathfrak{S}_{\mathfrak M}^n)$
(if additionally the condition SC is satisfied).
\item (\textit{a possibility for a direct way})

One can try to modify the results obtained for $C^*$-algebra case to that which are relevant for $W^*$-algebra case. However, there are two essential differences. The first is given by Definition \ref{6.1} -- the closure of convex hull should be carried out with respect to the operator space projective norm topology. 

The second difference leads to a great problem. Namely $\mathfrak{S}_{\mathfrak M}^n$ is compact, in general, with respect to another topology than that which gives compactness of $\mathfrak{S}_{\mathfrak M}$. To illustrate this let us consider $\mathfrak{M} = B(\cal H)$, where $\cal H$ is an infinite dimensional Hilbert space. Then $\mathfrak{S}_{B(\cal H)}^n$ is a compact subset of $\cal{F}_T(\cal H)$ when it is equipped with $\sigma({\cal F}_T({\cal H}), {\cal F}_C(\cal H))$-topology. $\mathfrak{S}_{B(\cal H)}$ is compact with respect to $\sigma(B({\cal H})^*, B(\cal H))$-topology. Moreover, although the restriction $(r\omega)(A) = \omega(A \otimes \mathds{1})$, where $\omega \in \left(B({\cal H} \otimes B(\cal H)\right)^*$
is also well defined for a density matrix (it is given by the partial trace) the restriction $r$ is not, in general, $\sigma({\cal F}_T({\cal H} \otimes {\cal H}), {\cal F}_C({\cal H} \otimes {\cal H}))$ -- $\sigma({\cal F}_T({\cal H}), {\cal F}_C({\cal H}))$ continuous. As the continuity of the restriction map $r$ was crucial, the $C^*$-algebra case can not be straightforwardly modified.
\end{enumerate}
\end{rem}\noindent
We wish to end this section with another remark
\begin{rem}
Coefficient of quantum correlations yield information about quantum correlations. Therefore, it makes
legitimate study such correlation measures in an analysis of quantum stochastic dynamics. This question will be studied in the last section.
\end{rem}

\section {Entanglement of Formation}

$EoF$ - entanglement of formation is the second mathematical 'tool' for an analysis of entanglement that  we wish to consider. It is designed to separate separable states from entangled states and was introduced by Bennett, DiVincenzo, Smolin and Wooters in \cite{BDVSW} for finite dimensional case. The general definition of $EoF$ for quantum systems (so for infinite dimensional cases) appeared in \cite{Maj2002}.

The basic idea stems from the following observation. Let $\omega$ be a separable state on a quantum composite system 
specified by $\mathfrak{A} = \mathfrak{A}_1 \otimes \mathfrak{A}_2$. Decompose $\omega$ into pure states and apply the restriction map $r_1: \mathfrak{S}_{\mathfrak A} \to \mathfrak{S}_{\mathfrak{A}_1}$, given by $r_1\omega(A) = \omega(A\otimes\mathds{1})$, to each component of the decomposition. Let $\mathds{F}$ be a function defined on the set of states such that it takes the value $0$ only on pure states. Then, applying $\mathds{F}$ to the restriction of each component one gets an indicator of separability.

The important point to note here is that the restriction of a pure state is a pure one only for certain exceptional cases (cf Fact 
\ref{6.4}). To clarify this question we provide relevant results. The first one is (see Lemma 4.11 in \cite{takesaki}):

\begin{prop}
\label{9.1}
Let
$\mathfrak{A}=\mathfrak{A}_1\otimes \mathfrak{A}_2$, where $\mathfrak{A}_i$, $i=1,2$ is a $C^*$-algebra, and the state $\omega\in \mathfrak{S}_{\mathfrak{A}}$ be given. 
Denote by $r_1\omega$  a restriction of state $\omega$ to $\mathfrak{A}_1$ (we identify $\mathfrak{A}_1$ with $\mathfrak{A}_1\otimes \mathds{1}_2$).
Assume that $r_1\omega$ is a pure state. Then, $\omega(AB)=\omega(A)\omega(B)$ when $A\in \mathfrak{A}_1$ and
$B\in \mathfrak{A}_2$.
\end{prop}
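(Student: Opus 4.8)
The plan is to exploit the characterization of pure states recalled earlier in the excerpt: a positive functional majorized by a pure state must be a nonnegative scalar multiple of it. The idea is to ``slice'' $\omega$ by a fixed observable of the second factor and to recognize the resulting functional on $\mathfrak{A}_1$ as dominated by the pure restriction $r_1\omega$.

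Concretely, I would fix $B \in \mathfrak{A}_2$ with $0 \le B \le \mathds{1}$ and define $\omega_B(A) = \omega(A \otimes B)$ for $A \in \mathfrak{A}_1$ (using the identification $A \equiv A \otimes \mathds{1}$). First one checks that $\omega_B$ is a positive linear functional on $\mathfrak{A}_1$: linearity is clear, and for $A \ge 0$ the elementary tensor $A \otimes B$ is positive in the $C^*$-tensor product (write $A = c^*c$, $B = d^*d$, so that $A \otimes B = (c \otimes d)^*(c \otimes d) \ge 0$), whence $\omega_B(A) = \omega(A\otimes B) \ge 0$. The crucial observation is then that $\omega_B$ is majorized by $r_1\omega$: for $A \ge 0$ one has $r_1\omega(A) - \omega_B(A) = \omega\bigl(A \otimes (\mathds{1} - B)\bigr) \ge 0$, since $A$ and $\mathds{1} - B$ are both positive and hence so is their tensor product. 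Thus $r_1\omega - \omega_B$ is a positive functional.

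Since $r_1\omega$ is assumed to be a pure state, the quoted characterization forces $\omega_B = \lambda(B)\, r_1\omega$ for some scalar $\lambda(B) \ge 0$ (the value $\lambda(B)=0$ covering the trivial case $\omega_B = 0$). Evaluating both sides at the unit $A = \mathds{1}_1$ and using $r_1\omega(\mathds{1}_1) = \omega(\mathds{1}) = 1$ identifies the scalar as $\lambda(B) = \omega_B(\mathds{1}_1) = \omega(\mathds{1} \otimes B)$. Substituting back yields
\begin{equation}
\omega(A \otimes B) = \omega(A \otimes \mathds{1})\,\omega(\mathds{1} \otimes B)
\end{equation}
for every $A \ge 0$ in $\mathfrak{A}_1$ and every $B$ with $0 \le B \le \mathds{1}$ in $\mathfrak{A}_2$.

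It remains to remove the positivity and normalization restrictions on $A$ and $B$. This is routine: every self-adjoint element is a difference of two positive elements, every element is a combination of two self-adjoint ones, and any positive element is a nonnegative multiple of one lying in $[0,\mathds{1}]$. Since both sides of the displayed identity are separately linear in $A$ and in $B$, the factorization extends to all $A \in \mathfrak{A}_1$ and all $B \in \mathfrak{A}_2$, which is precisely the asserted identity $\omega(AB) = \omega(A)\omega(B)$. I expect no serious obstacle here; the only point demanding care is the positivity of the elementary tensors $A \otimes B$ and $A \otimes (\mathds{1}-B)$ in the completed $C^*$-tensor product, and the real engine of the argument is the pure-state domination step applied to the slice $\omega_B$.
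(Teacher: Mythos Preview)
Your argument is correct and is precisely the standard proof. The paper itself does not supply a proof of this proposition but cites Lemma~4.11 in Takesaki's book; the argument there is exactly the one you give---slice $\omega$ by a fixed $0\le B\le\mathds{1}$ in $\mathfrak{A}_2$, observe that the resulting positive functional on $\mathfrak{A}_1$ is majorized by the pure state $r_1\omega$, invoke the purity characterization to get a scalar multiple, identify the scalar by evaluating at $\mathds{1}_1$, and extend by linearity.
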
\noindent
Thus, the purity of $r_1\omega$ implies the factorization of $\omega$. The second result is (see Theorem 4.14 in \cite{takesaki}):

\begin{thm}
For two C*-algebras $\mathfrak{A}_1$ and $\mathfrak{A}_2$ the following conditions are equivalent
\begin{enumerate}
\item Either $\mathfrak{A}_1$ or $\mathfrak{A}_2$ is abelian
\item Every pure state $\omega$ on $\mathfrak{A}_1\otimes \mathfrak{A}_2$ is of the form $\omega=\omega_1\otimes \omega_2$ 
for some pure states $\omega_i$ on $\mathfrak{A}_i$, $i=1,2$.
\end{enumerate}
\end{thm}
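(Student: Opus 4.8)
The plan is to prove the two implications separately, using the GNS construction together with the factorization criterion of Proposition~\ref{9.1}.

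For $(1)\Rightarrow(2)$, assume without loss of generality that $\mathfrak{A}_2$ is abelian, and let $\omega$ be a pure state on $\mathfrak{A}_1\otimes\mathfrak{A}_2$. I would pass to the GNS triple $(\mathcal{H}_\omega,\pi_\omega,\Omega_\omega)$; purity of $\omega$ is equivalent to irreducibility of $\pi_\omega$, so $\pi_\omega(\mathfrak{A}_1\otimes\mathfrak{A}_2)'=\mathbb{C}\mathds{1}$. The key observation is that $\pi_\omega(\mathds{1}\otimes\mathfrak{A}_2)$ is a commutative family of operators (as $\mathfrak{A}_2$ is abelian) which, by the very definition of the tensor product, commutes with $\pi_\omega(\mathfrak{A}_1\otimes\mathds{1})$. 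Hence it commutes with the $*$-algebra generated by $\pi_\omega(\mathfrak{A}_1\otimes\mathds{1})$ and $\pi_\omega(\mathds{1}\otimes\mathfrak{A}_2)$, which is norm-dense in $\pi_\omega(\mathfrak{A})$, so that $\pi_\omega(\mathds{1}\otimes\mathfrak{A}_2)\subseteq\pi_\omega(\mathfrak{A})'=\mathbb{C}\mathds{1}$. Thus there is a character $\chi$ of $\mathfrak{A}_2$ with $\pi_\omega(\mathds{1}\otimes B)=\chi(B)\mathds{1}$, whence either directly, since $\omega(A\otimes B)=(\Omega_\omega,\pi_\omega(A\otimes\mathds{1})\chi(B)\Omega_\omega)=\chi(B)\,r_1\omega(A)$, or by invoking Proposition~\ref{9.1} with the roles of the two factors interchanged (note $r_2\omega=\chi$ is multiplicative, hence pure), one gets $\omega=r_1\omega\otimes r_2\omega$. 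Finally $r_1\omega$ must be pure, since a nontrivial decomposition of $r_1\omega$ would induce one of $\omega$, contradicting purity.

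For $(2)\Rightarrow(1)$ I would argue by contraposition: assuming that neither $\mathfrak{A}_1$ nor $\mathfrak{A}_2$ is abelian, I construct a pure state that is not a product of pure states. A non-abelian $C^*$-algebra admits an irreducible representation $\pi_i:\mathfrak{A}_i\to B(\mathcal{H}_i)$ with $\dim\mathcal{H}_i\geq 2$. Then $\pi_1\otimes\pi_2$ is an irreducible representation of $\mathfrak{A}_1\otimes\mathfrak{A}_2$ on $\mathcal{H}_1\otimes\mathcal{H}_2$, so each of its vector states is pure. I then choose an entangled unit vector, e.g. $\xi=\tfrac{1}{\sqrt{2}}(e_1\otimes f_1+e_2\otimes f_2)$, and set $\omega(\cdot)=(\xi,(\pi_1\otimes\pi_2)(\cdot)\,\xi)$. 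If $\omega$ were of product form it would equal $r_1\omega\otimes r_2\omega$, but $r_1\omega$ is the state $A\mapsto\Tr(\varrho_1\pi_1(A))$ determined by the reduced density matrix $\varrho_1=\Tr_{\mathcal{H}_2}|\xi\rangle\langle\xi|$, which has rank $2$; since $\pi_1(\mathfrak{A}_1)''=B(\mathcal{H}_1)$, distinct density matrices give distinct states on $\mathfrak{A}_1$, so a rank-$2$ decomposition of $\varrho_1$ yields a nontrivial decomposition of $r_1\omega$. This is precisely the phenomenon already recorded in Fact~\ref{6.4} and Example~\ref{exm6.1}. Hence $r_1\omega$ is mixed, so $\omega$ cannot be a product of pure states, violating $(2)$.

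The main obstacle I expect lies in the second implication, namely in justifying that $\pi_1\otimes\pi_2$ is irreducible. This rests on the commutation theorem for tensor products of von Neumann algebras, $(\pi_1(\mathfrak{A}_1)''\otimes\pi_2(\mathfrak{A}_2)'')'=\pi_1(\mathfrak{A}_1)'\,\overline{\otimes}\,\pi_2(\mathfrak{A}_2)'$, which for irreducible $\pi_i$ collapses to $\mathbb{C}\mathds{1}$; I would invoke this, together with the double-commutant theorem to identify $\pi_i(\mathfrak{A}_i)''=B(\mathcal{H}_i)$, rather than reprove it. A secondary technical point is the careful identification of normal pure states of an irreducible representation with rank-one density matrices, so that the passage from ``$\varrho_1$ has rank $\geq 2$'' to ``$r_1\omega$ is mixed'' is rigorous; this uses the weak-$*$ density of $\pi_1(\mathfrak{A}_1)$ in $B(\mathcal{H}_1)$.
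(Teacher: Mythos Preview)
The paper does not contain a proof of this theorem: it is quoted verbatim with the citation ``see Theorem 4.14 in \cite{takesaki}'' and used as a black box. There is therefore no in-paper argument to compare against; your proposal supplies a proof where the paper offers none.

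Your argument is correct and is essentially the standard one (close to Takesaki's). For $(1)\Rightarrow(2)$ the reasoning is clean: abelianness of $\mathfrak{A}_2$ forces $\pi_\omega(\mathds{1}\otimes\mathfrak{A}_2)$ into the commutant of an irreducible representation, hence into the scalars, and the rest follows. For $(2)\Rightarrow(1)$ your contrapositive via an entangled vector state in $\mathcal{H}_1\otimes\mathcal{H}_2$ is the right idea, and you correctly flag the commutation theorem $(B(\mathcal{H}_1)\overline{\otimes}B(\mathcal{H}_2))'=\mathbb{C}\mathds{1}$ as the load-bearing step for irreducibility of $\pi_1\otimes\pi_2$. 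Two minor remarks: first, since the paper works with the spatial (minimal) $C^*$-tensor norm, $\pi_1\otimes\pi_2$ automatically extends to $\mathfrak{A}_1\otimes\mathfrak{A}_2$, so no extra justification is needed there; second, the passage from ``$\varrho_1$ has rank $2$'' to ``$r_1\omega$ is mixed'' can be done without density-matrix bookkeeping by writing $r_1\omega=\tfrac{1}{2}\omega_{e_1}+\tfrac{1}{2}\omega_{e_2}$ directly and invoking Kadison transitivity (irreducibility of $\pi_1$) to see that the vector states $\omega_{e_1},\omega_{e_2}$ are distinct on $\mathfrak{A}_1$.
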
\noindent
Thus, the restriction of a pure state is a pure state for exceptional cases only. But, in this paper, we consider a quantum composite system. It means that both subsystems are quantum. Consequently, neither $\mathfrak{A}_1$ nor
$\mathfrak{A}_2$ is abelian.

To give the third result some preliminaries are necessary. Given a pair $(\mathfrak{A}, \omega)$ consisting of a $C^*$-algebra and a state, one can associate (via GNS construction, cf Example \ref{7.24}) the
Hilbert space $\mathcal{H}_{\omega}$ and the representation $\pi_{\omega}$. The family of all bounded linear operators on $\mathcal{H}_{\omega}$, as usually, will be denoted by  $B(\mathcal{H}_{\omega})$. The commutant of $\pi_{\omega}(\mathfrak{A})$ is defined as 
$\pi_{\omega}(\mathfrak{A})'=\left\{A\in B(\mathcal{H}_{\omega}); A\pi_{\omega}(B) = \pi_{\omega}(B)A \quad
\rm{for \ all} \ B \in \mathfrak{A}\right\}$.
In the same way one can define bicommutant $\pi_{\omega}(\mathfrak{A})''=\left(\pi_{\omega}(\mathfrak{A})'\right)'$. 
$\pi_{\omega}(\mathfrak{A})''$ is said to be a factor if $\pi_{\omega}(\mathfrak{A})''\cap\pi_{\omega}(\mathfrak{A})'=\left\{\mathds{C}\mathds{1}\right\}$.

\begin{defn}
A state $\omega$ on a C*-algebra $\mathfrak{A}$ is said to be factorial if $\pi_{\omega}(\mathfrak{A})''$ is a factor. 
\end{defn}\noindent
The promised, third result is (see Proposition 4.1.37 in \cite{BR}):

\begin{pro}
Let $\mathfrak{A}_1$, $\mathfrak{A}_2$ be C*-algebras and put $\mathfrak{A}=\mathfrak{A}_1\otimes \mathfrak{A}_2$. 
Denote by $r_1$ the restriction map $r_1:\mathfrak{S}_{\mathfrak{A}}\rightarrow \mathfrak{S}_{\mathfrak{A}_1}$. $r_1$ is  weak-$*$ continuous. Moreover,  $Ext(\mathfrak{S}_{\mathfrak{A}_1}) 
\subseteq r_1(Ext(\mathfrak{S}_{\mathfrak{A}}))\subseteq F_{\mathfrak{A}_1}$ where $F_{\mathfrak{A}_1}$ stands for 
factorial states on ${\mathfrak{A}_1}$, and $Ext(\mathfrak{ S}_{\mathfrak A})$ stands for the subset of extreme elements of $\mathfrak{ S}_{\mathfrak A}$. If $\mathfrak{A}_2$ is abelian then $Ext(\mathfrak{S}_{\mathfrak{A}_1}) 
=r_1(Ext(\mathfrak{S}_{\mathfrak{A}}))$.
\end{pro}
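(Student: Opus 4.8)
The plan is to prove the four assertions in turn: weak-$*$ continuity of $r_1$, the inclusions $Ext(\mathfrak{S}_{\mathfrak{A}_1}) \subseteq r_1(Ext(\mathfrak{S}_{\mathfrak{A}})) \subseteq F_{\mathfrak{A}_1}$, and the sharpening to equality when $\mathfrak{A}_2$ is abelian. Continuity is immediate: for fixed $A \in \mathfrak{A}_1$ the map $\omega \mapsto (r_1\omega)(A) = \omega(A \otimes \mathds{1})$ is evaluation of $\omega$ at the fixed element $A \otimes \mathds{1} \in \mathfrak{A}$, hence weak-$*$ continuous, and since the weak-$*$ topology on $\mathfrak{S}_{\mathfrak{A}_1}$ is generated precisely by such evaluations, $r_1$ is continuous. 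For the left inclusion, given $\psi \in Ext(\mathfrak{S}_{\mathfrak{A}_1})$ I would fix any $\chi \in Ext(\mathfrak{S}_{\mathfrak{A}_2})$ and form $\psi \otimes \chi$. Because $\mathfrak{A}$ is the spatial (minimal) tensor product, the GNS representation of $\psi \otimes \chi$ is $\pi_\psi \otimes \pi_\chi$, whose commutant is $\pi_\psi(\mathfrak{A}_1)' \bar{\otimes} \pi_\chi(\mathfrak{A}_2)' = \mathbb{C}\mathds{1}$; thus $\psi \otimes \chi$ is pure, and $r_1(\psi \otimes \chi) = \psi$ since $\chi(\mathds{1}) = 1$, giving $\psi \in r_1(Ext(\mathfrak{S}_{\mathfrak{A}}))$.

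The substantive part is $r_1(Ext(\mathfrak{S}_{\mathfrak{A}})) \subseteq F_{\mathfrak{A}_1}$. Fix a pure $\omega$ on $\mathfrak{A}$, so $\pi_\omega$ is irreducible and $\pi_\omega(\mathfrak{A})'' = B(\mathcal{H}_\omega)$. Put $\mathcal{M}_1 = \pi_\omega(\mathfrak{A}_1 \otimes \mathds{1})''$ and $\mathcal{M}_2 = \pi_\omega(\mathds{1} \otimes \mathfrak{A}_2)''$. Since $(A \otimes \mathds{1})$ and $(\mathds{1} \otimes B)$ commute in $\mathfrak{A}$, we get $\mathcal{M}_2 \subseteq \mathcal{M}_1'$, while these two algebras jointly generate $\pi_\omega(\mathfrak{A})''$, i.e.\ $\mathcal{M}_1 \vee \mathcal{M}_2 = B(\mathcal{H}_\omega)$. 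The key observation is that the center $\mathcal{Z}_1 = \mathcal{M}_1 \cap \mathcal{M}_1'$ commutes with $\mathcal{M}_1$ by definition and with $\mathcal{M}_2$ because $\mathcal{Z}_1 \subseteq \mathcal{M}_1$ while $\mathcal{M}_2 \subseteq \mathcal{M}_1'$; hence $\mathcal{Z}_1$ commutes with $B(\mathcal{H}_\omega)$ and so $\mathcal{Z}_1 = \mathbb{C}\mathds{1}$, i.e.\ $\mathcal{M}_1$ is a factor on $\mathcal{H}_\omega$.

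It then remains to transport factoriality of $\mathcal{M}_1$ to the GNS algebra $\pi_{r_1\omega}(\mathfrak{A}_1)''$ of the restricted state, which acts on the cyclic subspace $\mathcal{K} = \overline{\mathcal{M}_1 \Omega_\omega}$. Letting $E$ be the projection onto $\mathcal{K}$, one has $E \in \mathcal{M}_1'$, and the reduction $\mathcal{M}_1|_{\mathcal{K}}$ is exactly $\pi_{r_1\omega}(\mathfrak{A}_1)''$. Since $\mathcal{M}_1$ is a factor, the central support of $E$ in $\mathcal{M}_1'$ lies in $Z(\mathcal{M}_1') = \mathbb{C}\mathds{1}$ and equals $\mathds{1}$; by the standard induction lemma $T \mapsto TE|_{\mathcal{K}}$ is then a $*$-isomorphism of $\mathcal{M}_1$ onto $\mathcal{M}_1|_{\mathcal{K}}$, so $\pi_{r_1\omega}(\mathfrak{A}_1)''$ is a factor and $r_1\omega \in F_{\mathfrak{A}_1}$. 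For the abelian case I would simply invoke the Theorem quoted just above: if $\mathfrak{A}_2$ is abelian every pure $\omega$ factors as $\omega = \omega_1 \otimes \omega_2$ with $\omega_i$ pure, whence $r_1\omega = \omega_1 \in Ext(\mathfrak{S}_{\mathfrak{A}_1})$; together with the left inclusion this yields equality.

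The main obstacle I anticipate is the factoriality step, and within it the passage from $\mathcal{H}_\omega$ to the cyclic subspace $\mathcal{K}$: one must verify carefully that $\pi_{r_1\omega}$ really is the reduction of $\pi_\omega|_{\mathfrak{A}_1 \otimes \mathds{1}}$ to $\mathcal{K}$ and that reduction by a projection of full central support preserves factoriality. The commutation-plus-generation argument establishing triviality of $Z(\mathcal{M}_1)$ is the conceptual core, whereas the reduction lemma is routine but needs the central-support remark to apply cleanly.
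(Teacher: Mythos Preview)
The paper does not supply its own proof of this proposition; it merely cites Bratteli--Robinson, Proposition 4.1.37. Your argument is correct and is essentially the standard one found in that reference: weak-$*$ continuity is immediate from the definition; the left inclusion comes from tensoring a pure $\psi$ with any pure $\chi$ on $\mathfrak{A}_2$; the right inclusion follows from the commutation-plus-generation argument showing $\mathcal{M}_1=\pi_\omega(\mathfrak{A}_1\otimes\mathds{1})''$ has trivial center, together with the reduction lemma (induction by a projection $E\in\mathcal{M}_1'$ of full central support is a $*$-isomorphism on $\mathcal{M}_1$) to identify $\pi_{r_1\omega}(\mathfrak{A}_1)''$ with $\mathcal{M}_1|_{\mathcal K}$; and the abelian case is handled exactly by invoking the preceding Takesaki theorem. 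The only point worth tightening is the identification $\pi_{r_1\omega}(\mathfrak{A}_1)''=(\mathcal{M}_1)|_{\mathcal K}$: this uses that for $E\in\mathcal{M}_1'$ one has $(\pi_\omega(\mathfrak{A}_1\otimes\mathds{1})E)'=\mathcal{M}_1'E$ on $E\mathcal{H}_\omega$, hence the double commutant on $\mathcal K$ is $\mathcal{M}_1 E$, which you use implicitly but could state explicitly.
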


Consequently, if one considers the true quantum composite system, i.e. both subsystems are quantum (so both $C^*$-algebras $\mathfrak{A}_i$ are non commutative ones) then one can say only that the restriction of a pure state is a factorial one. This clearly indicates the role of the function $\mathds{F}$ in the definition given below.
Finally we note that (weak-$^*$)-(weak-$^*$) continuity of the restriction $r_1$ was already used in the discussion of quantum coefficient of correlations. We give:
\begin{defn}
Let $\omega$ be a state, $\omega\in F\subset \mathfrak{S}_{\mathfrak{A}_1\otimes \mathfrak{A}_2}$ and $F$ satisfy separability condition SC. The entanglement of formation $EoF$ is defined as
\begin{equation}
E_{\mathds{F}}(\omega)= \inf_{\mu\in M_{\omega}(\mathfrak{S}_{\mathfrak{A}_1\otimes \mathfrak{A}_2})}\int \mathds{F}(r\varphi) d\mu(\varphi) 
\end{equation}
where $\mathds{F}$ is a concave non-negative continuous function which vanishes on pure states and only on pure states, and to shorten notation we write $r$ instesd of $r_1$.
\end{defn}

Let us comment upon this definition. Firstly, we recall that a given state $\omega$ can have many decompositions, cf Section 7. Therefore,
we are forced to use the decomposition theory. In particular, orthogonal measures are playing an important role as they could be supported by $Ext(\mathfrak{S}_{\mathfrak{A}})$, see Theorem \ref{7.22} .
Further, we assumed the separability condition, SC, to avoid pathological measure-theoretical cases in the decomposition of $\omega$. But we remind the reader that this condition holds for all essential physical models -- see Section 7. Finally, in physical literature, one employs the von Neumann entropy as the function $\mathds{F}$. Namely

\begin{exmp}
Let $\mathfrak{A}_i=B(\mathcal{H}_i)$, $i=1,2$;  $\mathfrak{A}=\mathfrak{A}_1\otimes \mathfrak{A}_2$ and $F$ denote the set of all normal states on 
$B(\mathcal{H}_1)\otimes B(\mathcal{H}_2)=B(\mathcal{H}_1\otimes \mathcal{H}_2)$. 
Let $\varphi \in F$. Then $\varphi(\cdot) = Tr \rho_{\varphi} (\cdot)$, and one can identify $\varphi$ with $\rho_{\varphi}$. 
Take $\mathds{F}$ to be the von Neumann entropy
\begin{equation}
\mathds{F}(\varphi)=-\Tr\rho_{\varphi}\log\rho_{\varphi}
\end{equation}
Clearly, the von Neumann entropy satisfies all necessary properties provided that $\mathcal H$ is finite dimensional, see Section IID in \cite{Wehrl}. However, note that the function $\rho_{\varphi} \mapsto \Tr\{\rho_{\varphi}(\mathds{1} - \rho_{\varphi}) \}$ also possesses all necessary properties (again, for finite dimensional systems). Thus, this function can lead to another measure of entanglement (cf Section 11).
\end{exmp}

We wish to show that Entanglement of Formation can distinguish separable states from entangled states. To simplify notation, in the sequel, we will write $E(\omega)$ instead of $E_{\mathds{F}}(\omega)$.

$E(\omega)$ 
is defined as infimum of integrals evaluated on 
a continuous functions and the infimum is taken over the compact set. Therefore, the infimum is attainable (cf the previous section), i.e. 
there exists a measure $\mu_0\in M_{\omega}(\mathfrak{S})$ such that

\begin{equation}
E(\omega)=\int_{\mathfrak{S}}\mathds{F}(r\varphi) d\mu_0(\varphi)
\end{equation}
and, obviously,

\begin{equation}
\int_{\mathfrak{S}}\varphi d\mu_0(\varphi)=\omega
\end{equation}
Let us assume that $E(\omega)=0$, then we have
\begin{equation}
\int_{\mathfrak{S}}\mathds{F}(r\varphi) d\mu_0(\varphi)=0
\end{equation}
As $\mathds{F}(r\varphi)$ is non-negative, one can infer that
$\mathds{F}(r\varphi)=0$ on the support of $\mu_0$. 
But, as $\mathds{F}$ is a concave function, one has 
\begin{equation}
\mathds{F}(r\varphi)\geq \int_{\mathfrak{S}} \mathds{F}(rv)d\xi(v) 
\end{equation}
for any probability measure $d\xi$ on $\mathfrak{S}$ such
that $\varphi=\int_{\mathfrak{S}}vd\xi(v)$. In particular taking (as a measure $\xi$) 
a measure supported by pure states (from decomposition theory such measures exist) 
we conclude the existence of decomposition of $\varphi$ such that $\mathds{F}(rv)=0$ for $v$, hence $rv$ is a pure state and consequently $v$ is a product state, cf Proposition \ref{9.1}.
So, $\varphi$ is a convex combination of product states. Due to Theorem \ref{2.14} and Remark \ref{2.15},
 $\omega$ can be approximated  by a convex combination of product states. Consequently,  $\omega$ is a separable state.
\newline
\newline
Conversely, let $\omega$ be a separable state, i.e. 
\begin{equation}
\omega=\lim_{N \to \infty}\sum_{i=1}^N \lambda_i^{(N)} \omega_i^{(N)} 
\end{equation}
where each $\omega_i$ is a product state such that $\omega^{(N)}_i(A \otimes B) = \omega^{(N)}_{i,1}(A)\omega^{(N)}_{i,2}(B)$, where $\omega^{(N)}_{i,k}(\cdot)$ is a pure state on $\mathfrak{A}_k$. Define the sequence of measures $\mu^{(N)}$ in the following way:
\begin{equation}
\mu^{(N)}=\sum_{i=1}^N\lambda_i^{(N)} \delta_{\omega_i^{(N)}}
\end{equation}
where $\delta_{\omega_i^{(N)}}$ denotes Dirac's measure. 
If necessary, passing to a subsequence, we may suppose also that $\mu^{(N)}$ converges to $\mu \in M_{\omega}(\mathfrak{S}_{\mathfrak A})$ (it is always possible as $\{ \mu^{(N)} \} \subset M_1(\mathfrak{S}_{\mathfrak A})$, which a compact set).
Taking a weak limit of $\left\{\mu^{(N)}  \right\}$ one gets a measure $\mu$ such that

\begin{equation}
\int\varphi d\mu(\varphi)=\omega
\end{equation}
and
\begin{equation}
\int \mathds{F}(r\varphi) d\mu(\varphi)=0
\end{equation}
Thus, we arrived at:
\begin{thm}
\label{9.7}
$E(\omega)=0$ if and only if $\omega\in F$ is separable.
\end{thm}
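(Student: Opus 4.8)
The plan is to prove both implications by combining the compactness of $M_{\omega}(\mathfrak{S}_{\mathfrak{A}})$ with the concavity of $\mathds{F}$, and then reducing the crucial step to the factorization criterion of Proposition \ref{9.1}. I would dispose of the easy direction first. If $\omega$ is separable, write $\omega = \lim_N \sum_i \lambda_i^{(N)} \omega_i^{(N)}$ with each $\omega_i^{(N)}$ a product state whose two components are \emph{pure}, and form the finitely supported probability measures $\mu^{(N)} = \sum_i \lambda_i^{(N)} \delta_{\omega_i^{(N)}}$. Since $\mathfrak{A}$ has a unit, $M_1(\mathfrak{S}_{\mathfrak{A}})$ is weak-$*$ compact, so after passing to a subnet $\mu^{(N)} \to \mu \in M_{\omega}(\mathfrak{S}_{\mathfrak{A}})$. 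Because $r\omega_i^{(N)}$ is pure, $\mathds{F}(r\omega_i^{(N)}) = 0$, hence $\int \mathds{F}(r\varphi)\,d\mu^{(N)}(\varphi) = 0$ for every $N$; passing to the limit through the weak-$*$ continuous functional $\mu \mapsto \int \mathds{F}(r\varphi)\,d\mu(\varphi)$ yields $E(\omega) \leq 0$, and nonnegativity of $\mathds{F}$ forces $E(\omega)=0$.

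For the converse I would first secure attainability of the infimum. Exactly as in the proof of Theorem \ref{8.5}, $M_{\omega}(\mathfrak{S}_{\mathfrak{A}})$ is a weak-$*$ closed, hence compact, subset of $M^+(\mathfrak{S}_{\mathfrak{A}})$; the function $\varphi \mapsto \mathds{F}(r\varphi)$ is continuous, being the composition of the weak-$*$ continuous restriction $r$ (Proposition 4.1.37 in \cite{BR}) with the continuous $\mathds{F}$, so $\mu \mapsto \int \mathds{F}(r\varphi)\,d\mu(\varphi)$ is weak-$*$ continuous and attains its minimum by Weierstrass. Thus there is $\mu_0 \in M_{\omega}(\mathfrak{S}_{\mathfrak{A}})$ realizing $E(\omega) = \int \mathds{F}(r\varphi)\,d\mu_0(\varphi)$.

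Now assuming $E(\omega)=0$, the continuous nonnegative integrand $\mathds{F}(r\varphi)$ must vanish on the support of $\mu_0$. The heart of the argument is to decompose each such $\varphi$ into pure states: since $\omega$ lies in a face $F$ satisfying the condition SC, Theorem \ref{7.22} supplies, for the relevant $\varphi$, a probability measure $\xi$ pseudosupported by $Ext(\mathfrak{S}_{\mathfrak{A}})$ with barycenter $\varphi$. Concavity of $\mathds{F}$ together with the affineness of $r$ then gives
\begin{equation}
0 = \mathds{F}(r\varphi) \geq \int_{\mathfrak{S}_{\mathfrak{A}}} \mathds{F}(rv)\,d\xi(v) \geq 0,
\end{equation}
so $\mathds{F}(rv)=0$ for $\xi$-almost every $v$. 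Since $\mathds{F}$ vanishes only on pure states, $rv$ is pure, and Proposition \ref{9.1} forces $v$ to factorize as $v = r_1 v \otimes r_2 v$. Hence each $\varphi$ in the support of $\mu_0$ is an integral of product states, so $\omega = \int \varphi\,d\mu_0(\varphi)$ lies in the closed convex hull of product states, and the weak-$*$ Riemann approximation property (Theorem \ref{2.14}, Remark \ref{2.15}) confirms $\omega$ is separable.

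The main obstacle I anticipate is precisely the concavity/Jensen step fused with the requirement of a genuinely pure-state-supported decomposition of the components $\varphi$. Two points demand care. First, one must be sure the face-plus-SC hypotheses really guarantee that Theorem \ref{7.22} delivers a decomposing measure $\xi$ of $\varphi$ with (pseudo)support in the pure states, so that the conclusion ``$\mathds{F}(rv)=0$ a.e.'' is meaningful rather than vacuous. Second, the displayed inequality is Jensen applied to $\mathds{F}\circ r$ against a measure whose barycenter is $\varphi$; this needs $r$ to be affine and $\int \mathds{F}(rv)\,d\xi(v)$ to be well defined, a measurability subtlety that is controlled exactly by the SC condition rendering the weak-$*$ topology metrizable on the face in question (cf.\ Example \ref{dirac's} and the accompanying remark). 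Everything else — compactness, continuity, and the explicit approximating measures — I expect to be routine.
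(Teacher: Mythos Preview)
Your proposal is correct and follows essentially the same approach as the paper's own argument: attain the infimum by compactness and continuity, use nonnegativity to force $\mathds{F}(r\varphi)=0$ on the support of the minimizing measure, then decompose each such $\varphi$ into pure states and apply concavity of $\mathds{F}$ together with Proposition~\ref{9.1} to conclude that the components factorize; the separable-implies-zero direction via approximating finitely supported measures is likewise identical. Your explicit invocation of Theorem~\ref{7.22} and the SC hypothesis to justify the pure-state decomposition of $\varphi$ is exactly what the paper has in mind when it writes ``from decomposition theory such measures exist''.
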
\noindent
It is worth pointing out that Entanglement of Formation, $EoF$, is not only a nice indicator of separability. It possesses also many useful properties like convexity, semi-continuity and others.
Here we will be concerned with convexity and with the property of $EoF$ which can be regarded as an analogue of entanglement witness. We begin with
\begin{prop}
$\mathfrak{S}_{\mathfrak A} \ni \omega \longrightarrow E(\omega)$
is a convex function.
\end{prop}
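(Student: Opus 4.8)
The plan is to prove the two-point convexity inequality
\[
E(\lambda\omega_0 + (1-\lambda)\omega_1) \le \lambda E(\omega_0) + (1-\lambda) E(\omega_1)
\]
for arbitrary $\omega_0,\omega_1$ in the face $F$ on which $E$ is defined and any $\lambda\in[0,1]$. First I would note that since $F$ is a face it is in particular convex, so $\omega := \lambda\omega_0 + (1-\lambda)\omega_1 \in F$ and the left-hand side makes sense. The structural observation driving the whole argument is that $E$ is the infimum over $M_\omega(\mathfrak{S}_{\mathfrak A})$ of the functional $L(\mu) := \int \mathds{F}(r\varphi)\, d\mu(\varphi)$, which is \emph{linear} in $\mu$, while the constraint sets $M_\omega$ interact affinely with the barycenter. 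Notably, concavity of $\mathds{F}$ plays no role here (it was needed only for the separability characterization of Theorem \ref{9.7}).

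Concretely, I would choose measures $\mu_0 \in M_{\omega_0}(\mathfrak{S}_{\mathfrak A})$ and $\mu_1 \in M_{\omega_1}(\mathfrak{S}_{\mathfrak A})$ attaining (or, up to $\epsilon$, nearly attaining) the respective infima, and form the convex combination $\mu := \lambda\mu_0 + (1-\lambda)\mu_1$. This is again a probability measure on $\mathfrak{S}_{\mathfrak A}$. The key step is to verify that $\mu \in M_\omega(\mathfrak{S}_{\mathfrak A})$, i.e.\ that its barycenter is $\omega$. By Definition \ref{7.1} (integration understood in the weak sense), for every affine continuous $f$ on $\mathfrak{S}_{\mathfrak A}$ one has
\[
f(b(\mu)) = \int_{\mathfrak{S}_{\mathfrak A}} f\, d\mu = \lambda\!\int f\, d\mu_0 + (1-\lambda)\!\int f\, d\mu_1 = \lambda f(\omega_0) + (1-\lambda) f(\omega_1) = f(\omega),
\]
so $b(\mu)=\omega$ and $\mu \in M_\omega(\mathfrak{S}_{\mathfrak A})$. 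Equivalently, this records the inclusion $\lambda M_{\omega_0} + (1-\lambda) M_{\omega_1} \subseteq M_\omega$. Then, by linearity of $L$,
\[
E(\omega) \le L(\mu) = \lambda L(\mu_0) + (1-\lambda) L(\mu_1) = \lambda E(\omega_0) + (1-\lambda) E(\omega_1),
\]
which is the desired inequality (the last equality uses attainment; with near-optimal $\mu_i$ one lets $\epsilon\to 0$).

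For attainment of the infima I would invoke the same mechanism used in the proof of Theorem \ref{8.5}: $M_{\omega_i}(\mathfrak{S}_{\mathfrak A})$ is weak-$^*$ compact (a closed subset of the compact set $M^+(\mathfrak{S}_{\mathfrak A})$, since $\mathfrak A$ has a unit), and $\mu \mapsto L(\mu)$ is weak-$^*$ continuous, so by Weierstrass the infimum defining each $E(\omega_i)$ is attained. The only genuine point requiring care is the barycenter computation above, and it is routine once one observes that the barycenter map $\mu \mapsto b(\mu)$ is affine in $\mu$; there is no obstruction of the kind encountered in the $W^*$-case (Remark after Theorem \ref{8.5}), because here everything takes place in the weak-$^*$-compact $C^*$-setting. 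I therefore expect no serious obstacle: the statement reduces to linearity of the integral and affineness of the barycenter, together with the compactness already established.
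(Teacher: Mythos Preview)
Your proposal is correct and follows essentially the same approach as the paper: both establish the inclusion $\lambda M_{\omega_0} + (1-\lambda) M_{\omega_1} \subseteq M_{\lambda\omega_0+(1-\lambda)\omega_1}$ and then exploit linearity of $\mu\mapsto\int \mathds{F}(r\varphi)\,d\mu(\varphi)$. The only cosmetic difference is that the paper verifies the inclusion via the order-theoretic characterization of Proposition~\ref{7.13b}(2) ($\mu\in M_\omega$ iff $\mu(f)\ge f(\omega)$ for all real continuous convex $f$), whereas you check the barycenter directly using affine functions; also, the paper does not invoke attainment of the infima but simply uses that an infimum over a larger set is no greater, which makes your appeal to compactness and Weierstrass unnecessary (though not wrong).
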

\begin{proof}
We have seen, Proposition \ref{7.13b}(2), that $\mu \in M_{\omega}(\mathfrak{S}_{\mathfrak A})$ if and only if $\mu(f) \geq f(\omega)$ for any real, continuous convex function on $\mathfrak{S}_{\mathfrak A}$. Thus, if $\mu_1 \in M_{{\omega}_1}(\mathfrak{S}_{\mathfrak A})$,
$\mu_2 \in M_{{\omega}_2}(\mathfrak{S}_{\mathfrak A})$, $\lambda_1\geq 0$, $\lambda_2\geq 0$, and $\lambda_1 + \lambda_2 = 1$ then
$$ (\lambda_1 \mu_1 + \lambda_2 \mu_2)(f) = \lambda_1 \mu_1(f) + \lambda_2 \mu_2(f) \geq \lambda_1 f(\omega_1) + \lambda_2 f(\omega_2) \geq f(\lambda_1\omega_1 + \lambda_2 \omega_2).$$
On the other hand $\lambda_1 \mu_1 + \lambda_2 \mu_2 \in M_{\lambda_1\omega_1 + \lambda_2 \omega_2}(\mathfrak{S}_{\mathfrak A})$ is equivalent to $(\lambda_1 \mu_1 + \lambda_2 \mu_2)(f)\geq f(\lambda_1 \omega_1 + \lambda_2 \omega_2)$. But, the convexity of $f$ implies $\lambda_1 f(\omega_1) + \lambda_2 f(\omega_2) \geq f(\lambda_1 \omega_1 + \lambda_2 \omega_2)$.
Therefore,  $\lambda_1 \mu_1 + \lambda_2 \mu_2 \in M_{\lambda_1 \omega_1 +  \lambda_2 \omega_2}(\mathfrak{S}_{\mathfrak A})$.
Consequently
\begin{equation}
\lambda_1 M_{\omega_1}(\mathfrak{S}_{\mathfrak A}) + \lambda_2 M_{\omega_2}(\mathfrak{S}_{\mathfrak A})\subseteq M_{\lambda_1 \omega_1 +  \lambda_2 \omega_2}(\mathfrak{S}_{\mathfrak A}).
\end{equation}
Hence
\begin{equation}
E(\lambda_1 \omega_1 + \lambda_2 \omega_2) = \inf_{\mu \in M_{\lambda_1 \omega_1 +  \lambda_2 \omega_2}(\mathfrak{S}_{\mathfrak A})}\int \mathds{F} \circ r (\varphi) d\mu(\varphi)
\end{equation}
$$\leq\lambda_1 \inf_{\mu \in M_{\omega_1}(\mathfrak{S}_{\mathfrak A})} \int \mathds{F} \circ (\varphi) d \mu(\varphi) + \lambda_2 \inf_{\mu \in M_{\omega_2}(\mathfrak{S}_{\mathfrak A})} \int \mathds{F} \circ (\varphi) d \mu(\varphi)
$$
$$= \lambda_1 E(\omega_1) + \lambda_2 E(\omega_2).$$
\end{proof}

As it was already announced, $EoF$ has a property which seems to be of the same nature as entanglement witness. To describe it we need some preliminaries. The first one is Bauer maximum principle (see Theorem 25.9 in \cite{choquet} or Lemma 4.1.12 in \cite{BR})

\begin{prop}
\label{9.9}
Let  $E$ be a Hausdorff locally convex topological space and $X\subset E$ a (non empty) convex compact subset. Suppose $f: X\longrightarrow \mathds{R}$ is convex and upper semi-continuous. Then there exists an extreme point of $X$ (not necessarily unique) at which $f$ assumes its maximum value.
\end{prop}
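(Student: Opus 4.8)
The plan is to deduce the statement from the interplay of the two hypotheses on $f$: upper semi-continuity, which guarantees that the supremum $M=\sup_{x\in X}f(x)$ is attained, and convexity, which forces the set on which it is attained to be a \emph{face} of $X$. Once this is in place, the assertion reduces to the purely geometric fact, already implicit in the Krein-Milman machinery used elsewhere in the paper, that a nonempty compact face of a compact convex set must contain an extreme point of the whole set.

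First I would show that the maximizing set $F=\{x\in X: f(x)=M\}$ is nonempty, compact, and a face. For each $n\in\mathbb{N}$ the superlevel set $\{x\in X: f(x)\ge M-\tfrac1n\}$ is nonempty by the definition of $M$ and closed by upper semi-continuity; these sets are nested, so by compactness of $X$ and the finite intersection property their intersection $F=\bigcap_n\{f\ge M-\tfrac1n\}$ is nonempty, and being closed it is compact, so $M$ is genuinely attained. That $F$ is an extreme subset is where convexity does the work: if $x\in F$ and $x=\lambda y+(1-\lambda)z$ with $y,z\in X$ and $\lambda\in(0,1)$, then
\begin{equation}
M=f(x)\le \lambda f(y)+(1-\lambda)f(z)\le \lambda M+(1-\lambda)M=M,
\end{equation}
so equality holds throughout, whence $f(y)=f(z)=M$ and $y,z\in F$.

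It remains to locate an extreme point of $X$ inside $F$, and this is the step that really exploits local convexity. I would order by reverse inclusion the family of all nonempty compact extreme subsets of $X$ contained in $F$; it contains $F$, and every chain has as an upper bound the intersection of its members, which is again nonempty (compactness, finite intersection property) and again a compact extreme subset. Zorn's lemma then yields a minimal element $S$. If $S$ contained two distinct points, the Hahn-Banach separation theorem---valid precisely because $E$ is locally convex---would supply a continuous linear functional $\ell$ that is non-constant on $S$, and the subset of $S$ on which $\ell$ attains its maximum over $S$ would be a strictly smaller nonempty compact extreme subset of $X$ (using that a face of a face is a face), contradicting minimality. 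Hence $S=\{e\}$ is a singleton, $e$ is an extreme point of $X$, and $e\in F$ gives $f(e)=M$.

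The attainment of the maximum and the face property of $F$ are routine; the one genuinely delicate point, where the hypothesis that $E$ is locally convex is indispensable, is the final extraction of an extreme point from the compact face $F$ by separation. This is exactly the content of the Krein-Milman theorem, so one could alternatively invoke it directly (cf. \cite{choquet}, \cite{Alfsen}) in place of reproducing the Zorn-and-separation argument.
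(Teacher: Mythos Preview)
Your argument is correct and is the standard proof of Bauer's maximum principle. Note, however, that the paper does not supply its own proof of this proposition: it is quoted as a known result with references to Theorem 25.9 in \cite{choquet} and Lemma 4.1.12 in \cite{BR}. Your two-step approach (first showing via upper semi-continuity and convexity that the maximizing set $F$ is a nonempty compact face, then extracting an extreme point of $X$ from $F$ by the Zorn/Hahn--Banach machinery behind Krein--Milman) is precisely the argument one finds in those references, so there is nothing to contrast.

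One small remark: before forming the sets $\{f\ge M-\tfrac1n\}$ you implicitly use that $M<\infty$, which follows from upper semi-continuity on a compact set (cover $X$ by the open sets $\{f<n\}$ and extract a finite subcover). This is routine, but worth a half-line if you want the write-up to be fully self-contained.
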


The second one is the characterization of extremal measures in $M_{\omega}(\mathfrak{S}_{\mathfrak A})$ (see Lemma 4.2.3 in \cite{BR}).

\begin{prop}
\label{9.10}
Let $\mathfrak A$ be a $C^*$-algebra (with identity) and $\omega \in \mathfrak{S}_{\mathfrak A}$.
Let $\mu$ be in $M_{\omega}(\mathfrak{S}_{\mathfrak A})$. Then, the following conditions are equivalent:
\begin{enumerate}
\item $\mu \in Ext\left(M_{\omega}(\mathfrak{S}_{\mathfrak A})\right)$.
\item the affine continuous functions over $\mathfrak{S}_{\mathfrak A}$ are dense in $L^1(\mathfrak{S}_{\mathfrak A}, \mu).$
\end{enumerate}
\end{prop}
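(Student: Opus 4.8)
The plan is to prove the equivalence by exploiting the duality $L^1(\mathfrak{S}_{\mathfrak A},\mu)^* = L^\infty(\mathfrak{S}_{\mathfrak A},\mu)$ together with the fact that membership in $M_\omega(\mathfrak{S}_{\mathfrak A})$ is detected exactly by the integrals of real affine continuous functions: writing $K \equiv \mathfrak{S}_{\mathfrak A}$, a probability measure $\nu$ on $K$ lies in $M_\omega(K)$ if and only if $\int_K a\, d\nu = a(\omega)$ for every real affine continuous $a$ on $K$ (this is how the barycenter is pinned down; cf. the definition of $M_\omega(K)$ and the barycenter proposition in Section 7). Throughout, $A(K)$ denotes the real affine continuous functions, and the constant $\mathds{1}$ belongs to $A(K)$. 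The engine of both implications is the symmetric perturbation $\mu \mapsto (1 \pm g)\mu$ for a suitable $g \in L^\infty(\mu)$.

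First I would prove $(1) \Rightarrow (2)$ in contrapositive form. Suppose $A(K)$ is not dense in $L^1(K,\mu)$. By Hahn--Banach and the identification of $L^1(\mu)^*$ with $L^\infty(\mu)$, there is a nonzero real $g \in L^\infty(\mu)$ with $\int_K a\, g\, d\mu = 0$ for all $a \in A(K)$; after rescaling I may assume $\|g\|_\infty \le 1$. Set $\mu_\pm = (1 \pm g)\mu$. Since $\|g\|_\infty \le 1$ these are positive measures, and because $\mathds{1} \in A(K)$ we get $\int_K g\, d\mu = 0$, so $\mu_\pm(K) = 1$ and both are probability measures. For any $a \in A(K)$, $\int_K a\, d\mu_\pm = \int_K a\, d\mu \pm \int_K a g\, d\mu = a(\omega)$, hence $\mu_\pm \in M_\omega(K)$. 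Finally $\mu = \tfrac12(\mu_+ + \mu_-)$ and $\mu_+ \ne \mu_-$ (as $g \ne 0$ in $L^\infty(\mu)$), so $\mu$ is not extreme.

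For $(2) \Rightarrow (1)$, assume $A(K)$ is dense in $L^1(K,\mu)$ and write $\mu = \tfrac12(\mu_1 + \mu_2)$ with $\mu_1,\mu_2 \in M_\omega(K)$. Then $\mu_1 \le 2\mu$, so $\mu_1$ is absolutely continuous with respect to $\mu$, and Radon--Nikodym yields a density $h \in L^\infty(\mu)$ with $0 \le h \le 2$ and $\mu_1 = h\mu$. Since $\mu_1$ and $\mu$ share the barycenter $\omega$, we have $\int_K a\,(h-1)\, d\mu = \int_K a\, d\mu_1 - a(\omega) = 0$ for every $a \in A(K)$. The element $h - 1 \in L^\infty(\mu)$ defines a bounded functional on $L^1(\mu)$ vanishing on the dense subspace $A(K)$, so $h = 1$ $\mu$-a.e.; hence $\mu_1 = \mu$, and likewise $\mu_2 = \mu$. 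Therefore $\mu \in Ext\!\left(M_\omega(K)\right)$.

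What requires care, rather than being a deep obstacle, is the measure-theoretic bookkeeping: working with real $L^1$ so that $A(K)$ is a genuine real subspace, justifying $\mu_i \le 2\mu$ and the absolute continuity feeding Radon--Nikodym, and invoking the precise barycenter characterization to turn ``barycenter $\omega$'' into ``$\int a\, d\nu = a(\omega)$ on $A(K)$''. Compactness of $K$, which holds because $\mathfrak A$ has a unit, guarantees $A(K) \subset C(K) \subset L^1(\mu)$ and that all measures involved are Radon, so the dualities and representations used are legitimate; the heart of the argument is simply the perturbation $(1\pm g)\mu$ and its reverse reading through Radon--Nikodym.
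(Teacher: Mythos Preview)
Your argument is correct and is precisely the standard proof of this fact. Note, however, that the paper does not supply its own proof of Proposition~\ref{9.10}: it merely quotes the statement and refers the reader to Lemma~4.2.3 in \cite{BR}. The proof there proceeds exactly as you do---Hahn--Banach plus the $L^1$--$L^\infty$ duality to produce the perturbation $(1\pm g)\mu$ for the contrapositive of $(1)\Rightarrow(2)$, and Radon--Nikodym applied to $\mu_1\le 2\mu$ for $(2)\Rightarrow(1)$---so there is no genuine difference in approach to report.
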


Now we are ready to proceed with the description of the ``new entanglement witness''. To this end, we denote, for $\mu \in M_{\omega}(\mathfrak{S}_{\mathfrak A})$
$$ E_{\mu}(\omega) = \int (\mathds{F} \circ r) (\varphi) d \mu(\varphi).$$
It is clear that $ M_{\omega}(\mathfrak{S}_{\mathfrak A}) \ni \mu \mapsto E_{\mu}(\omega)$ is an affine, real valued, continuous function on $\mathfrak{S}_{\mathfrak A}$.
Moreover, obviously, $\min_x f(x) \geq A \Leftrightarrow\forall_x f(x) \geq A$. Hence, $\forall_x \quad -f(x) \leq - A$. Thus, $-A$ is maximum for the function ``$- f$''. Finally, as the set of all affine functions on $K$ is equal to $S(K)\cap \left(-S(K)\right)$, where $S(K)$ stands for the set of real continuous convex functions on $K$, an application of Proposition \ref{9.9} implies that minimum in definition of $E(\omega)$ (cf the proof of Theorem \ref{9.7}) is attained on a certain extremal measure $\mu_0$ in $M_{\omega}(\mathfrak{S}_{\mathfrak A})$. Then, applying Proposition \ref{9.10} one gets: there exists an affine function $\hat{A}_0 : \omega \mapsto \omega(A_0)$, where $A_0^* = A_0 \in \mathfrak A$ such that
\begin{equation}
|E(\omega) - \omega(A_0)| = |\int_{\mathfrak{S}_{\mathfrak A}} (\mathds{F} \circ r)(\varphi) d\mu_0(\varphi) - \int_{\mathfrak{S}_{\mathfrak A}} \hat{A}_0(\varphi) d\mu_0(\varphi)|
\end{equation}
$$ \leq
\int_{\mathfrak{S}_{\mathfrak A}}|(\mathds{F}\circ r)(\varphi) - \hat{A}_0(\varphi)|d\mu_0(\varphi) < \epsilon,
$$
for an arbitrary small $\epsilon$, as $\mathds{F} \circ r$ is a continuous function on a compact set $\mathfrak{S}_{\mathfrak A}$. Consequently, there exists an observable $A_0 = A^*_0 \in \mathfrak A$ such that its expectation value $\omega(A_0) \equiv <A_0>_{\omega}$ approximates $EoF$, $E(\omega)$, at a given state $\omega$.

We wish to close this section with some remarks concerning the continuity of $EoF$, i.e. of the mapping $\mathfrak{S}_{\mathfrak A} \ni \omega \mapsto E(\omega)$.
This mapping is a real valued, convex function defined on a compact set. It can be proved that it is lower semicontinuous. However, the proof of upper semicontinuity, Proposition 3 in \cite{Maj2002}, has a gap. We are greatly indebted to Maxim Shirokov for this observation. Thus, the function
$\mathfrak{S}_{\mathfrak A} \ni \omega \mapsto E(\omega)$ is a real, lower semi-continuous convex function only. Such functions are close to be continuous, see for example Theorem C, page 93, in \cite{RV} or Proposition 3.3 in \cite{Phelps}.
The problem is to find a (dense) open subset in the domain of the function. This is related to the existence of non-trivial interior in $\mathfrak{A}^*$ or in $\mathfrak{M}_*$. But, there is a problem. Namely,
any proper generating cone in a finite dimensional ordered Banach spaces has interior points. However, this is not true for infinite dimensional spaces.
In particular, although $\mathfrak{A}^+$ has a nontrivial interior, $\left(\mathfrak{A}^*\right)^+$ has not, for more details we refer the reader to \cite{BatR}. Consequently, continuity properties of $EoF$, $E(\omega)$ are of the same sort as those of quantum entropy, see \cite{Wehrl}.

\section{PPT states} 

In Quantum Computing, a characterization of states with positive partial transposition (PPT states) is of paramount importance. Namely, cf the end of Section 6, we have seen
\begin{equation}
\mathfrak{S}_{sep}\subset\mathfrak{S}_{PPT}\subset\mathfrak{S}.
\end{equation}

Thus, states in $\mathfrak{S}_{PPT}$ contain non-classical correlations. However, it is believed that genuine 
quantum correlations are contained in states from $\mathfrak{S}\setminus\mathfrak{S}_{PPT}$.
On the other hand, in early days of attempts to classify the structure of positive maps, Choi in \cite{Ch1}
observed that the transposition is playing a distinguished role in describing tensor product of matrices which are positive. Moreover, he noted that the answer is easy for dimension $2$ but very non-trivial for higher dimensions. For more recent account on the role of transposition in low dimensional matrix algebras see \cite{MT}.
This question was considered in Physics by Peres \cite{peres} and Horodecki's \cite{hor} and their research led to the first classification of quantum states in Quantum Information. Then, further investigations have shown the importance of PPT states for quantum computing, see \cite{hor2}.

Here, we wish to present a characterization of PPT states for quantum systems. To this end we will exploit links between tensor products
and properly chosen mapping spaces. This should be expected as for the definition of PPT states the theory of positive mappings was used. On the other hand, this theory is related to tensor products, and finally, the Grothendieck approach to tensor products is emerging from rules of quantization, see Section 5. In particular, we have seen, cf formulas (\ref{40}) and (\ref{41}):
$$
L(\mathfrak{A},\mathfrak{A})\cong (\mathfrak{A}\otimes_{\pi}\mathfrak{A}_*)^*,
$$
where $L(\mathfrak{A},\mathfrak{A})$ stands for the set of all bounded linear maps from $\mathfrak{A}$ to $\mathfrak{A}$,
and 
$$
\mathfrak{B}(\mathfrak{A},\mathfrak{A}_*)\cong L(\mathfrak{A},\mathfrak{A})
$$
where $\mathfrak{B}(\mathfrak{A},\mathfrak{A}_*)$ denotes the space of bounded bilinear forms on $\mathfrak{A} \times\mathfrak{A}_*$.
Let us specify the above results for $\mathfrak{A} \equiv B(\mathcal H)$. Thus, we will be concerned with Dirac's formalism of Quantum Mechanics.

We begin with, see \cite{stor} (the full account of the results below was given in \cite{Maj2014}) 
\begin{prop}
 There is an isometric isomorphism $\varphi\rightarrow \hat{\varphi}$ between $L(B(\mathcal{K}), B(\mathcal{H}))$  and $(B(\mathcal{K})\otimes_{\pi} \mathcal{F}_T(\mathcal{H}))^*$ such that

\begin{equation}
\hat{\varphi}(\sum_{i=1}^NA_i\otimes B_i)=\sum_{i=1}^NTr(\varphi(A_i)B_i^t)
\end{equation}
where  $B_i^t=\tau(B_i)$ ($\tau$ stands for the transposition), $\mathcal{F}_T(\mathcal{H})$ denotes the trace class operators on $\mathcal H$, and $\sum_{i=1}^NA_i\otimes B_i\in B(\mathcal{K})\otimes \mathcal{F}_T(\mathcal{H})$. Furthermore $\varphi\in L(B(\mathcal{K}), B(\mathcal{H}))^+$ if and only if $\hat{\varphi}$ is positive on $B(\mathcal{K})^+\otimes_{\pi} \mathcal{F}_T(\mathcal{H})^+$,
where $L(B(\mathcal{K}), B(\mathcal{H}))^+$ denotes those linear bounded maps which are positive, i.e $\varphi(A^*A) \geq 0$ for $A \in B(\mathcal{K})$.
\end{prop}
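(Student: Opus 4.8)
The plan is to obtain the isomorphism by composing two standard identifications and then to read the positivity statement off the trace formula directly. First I would invoke Grothendieck's theorem (Theorem \ref{5.5}) with $X = B(\mathcal{K})$ and $Y = \mathcal{F}_T(\mathcal{H})$, which yields an isometric isomorphism $\Phi \mapsto \hat{\Phi}$ between the space $\mathfrak{B}(B(\mathcal{K}), \mathcal{F}_T(\mathcal{H}))$ of bounded bilinear forms and $(B(\mathcal{K}) \otimes_{\pi} \mathcal{F}_T(\mathcal{H}))^*$, with $\hat{\Phi}(A \otimes B) = \Phi(A, B)$. Hence it suffices to identify bounded bilinear forms on $B(\mathcal{K}) \times \mathcal{F}_T(\mathcal{H})$ with elements of $L(B(\mathcal{K}), B(\mathcal{H}))$.

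Second, I would use the trace duality $(\mathcal{F}_T(\mathcal{H}))^* = B(\mathcal{H})$ recorded in Example \ref{dirac's}. Given a bounded bilinear form $\Phi$, fixing $A$ makes $B \mapsto \Phi(A, B)$ a bounded functional on $\mathcal{F}_T(\mathcal{H})$, hence represented by a unique operator in $B(\mathcal{H})$; composing this representation with the transposition $\tau$ (an isometric, order-preserving bijection of $\mathcal{F}_T(\mathcal{H})$) produces a map $A \mapsto \varphi(A) \in B(\mathcal{H})$ with $\Phi(A, B) = \Tr(\varphi(A) B^t)$. Linearity of $\varphi$ follows from bilinearity of $\Phi$, and conversely each $\varphi \in L(B(\mathcal{K}), B(\mathcal{H}))$ defines such a $\Phi$, so the correspondence is bijective. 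For the isometry I would compute
$$\|\hat{\varphi}\| = \|\Phi\| = \sup_{\|A\| \le 1,\, \|B\|_1 \le 1} |\Tr(\varphi(A) B^t)| = \sup_{\|A\| \le 1} \|\varphi(A)\| = \|\varphi\|,$$
the middle equalities using that $B \mapsto B^t$ is an isometry of the trace-class unit ball together with the isometric identity $\|\varphi(A)\| = \sup_{\|B\|_1 \le 1} |\Tr(\varphi(A) B)|$.

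Finally, for the positivity claim I would argue straight from $\hat{\varphi}(A \otimes B) = \Tr(\varphi(A) B^t)$. If $\varphi$ is positive and $A, B \ge 0$, then $\varphi(A) \ge 0$ and $B^t \ge 0$, so the trace of their product is non-negative. Conversely, if $\hat{\varphi}(A \otimes B) \ge 0$ for all positive $A, B$, then fixing $A \ge 0$ and letting $B$ range over the positive trace-class cone, $B^t$ ranges over the whole positive trace-class cone; testing against rank-one projections $B^t = |\psi\rangle\langle\psi|$ gives $\langle \psi, \varphi(A)\psi\rangle \ge 0$, whence $\varphi(A) \ge 0$.

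I expect the main obstacle to be bookkeeping rather than depth: one must verify that the transposition inserted into the pairing is harmless for every assertion, namely that it is a trace-norm isometry (so the isometry statement is unaffected) and an order isomorphism of the positive cone of $\mathcal{F}_T(\mathcal{H})$ (so the positivity equivalence survives), and that the duality $(\mathcal{F}_T(\mathcal{H}))^* = B(\mathcal{H})$ is used in its isometric form. The only genuinely non-formal point is confirming that transposition carries the positive trace-class cone \emph{onto} itself, since this is precisely what drives the converse direction of the positivity equivalence.
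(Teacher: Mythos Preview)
The paper does not actually supply a proof of this proposition; it is stated with the attribution ``see \cite{stor} (the full account of the results below was given in \cite{Maj2014})'' and used as input for the subsequent theorem. So there is no in-paper argument to compare against.

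That said, your argument is correct and is essentially the standard route to St{\o}rmer's result, assembled from ingredients the paper already provides: Theorem~\ref{5.5} for the identification $(X\otimes_\pi Y)^*\cong\mathfrak{B}(X,Y)$, and the trace duality $(\mathcal{F}_T(\mathcal{H}))^*=B(\mathcal{H})$ from Example~\ref{dirac's} to pass from bilinear forms to linear maps. Your handling of the transposition is the right one: since $\tau$ is an isometric order-automorphism of $\mathcal{F}_T(\mathcal{H})$, inserting it affects neither the norm computation nor the positivity equivalence, and your observation that $B\mapsto B^t$ maps the positive trace-class cone \emph{onto} itself is exactly what is needed for the converse direction. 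The only cosmetic point is that ``positive on $B(\mathcal{K})^+\otimes_\pi\mathcal{F}_T(\mathcal{H})^+$'' should be read as positivity on the closed cone generated by simple tensors $A\otimes B$ with $A,B\ge 0$; since $\hat\varphi$ is linear and continuous, checking simple tensors (as you do) suffices.
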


We emphasize that, here and in the reminder of this section, both Hilbert spaces $\mathcal H$ and $\mathcal K$ are, in general, \textit{infinite dimensional}.
As PPT states are ``dual'' to decomposable maps (see \cite{MM}, and last paragraphs of Section 6) we need an adaptation of the above Proposition for CP and co-CP maps. In \cite{MMO}, it was shown, where we have used the identification of $B(\mathcal H)_*$ with $\mathcal{F}_T(\mathcal H)$:
\begin{thm}
(1) Let $L(B(\mathcal{H}), B(\mathcal{K})_*)$ stands for the set of all linear bounded, normal ($*$-weak continuous) maps from $B(\mathcal{H})$ into $B(\mathcal{K})_*$. There is an isomorphism $\psi \longmapsto \hat{\psi}$ between $L(B(\mathcal{H}), B(\mathcal{K})_*)$ and $(B(\mathcal{H})\otimes B(\mathcal{K}))_*$ given by 

\begin{equation}
\hat{\psi}(\sum_{i=1}^NA_i\otimes B_i)=\sum_{i=1}^NTr_{\mathcal K}(\psi(A_i)B_i^t)
\end{equation}

The isomorphism is isometric if $\hat{\psi}$ is considered on $B(\mathcal{H})\otimes_{\pi} B(\mathcal{K})$.  Furthermore, $\hat{\psi}$ is positive on $(B(\mathcal{H})\otimes B(\mathcal{K}))^+$  if and only if $\psi$ is completely positive.\newline
(2) There is an isomorphism $\psi \longmapsto \hat{\psi} $ between $%
L\left({B}\left( \mathcal{H}\right),{B}(\mathcal{K})%
_{\ast }\right) $ and \newline $\left({B(\mathcal{H})\otimes B(\mathcal{K})}\right)_{\ast }$
given by 
\begin{equation}
\hat{\psi} \left( \sum_{i}A_{i}\otimes B_{i}\right) =\sum_{i}Tr_{\mathcal{K}}\psi
\left( A_{i}\right) B_{i},\text{\ \ }a_{i}\in {B}\left( \mathcal{H}%
\right) ,\text{ }b_{i}\in {B}\left( \mathcal{K}\right) .
\end{equation}%
This isomorphism is isometric if $\hat{\psi}$ is considered on ${B(H){\otimes}_{\pi} B(K)}$. 
Furthermore $\hat{\psi} $ is positive on $(B(\mathcal H)\otimes B(\mathcal K))^+$ if and only if  $\psi $ is complete co-positive.
\end{thm}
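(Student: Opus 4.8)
The plan is to assemble the correspondence $\psi \mapsto \hat\psi$ in three stages — algebraic, isometric, and normal — and then to settle the positivity equivalence and deduce part (2) from part (1) by transposition; throughout, the argument mirrors the preceding Proposition but replaces the full-dual identification of Theorem \ref{5.5} by the predual identification of Theorem \ref{5.6}. \emph{Algebraic bijection.} First I would note that for a fixed normal bounded $\psi : B(\mathcal H) \to B(\mathcal K)_*$ the assignment $(A,B) \mapsto \Tr_{\mathcal K}(\psi(A)B^t)$ is a bounded bilinear form on $B(\mathcal H) \times B(\mathcal K)$, using the identification $B(\mathcal K)_* = \mathcal{F}_T(\mathcal K)$ and the fact that the trace pairs $\mathcal{F}_T(\mathcal K)$ with $B(\mathcal K)$. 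By the universal property of the algebraic tensor product this extends uniquely to a linear functional $\hat\psi$ on $B(\mathcal H) \odot B(\mathcal K)$, and independence of the chosen representation $\sum_i A_i \otimes B_i$ is immediate from bilinearity. Conversely, one recovers $\psi$ from such a functional by a slice/partial-trace construction, so $\psi \mapsto \hat\psi$ is a linear bijection at the algebraic level.

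\emph{Isometry.} For the norm statement I would invoke the Grothendieck duality of Theorem \ref{5.5}, which identifies bounded bilinear forms on $B(\mathcal H) \times B(\mathcal K)$ isometrically with $(B(\mathcal H) \otimes_\pi B(\mathcal K))^*$. It then remains to check $\|\psi\| = \sup\{|\Tr_{\mathcal K}(\psi(A)B^t)| : \|A\|\le 1,\ \|B\|\le 1\}$, which follows because $\|\psi(A)\|_{B(\mathcal K)_*} = \sup_{\|B\|\le 1}|\Tr_{\mathcal K}(\psi(A)B)|$ and because the transposition $\tau_{\mathcal K}$ is an isometry of $B(\mathcal K)$, so inserting $B^t$ leaves the supremum unchanged. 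This yields the asserted isometry with respect to the projective norm.

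\emph{Normality — the crux.} The preceding steps, via Theorem \ref{5.5}, only place $\hat\psi$ in the full dual $(B(\mathcal H) \otimes_\pi B(\mathcal K))^*$ and correspond to \emph{arbitrary} bounded maps into $B(\mathcal K)^*$. The substance of the theorem is that $\psi$ being normal and taking values in the predual $B(\mathcal K)_*$ is matched exactly by $\hat\psi$ lying in the predual $(B(\mathcal H) \otimes B(\mathcal K))_*$. Here I would use Theorem \ref{5.6}, which gives $(B(\mathcal H) \otimes B(\mathcal K))_* = B(\mathcal H)_* \hat{\otimes}_{\pi} B(\mathcal K)_*$, and show that the $\sigma$-weak continuity of $\psi$ forces $\hat\psi$ to factor through the operator space projective tensor product of the two preduals, and conversely. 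I expect this to be the main obstacle: one must reconcile the operator space projective norm topology appearing in Theorem \ref{5.6} with the separate $\sigma$-weak continuity of $\psi$, and verify that the resulting completion singles out precisely the predual rather than the strictly larger full dual.

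\emph{Positivity and part (2).} For the positivity equivalence I would argue the Choi-type step: $\hat\psi \ge 0$ on the cone $(B(\mathcal H)\otimes B(\mathcal K))^+$ if and only if $\psi$ is completely positive. The transposition built into the defining formula is exactly what converts complete positivity of $\psi$ into positivity of $\hat\psi$; concretely, testing $\hat\psi$ against positive rank-one tensors (equivalently, against the canonical maximally entangled element, cf. \cite{Ch1}) reproduces the statement that the Choi operator of $\psi$ is positive. Finally part (2) is a corollary of part (1): replacing $B_i$ by $B_i^t$ — that is, composing the isomorphism with $\mathrm{id} \otimes \tau_{\mathcal K}$, an isometric linear bijection — turns the formula of (2) into that of (1) applied to $\tau_{\mathcal K}\circ\psi$. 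Since $\psi$ is complete co-positive precisely when $\tau_{\mathcal K}\circ\psi$ is completely positive, the positivity characterization transfers, and the isometry survives because $\tau_{\mathcal K}$ is isometric.
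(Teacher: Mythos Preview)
The paper does not itself prove this theorem: it is quoted from \cite{MMO} with the phrase ``In \cite{MMO}, it was shown,'' so there is no in-paper argument against which to compare your proposal. Your outline is the natural one and tracks the structure one would expect from that reference --- algebraic correspondence, isometry via Theorem~\ref{5.5}, then the predual identification of Theorem~\ref{5.6} for normality, then positivity, then the transposition trick for part~(2). The reduction of (2) to (1) via $\tilde\psi = \tau_{\mathcal K}\circ\psi$ is correct, and your isometry computation is clean.

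One genuine caution on the positivity step: your appeal to ``the canonical maximally entangled element'' is a finite-dimensional reflex that does not survive the passage to infinite-dimensional $\mathcal H$, $\mathcal K$ --- there is no such vector, and the Choi-matrix argument as usually stated (cf.\ \cite{Ch3}) needs replacement. The standard route in this setting is to test $\hat\psi$ on elements of the form $X^*X$ with $X = \sum_{i=1}^n A_i\otimes B_i$ in the algebraic tensor product, observe that positivity of $\hat\psi$ on all such elements is exactly $n$-positivity of $\psi$ for every $n$, and then use normality of $\hat\psi$ (i.e.\ $\sigma$-weak density of the algebraic tensor product in $B(\mathcal H\otimes\mathcal K)$) to pass to the full positive cone. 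You should also be explicit that $\tau_{\mathcal K}$ is $\sigma$-weakly continuous (it is the dual of transposition on $\mathcal F_T(\mathcal K)$), since you use this implicitly both in the normality step and in the reduction for part~(2).
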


\begin{rem}
Since the projective norm $\| \cdot \|_{\pi}$ is submultiplicative, the involution $^*$ is isometric for this norm it follows that ${B(H){\otimes}_{\pi} B(K)}$ is a $^*$-Banach algebra. 
Consequently, the concept of positivity is well defined in ${B(H){\otimes}_{\pi} B(K)}$.
Therefore, in above Theorem, one can combine isometricity with positivity of functionals.
\end{rem}
This result is proving to be extremely useful in the study of so called entanglement mappings. Namely, following Belavkin-Ohya scheme, see \cite{BO1}, \cite{BO2}, let us take an arbitrary normal state $\omega$ on $B(\mathcal{H}) \otimes B(\mathcal{K})$. Thus, we fix a density matrix $\rho_{\omega}$ describing the composite state $\omega$.
Let its spectral decomposition be
\begin{equation}
\rho_{\omega} =\sum_{i=1}^N\lambda_i\left|e_i\right\rangle\left\langle e_i\right|
\end{equation}
where $\left\{e_i\right\}$ is an orthogonal system in $\mathcal{H}\otimes \mathcal{K}$. Define a map $T_{\xi}:\mathcal{K}\rightarrow \mathcal{H}\otimes \mathcal{K}$ by  
\begin{equation}
T_{\xi}\eta=\xi\otimes \eta
\end{equation}
Then, following Belavkin-Ohya scheme, we define the entanglement operator $H:\mathcal{H}\rightarrow\mathcal{H}\otimes \mathcal{K}\otimes \mathcal{K}$ by
\begin{equation}
H\eta=\sum_{i}\lambda_{i}^{\frac{1}{2}}\left(J_{\mathcal{H}\otimes \mathcal{K}}\otimes T^*_{J_{\mathcal{H}\eta}} \right)e_i\otimes e_i
\end{equation}
where $e_i \in \mathcal{H}\otimes \mathcal{K}$ for each $i$, and $J_{\mathcal{H}\otimes \mathcal{K}}$ is a complex conjugation defined by
\begin{equation}
J_{\mathcal{H}\otimes \mathcal{K}}\left(\sum_{i}(\dot{e_{i}},f) \dot{e_{i}} \right)=\sum_{i}\overline{(\dot{e_{i}},f)} \dot{e_{i}}
\end{equation}
where $\left\{\dot{e_{i}}\right\}$ is the extension (if necessary) of the orthonormal system $\left\{e_i\right\}$ to the complete orthogonal normal system $\left\{ \dot{e_{i}} \right \}$
in $\mathcal{H}\otimes \mathcal{K}$. 
($J_{\mathcal H}$ is defined analogously using the spectral resolution of $H^*H$). 
The definition of the entanglement operator $H$ is a necessary step to define the entanglement mapping $\phi: B(\mathcal{K}) \to B(\mathcal{H})_*$
\begin{equation}
\label{95}
\phi(B) = (H^*(\mathds{1} \otimes B)H)^t = J_{\mathcal H} H^* (\mathds{1} \otimes B)^* H J_{\mathcal H}.
\end{equation}
where $B \in B(\mathcal K)$.
Properties of the entanglement mapping  $\phi$ as well as its dual $\phi^*$ are contained in the next proposition, which was proved in \cite{MMO}.
\begin{prop}
\begin{enumerate}
\item The dual of the entanglement mapping $\phi^*: B(\mathcal{H}) \to B(\mathcal{K})_*$
has the following form
$$ \phi^*(A) = \Tr_{\mathcal{H} \otimes\mathcal{K}} H A^tH^*, \quad A \in B(\mathcal H).$$
\item
A state $\omega$ on $B(\mathcal{H}\otimes\mathcal{K})$ can be written as 
\begin{equation}
\omega(A\otimes B)= \Tr_{\mathcal{H}}A\phi(B)= \Tr_{\mathcal{K}}B\phi^*(A).
\end{equation}
\end{enumerate}
\end{prop}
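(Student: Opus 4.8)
The plan is to verify both assertions by direct computation, treating part~(1) as a formal consequence of duality and reserving the real content for part~(2). Throughout I identify $B(\mathcal{H})_*$ with the trace class $\mathcal{F}_T(\mathcal{H})$ and use the trace pairing $\langle X, A\rangle = \Tr_{\mathcal{H}}(XA)$, so that the dual $\phi^*$ is characterised by $\Tr_{\mathcal{H}}(A\,\phi(B)) = \Tr_{\mathcal{K}}(B\,\phi^*(A))$ for all $A\in B(\mathcal{H})$, $B\in B(\mathcal{K})$.

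For part~(1) I would start from the definition $\phi(B) = (H^*(\mathds{1}\otimes B)H)^t$ and use only the elementary transpose identities $(XY)^t = Y^t X^t$, $(X^t)^t = X$ and $\Tr(X^t)=\Tr(X)$, which together yield $\Tr_{\mathcal{H}}(A X^t) = \Tr_{\mathcal{H}}(A^t X)$. Applying this with $X = H^*(\mathds{1}\otimes B)H$ gives $\Tr_{\mathcal{H}}(A\,\phi(B)) = \Tr_{\mathcal{H}}(A^t H^*(\mathds{1}\otimes B)H)$. Cyclicity of the trace then moves $H^*$ to the far right, turning this into a trace over $\mathcal{H}\otimes\mathcal{K}\otimes\mathcal{K}$; since $\mathds{1}\otimes B$ acts as the identity on the first two tensor factors, the partial trace over $\mathcal{H}\otimes\mathcal{K}$ can be carried out, leaving $\Tr_{\mathcal{K}}(B\,\Tr_{\mathcal{H}\otimes\mathcal{K}}(HA^tH^*))$. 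Comparison with the characterisation of $\phi^*$ forces $\phi^*(A) = \Tr_{\mathcal{H}\otimes\mathcal{K}}(HA^tH^*)$; this expression manifestly lies in $\mathcal{F}_T(\mathcal{K}) = B(\mathcal{K})_*$, so $\phi^*$ indeed takes values in the predual, as claimed.

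The substance is the first equality in part~(2), namely $\omega(A\otimes B) = \Tr_{\mathcal{H}}(A\,\phi(B))$; by the computation just sketched this reduces to the single identity $\sum_i \lambda_i (e_i, (A\otimes B)e_i) = \Tr_{\mathcal{H}}(A^t H^*(\mathds{1}\otimes B)H)$. To establish it I would unfold $H$ against a \emph{real} orthonormal basis $\{h_\alpha\}$ of $\mathcal{H}$, i.e.\ one fixed by the conjugation $J_{\mathcal{H}}$, and write each spectral vector as $e_i = \sum_\alpha h_\alpha\otimes c^i_\alpha$ with $c^i_\alpha\in\mathcal{K}$. From $T^*_\xi(h\otimes k) = (\xi,h)k$ one checks that $T^*_{J_{\mathcal{H}}h_\beta}e_i = c^i_\beta$, so the matrix elements of $M := H^*(\mathds{1}\otimes B)H$ read
\[
(h_\gamma, M h_\beta) = \sum_{i,j}\lambda_i^{1/2}\lambda_j^{1/2}\,(J_{\mathcal{H}\otimes\mathcal{K}}e_j, J_{\mathcal{H}\otimes\mathcal{K}}e_i)\,(c^j_\gamma, B c^i_\beta).
\]
The antiunitarity of $J_{\mathcal{H}\otimes\mathcal{K}}$ gives $(J_{\mathcal{H}\otimes\mathcal{K}}e_j, J_{\mathcal{H}\otimes\mathcal{K}}e_i) = (e_i,e_j) = \delta_{ij}$, collapsing the double sum to $(h_\gamma, M h_\beta) = \sum_i \lambda_i (c^i_\gamma, B c^i_\beta)$. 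Since the transpose in this basis satisfies $(A^t)_{\beta\gamma} = (h_\gamma, A h_\beta)$, expanding $\Tr_{\mathcal{H}}(A^t M) = \sum_{\beta,\gamma}(A^t)_{\beta\gamma}(h_\gamma, M h_\beta)$ reproduces exactly $\sum_i \lambda_i \sum_{\beta,\gamma}(h_\gamma, A h_\beta)(c^i_\gamma, B c^i_\beta)$, which is $\sum_i \lambda_i(e_i, (A\otimes B)e_i) = \omega(A\otimes B)$. The second equality in part~(2) is then immediate from the duality relation proved in part~(1).

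The main obstacle I anticipate is bookkeeping rather than ideas: keeping straight the two conjugations $J_{\mathcal{H}}$ and $J_{\mathcal{H}\otimes\mathcal{K}}$ and the two copies of $\mathcal{K}$ in the target of $H$. The argument hinges on choosing the basis of $\mathcal{H}$ so that $J_{\mathcal{H}}$ acts as entrywise complex conjugation — this is precisely what makes the transposition $A\mapsto A^t$ of part~(1) coincide with the index transposition used in part~(2) — and on combining orthonormality of $\{e_i\}$ with antiunitarity of $J_{\mathcal{H}\otimes\mathcal{K}}$ to annihilate the off-diagonal ($i\neq j$) terms. Some care is also needed to justify the trace manipulations (cyclicity, partial traces, interchange of summation) in the possibly infinite-dimensional setting; since $\rho_\omega$ is trace class and $\{e_i\}$ is an orthonormal system, every operator involved is trace class or bounded, so these steps are legitimate, but this point should be remarked explicitly.
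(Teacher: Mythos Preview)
Your computation is correct, and in fact the paper does not supply its own proof of this proposition: it simply states that the result ``was proved in \cite{MMO}'' and moves on. So there is nothing in the present paper to compare your argument against line by line.

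That said, your approach is exactly the kind of direct verification one expects here, and the key choices you make are the right ones: working in a $J_{\mathcal H}$-real basis $\{h_\alpha\}$ so that the transposition $X\mapsto J_{\mathcal H}X^*J_{\mathcal H}$ used in the definition of $\phi$ coincides with ordinary index transposition; exploiting $J_{\mathcal H}h_\beta=h_\beta$ to reduce $T^*_{J_{\mathcal H}h_\beta}e_i$ to the $\mathcal K$-component $c^i_\beta$; and using antiunitarity of $J_{\mathcal{H}\otimes\mathcal{K}}$ together with orthonormality of $\{e_i\}$ to kill the off-diagonal terms. Your derivation of part~(1) from part~(2) via the partial-trace identity $\Tr_{\mathcal{H}\otimes\mathcal{K}\otimes\mathcal{K}}\bigl((\mathds{1}\otimes B)HA^tH^*\bigr)=\Tr_{\mathcal K}\bigl(B\,\Tr_{\mathcal{H}\otimes\mathcal{K}}(HA^tH^*)\bigr)$ is clean and correct. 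The only cosmetic point: the paper fixes $J_{\mathcal H}$ via the spectral resolution of $H^*H$, so you might note explicitly that your basis $\{h_\alpha\}$ is that eigenbasis, making the two transpositions agree by construction rather than by choice.
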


The definition of PPT states is saying that any such state composed with the partial transposition is again a state. This fact 
combined with the above proposition leads to
\begin{thm}
\label{10.4}
PPT states are completely characterized by the mapping $\phi^*$ which is both CP and co-CP.
\end{thm}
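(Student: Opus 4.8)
The plan is to read the two positivity conditions on $\phi^*$ directly off the two isometric identifications established just above, applied respectively to $\omega$ itself and to its partial transpose; no new machinery beyond those identifications is needed.

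First I would record the starting point supplied by the preceding Proposition: the normal state $\omega$ and its entanglement mapping $\phi^*:B(\mathcal H)\to B(\mathcal K)_*$ satisfy
$$\omega(A\otimes B)=\Tr_{\mathcal K}\bigl(\phi^*(A)B\bigr),\qquad A\in B(\mathcal H),\ B\in B(\mathcal K).$$
Comparing this with the second identification of the two-part Theorem above — the one carrying \emph{no} transposition, $\hat\psi(\sum_i A_i\otimes B_i)=\sum_i\Tr_{\mathcal K}(\psi(A_i)B_i)$ — shows that $\omega=\widehat{\phi^*}$ in that sense. Since $\omega$ is a state it is in particular positive on $(B(\mathcal H)\otimes B(\mathcal K))^+$, so that Theorem yields at once that $\phi^*$ is complete co-positive. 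I would stress that this half holds for \emph{every} normal state, whether PPT or not.

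Next I would transport the PPT condition through the first identification. By definition $\omega\in\mathfrak{S}_{PPT}$ exactly when $\omega\circ(id_{B(\mathcal H)}\otimes\tau_{\mathcal K})$ is again a state; because $\tau_{\mathcal K}$ preserves the unit and the trace this functional is automatically normalized, so the only substantive issue is its positivity. Computing,
$$\bigl(\omega\circ(id_{B(\mathcal H)}\otimes\tau_{\mathcal K})\bigr)(A\otimes B)=\omega(A\otimes B^t)=\Tr_{\mathcal K}\bigl(\phi^*(A)B^t\bigr),$$
which is precisely the first identification of the two-part Theorem, $\hat\psi(\sum_i A_i\otimes B_i)=\sum_i\Tr_{\mathcal K}(\psi(A_i)B_i^t)$, evaluated at $\psi=\phi^*$. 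Hence the partially transposed functional is positive on $(B(\mathcal H)\otimes B(\mathcal K))^+$ if and only if $\phi^*$ is completely positive. Combining the two steps gives $\omega\in\mathfrak{S}_{PPT}$ iff $\phi^*$ is CP, while $\phi^*$ is co-CP for free; that is, $\omega$ is PPT iff $\phi^*$ is both CP and co-CP, which is the assertion.

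The routine points to verify are that partial transposition preserves normalization and that the bookkeeping $\Tr(\phi^*(A)B^t)$ versus $\Tr(\phi^*(A)B)$ is consistent with the two tensor forms; both follow immediately from $\Tr(XY^t)=\Tr(X^tY)$. The main obstacle, and the reason the two-part Theorem is doing the real work here, is the infinite-dimensional setting: since $\mathcal H$ and $\mathcal K$ may both be infinite dimensional, one cannot argue with finite matrix blocks, so the identification of $(B(\mathcal H)\otimes B(\mathcal K))_*$ with normal maps into the relevant predual, together with the equivalence of functional positivity with complete (co-)positivity, must be invoked exactly as stated rather than re-derived.
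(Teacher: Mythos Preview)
Your argument is correct and follows precisely the route the paper indicates: the paper's own justification is the single sentence ``The definition of PPT states is saying that any such state composed with the partial transposition is again a state. This fact combined with the above proposition leads to'' Theorem~\ref{10.4}, and your proposal simply unpacks that sentence by matching $\omega(A\otimes B)=\Tr_{\mathcal K}(\phi^*(A)B)$ to part~(2) of the two-part Theorem (yielding co-CP automatically) and $\omega\circ(id\otimes\tau_{\mathcal K})(A\otimes B)=\Tr_{\mathcal K}(\phi^*(A)B^t)$ to part~(1) (yielding CP exactly under the PPT hypothesis). There is nothing to add or correct.
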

Thus, taking an arbitrary fixed normal state $\omega$, one has its density matrix $\rho_{\omega}$. This density matrix gives rise to the entanglement mapping $\phi$. Then,  Theorem \ref{10.4} says that positivity properties of $\phi$
encodes PPT characterization of the given state $\omega$.

For further details on a characterization of PPT states we refer the reader to \cite{MMO}.

\section{Time evolution of quantum correlations - an example}

In the classical theory of particle systems one of the objectives is to produce, describe 
and analyze dynamical systems
with evolution originated from stochastic processes in such a way that their 
equilibrium states are Gibbs states (cf. \cite{Ligg}). 
A well known illustration is a number of papers describing the so 
called Glauber dynamics \cite{Glau}. To
perform a detailed analysis of dynamical system of that type, it is convenient 
to use the theory of Markov processes in
the context of (classical) $L_p$-spaces. 
In particular, for the Markov-Feller processes, using the unique correspondence
between a process and the corresponding dynamical semigroup one can give a recipe 
for a construction of Markov generators for this class of processes (see \cite{Ligg}).
In the middle of nineties, this program was carried out 
in the setting of quantum mechanics
\cite{R7}-\cite{R9}. It is worth noting 
that for this programme, \textit{specific algebras} mentioned in Rule 1 (see Section 4) stand for the quantum counterpart of stochastic variables.

Guided by the classical theory and applying 
generalized conditional expectations
(in the sense of Accardi-Cechnini), it was possible to define the corresponding 
Markov generators of the underlying
quantum Markov-Feller dynamics. In that way the quantum counterpart of the
classical recipe for the construction of quantum Markov generators was obtained.
In the sequel, quantum semigroups obtained in that framework will be called \textit{ quantum
stochastic dynamical semigroups}.

To sum up, the analysis given in \cite{R7}-\cite{R9} led to a general 
scheme for constructing quantum jump and diffusive
processes on a lattice. However,
it is natural to pose the following question: 
 Was the above described quantization made in a correct way?
 Namely, if the procedure took into account quantumness of dynamical maps, the obtained dynamics should enhance correlations.
 In particular, one should study the following situation: Let the evolution, given by a stochastic dynamical semigroup, starts from a separable state (so having only classical type of correlations).
 Then after some time this state should exhibit new type (so quantum) correlations. In other words, an entanglement should appear.
Therefore, to show that the quantization leading to quantum stochastic dynamical semigroups was done in a correct way we will study the production of entanglement. Here, we will restrict ourselves to a simple model. Namely, we will consider
 the jump-type dynamics $T_t$ of XXZ model on a finite lattice, see \cite{KM}. To this end we will work within the description of quantum composite systems, see Section 6 and we will use the recipe obtained in \cite{R7} for the infinitesimal generator of quantum stochastic dynamical semigroup.
 Performing
calculations similar to those given in the appendix of \cite{R7} one can show that for the considered dynamics $T_t$, its infinitesimal generator $\mathcal L$ has the form
\begin{equation}
\label{97}
\mathcal{L} = \mathcal{E} - \mathds{1},
\end{equation}\noindent
where the operator $\mathcal E$ takes
the following form

\begin{equation}
\label{97a}
\mathcal{E}(A)=\tau(\gamma^*A\gamma).
\end{equation}
Here
\begin{equation}
\gamma=\rho^{\frac{1}{2}}\big(\tau\rho\big)^{\frac{1}{2}}, \label{gamma_def}
\end{equation}
and $\rho$ stands for the Gibbs state while $\tau$ is defined as $\tau = \frac{1}{2} (\mathds{1} + \psi)$. $\psi$
describes a symmetry transformation associated with a region $\Lambda$. We recall that this form of $\tau$, allows us to describe flips for a lattice system.
An application of the relation between Schr$\ddot{\textrm{o}}$dinger and Heisenberg picture leads to the dual evolution 
$T^d_t$ of a state $\sigma$:
$$
\Tr(T^d_t(\sigma))A=\Tr\sigma T_t(A)
$$
for any state $\sigma$ and any observable $A$. In particular, one has
$$
\Tr(\mathcal{E}^d(\sigma)A)=\Tr\sigma\mathcal{E}(A)
$$
and
$$
\mathcal{E}^d(\sigma)=\gamma\tau(\sigma)\gamma^*
$$

As it was mentioned, we will be interested in a one dimensional quantum XXZ model. Thus, we will
consider a one-dimensional finite $\frac{1}{2}$-spin chain with $N+1$ sites indexed from $0$ to $N$ and the
corresponding algebra of observables generated by
$$
\sigma^{i_0}\otimes\sigma^{i_1}\otimes\ldots\otimes\sigma^{i_N}, 
$$
where $i_k\in\{0,1,2,3\}$, $k=0,\ldots N$, and $\sigma^j$, $j=0,1,2,3$ are Pauli matrices.

To write the explicit form of a symmetry transformation $\psi$ let us fix certain regions:  $\Lambda=\{0,1,\ldots,N\}$ and $\Lambda_I,\Lambda_{II}\subset\Lambda$, $\Lambda_I\cup\Lambda_{II}=\Lambda$,
$\Lambda_I\cap\Lambda_{II}=\emptyset$, Then $\mathcal{H}_{\Lambda_I}$  ($\mathcal{H}_{\Lambda_{II}}$) is $2^{|\Lambda_I|}$  ($2^{|\Lambda_{II}|}$ respectively)
dimensional Hilbert space, cf Example \ref{7.15}, ($| \Lambda|$ stands for the number of sites).\\
To specify the map $\psi$ we take a local transformation $\psi_{kl}$ defined as follows
$$
\psi_{kl}(A_1\otimes\ldots\otimes A_k\otimes\ldots\otimes A_l\otimes\ldots\otimes A_N)= \nonumber
$$
\begin{equation}
=A_1\otimes\ldots\otimes A_l\otimes\ldots\otimes A_k\otimes\ldots\otimes A_N \label{psiegz}
\end{equation}
which describes the exchange
between the sites, (so we put $\Lambda_I = \{k,l\}$, etc). 

The Hamiltonian of the XXZ system has the form:
$$
H=-\sum_{n=1}^N(\sigma^1_{n-1}\sigma^1_n+\sigma^2_{n-1}\sigma^2_n+\Delta\sigma^3_{n-1}\sigma^3_n)
$$
where $\sigma^j$, $j=1,2,3$ are Pauli matrices. Recall that the number $\Delta\neq 1$ is responsible for anisotropy of the model.
The corresponding Gibbs state is represented by the density matrix
$$
\rho=Z^{-1}\exp(-\beta H)
$$
where $Z=Tr(e^{-\beta H})$.

It was shown in Section 9 that $EoF$ can serve as a well defined measure of entanglement. Moreover, it was 
indicated that there is a freedom in taking the specific
form of $\mathds{F}$. Here, we put
$\mathds{F}(\rho) = S_L(\rho)=-\Tr(\rho(\rho-1))$. 
Clearly, the linear entropy $S_L$ satisfies the conditions discussed in Section 9. Thus
\begin{equation}
M(\sigma)=\inf_{\sigma=\sum_i\lambda_i\sigma_i}
\sum_i \lambda_i \Tr_2\left[(\Tr_1\sigma_i)-(\Tr_1\sigma_i)^2\right] \label{miara1}
\end{equation}
where infimum is taken over all decompositions of $\sigma$, provides a well defined measure of entanglement. We denote here the entanglement of formation by ``M'' as we wish to reserve the letter ``E'' for the entanglement production. 
Further, we note (cf. Section 9) that in (\ref{miara1}) it is enough to restrict oneself to decomposition of $\sigma$ 
into a convex combination of pure states. The main difficulty in calculation of (\ref{miara1}) is to carry out $\inf$ 
over all prescribed decompositions.

To overcome that problem (we follow arguments given in \cite{KM}) we begin with the simplified version of (\ref{miara1}):
\begin{equation}
M^a(\sigma)=Tr_2\left[(Tr_1\sigma)-(Tr_1\sigma)^2\right] \label{miara2}
\end{equation}
where again, $\sigma$ is the density matrix determining the considered state. We emphasize, that in (\ref{miara2}) the  $\inf$ was dropped. Clearly
\begin{equation}
M(\sigma)\leq M^a(\sigma) \label{ineq1}
\end{equation}
We shall use $M^a(\sigma)$ in the following measure of entanglement production:
\begin{equation}
E_a(T^d_t\rho)=M^a(T^d_t\rho)-M^a(\rho) \label{prod1}
\end{equation}
This formula, applied to small $t$ and a pure state $\rho$, leads to
\begin{equation}
E_a(T^d_t\rho)=M^a((1-t)\rho+t\mathcal{E}^d(\rho))-M(\rho)+ o(t^2), \label{prod2}
\end{equation}
where we have used: $T^d_t\rho=(1-t)\rho+t\mathcal{E}^d(\rho)+ao(t^2)$ ($o(t^k)=a_1t^k+a_2t^{k+1}+\ldots$) and the 
obvious fact that $M(\rho)=M^a(\rho)$ for a pure state.

It is not difficult to see that the true measure of entanglement production, $E$, satisfies
\begin{equation}
E(T^d_t\rho)=M(T^d_t\rho)-M(\rho)\leq E_a(T^d_t\rho) \label{prod4}
\end{equation}
Let us consider the following case: $\rho$ is a pure nonseparable state such that, for small $t$,
\begin{equation}
\frac{E_a(T^d_t\rho)}{t}\leq 0 \label{est1}
\end{equation}
Therefore, $E(T^d_t\rho)\leq 0$ and $M(T^d_t\rho)\leq M(\rho)$. Consequently, 
the negative sign of $\frac{E_a(T^d_t\rho)}{t}$ implies a decrease of entanglement.

As the next step, let us consider again pure nonseparable state $\rho$ such that, for small $t$, 
$\frac{E_a(T^d_t\rho)}{t}\geq 0$. To say something about $E(T^d_t\rho)$ for that case we 
discuss the difference between $M(T^d_t\rho)$ and 
$M^a((1-t)\rho+t\mathcal{E}^d(\rho))$, again for small $t$.

To this end, let us consider a (convex) decomposition $\sum\lambda_iP_i$ of 
$T^d_t\rho$ into pure states. Clearly
$$
||\sum\lambda_iP_i-((1-t)\rho+t\mathcal{E}^d(\rho))||\leq Ao(t^2)
$$
where $A\geq 0$ is a constant and we have used the given form of the infinitesimal generator of dynamics $T_t$, see (\ref{97}). As $Tr_1$ is a linear projection, 
the linear entropy $S_L$ is a continuous function,  then
$$
|M^a(\sum\lambda_iP_i)-M^a((1-t)\rho+t\mathcal{E}^d(\rho))|<A'o(t^2)
$$
for some positive constant $A'$  On the other hand,
$$
M^a\left(\sum\lambda_iP_i\right)=M^a(\rho+t(\mathcal{E}^d(\rho)-\rho)) + A^{''}o(t^2)
=M^a(\rho)+E_a(T^d_t\rho)
$$
Hence
$$
E_a(T^d_t\rho)\geq \sum\lambda_iM^a(P_i)- M^a(\rho)
$$
Furthermore
$$
E_a(T^d_t\rho)\geq M(T^d_t\rho)- M^a(\rho) \equiv M(T^d_t \rho) - M(\rho).
$$
Thus, positivity of $E_a(T^d_t\rho)$ allows a production of entanglement.

Finally, let us consider the case of $\rho$ being pure separable state. It was shown, see \cite{KM}:
$$
M^a(\mathcal{E}^d(\rho))=M(\mathcal{E}^d(\rho))=Tr_2\left(Tr_1(\mathcal{E}^d(\rho))(1-Tr_1(\mathcal{E}^d(\rho)))\right)
$$
Clearly $M^a(\mathcal{E}^d(\rho))>0$ shows that $\mathcal{E}^d(\rho)$ is entangled, hence $(1-t)\rho+t\mathcal{E}^d(\rho)$ is entangled. 
Consequently, for the map $\rho\rightarrow T^d_t(\rho)$ (for small $t$) one has entanglement production provided that $M^a(\mathcal{E}^d(\rho))$ is positive.

The above arguments, presented in \cite{KM}, were used as the starting point for computer simulations for spin chain consisting of five sites. The numerical results given in \cite{KM} clearly show the production of entanglement for pure separable states. Moreover, they also show that the production of entanglement is sensitive to changes of the temperature $\beta$ and the coefficient of anisotropy $\Delta$.	

\begin{proble}
Recently, an efficient method to compute some entaglement measures (for finite dimensional systems) was provided, see \cite{Geza}. One may ask whether this method allows to
carry out computer simulations for more general spin chains.
\end{proble}

\section{Conclusions}
We presented a concise scheme for quantization and analysis of one of the fundamental concepts of probability calculus -- the idea of correlations. Although, the quantization of coefficient of correlation is straightforward, the non-commutative setting offers new phenomenon. Namely, a state (which can be understood as a non-commutative integral) does not possess the weak$^*$- Riemann approximation property. This implies new type
correlations, which are called quantum correlations. To study this new type correlations we have used the decomposition theory as well as the selected results from the tensor product theory of metric spaces.
The utility of decomposition theory stems from the well known fact that the set of states in quantum mechanics does not form a simplex. Deep Grothendieck's results and their subsequent generalizations were necessary to provide the definition of separable (entangled) density matrices and to study the important class of states -- PPT states. In particular, note that the proper geometry for the collection of density matrices, for which people were looking for many years, is that given by the projective norm.

It is also important to note that our approach is applicable to real quantum systems, i.e. systems described by infinite dimensional structures. This is indispensable if one wishes to allow the canonical quantization.

Within such general approach to quantization, we gave detailed description of quantum correlations. In particular, the coefficient of quantum correlations was studied. This coefficient can be considered as a measure of quantumness of correlations. Moreover, a detailed exposition of entanglement of formation, $EoF$, was given in a general setting. We emphasize that our analysis of $EoF$ demonstrates rather strikingly that the value of $EoF$, for a fixed state, can be approximated with arbitrary precision by a mesurement of carefully selected observable.

We end the review with the example showing the evolution of entanglement  for a selected time evolution. This indicates, among others things, that the presented measures of quantum correlations are not restricted to the static case.  On the other hand, this does not mean that evolution of quantum correlations could be treated as causal relations.

Finally, this paper can be considered as an extension of the recent Spehner \cite{speh} survey for \textit{genuine} quantum systems.

\section{Acknowledgments}
This survey is the result of a series of lectures I gave on the subject of quantum correlations, firstly for PhD students in Gdansk University, Gdansk, January 2014 and secondly, in Summer School on Quantum Information in Zhejiang University, Hangzhou, July 7 -- August 11, 2014. I would like to thank the organizer of the Summer School, professor Wu Junde, and the participants for very nice School and wonderful hospitality. I would like also to thank M. Banacki  and A. Posiewnik for their help in editorial work as well as for suggested improvements and to David Blecher for his remarks on the operator space projective tensor products.

The author would like to express his hearty thanks to Marcin Marciniak and Tomasz Tylec who kindly read a preliminary draft of the paper and gave us truly valuable comments.
The support of the grant of Foundation for Polish Science TEAM
project cofinanced by the EU European Regional Development Fund
 is gratefully acknowledged.

\Addresses


\begin{thebibliography}{9}
\bibitem{Alfsen} E. Alfsen, \textit{Compact Convex Sets and Boundary Integrals}, Springer, 1971

\bibitem{AlLe} R. Alicki, K. Lendi, \textit{Quantum Dynamical Semigroups and Applications}, Springer, Lectures notes in Physics, vol. 286, 1987

\bibitem{BO1} V. P. Belavkin, M. Ohya, Quantum entropy and information in discrete entangled states, \textit{ Infinite Analysis, Quantum Probability and Related Topics}, \textbf{4} 137 (2001)

\bibitem{BO2} V. P. Belavkin, M. Ohya, Quantum entanglement and entangled mutual entropy, \textit{Proc. Royal. Soc. London A,} \textbf{458} 209 (2002)
\bibitem{BDVSW} Ch. H. Bennett, D. P. DiVicenzo, J. A. Smolin, W. K. Wotters, Mixed-state entanglement and quantum error correction, \textit{ Phys. Rev. A}, \textbf{54}, 3824 (1996)

\bibitem{B} P. Billingsley, \textit{Probability and Measure}, John Wiley and Sons, 1979

\bibitem{Ber} S. K. Berberian, \textit{Measure and Integration}, The MacMillian Company, 1965

\bibitem{Bour} N. Bourbaki, \textit{\'El\'ements de Math\'ematique. Livre VI. Int\'egration.} Chapitres 1,2,3 et 4., Deuxi\'eme \'Edition Revue et Augment\'ee, Hermann. Paris 1965. 

\bibitem{BatR} C. J. K. Batty, D. W. Robinson, Positive one-parameter semigroups on ordered Banach spaces, \textit{Acta Applicandae Mathematicae} \textbf{1}, 221-296 (1984)

\bibitem{BR} O. Bratteli, D. W. Robinson, {\em Operator Algebras and Quantum Statistical Mechanics} vol 1, Springer, New York, 1979

\bibitem{Ch1}
  M.-D. Choi,  Positive linear maps, Proc. Sympos. Pure. Math. {\bf 38}, 583-590 (1982)
\bibitem{Ch3}
 M.-D. Choi, Completely positive maps on complex matrices
  {\it Lin. Alg. Appl.} \textbf{10} (1975), 285--290.
\bibitem{Ch2}
 M.-D. Choi, Some assorted inequalities for positive linear maps on $C^*$-algebras, 
  {\it J. Operator Th.} \textbf{4} (1980), 271--285.
	
\bibitem{choquet} G. Choquet, \textit{ Lectures on Analysis}, vol. I. \textit{Integration and Topological Vector Spaces}, and  vol. II. \textit{Representation Theory}, Benjamin, 1969
  
\bibitem{davies} E. B. Davies, \textit{Quantum Theory of Open Systems}, Academic Press, 1976

\bibitem{DF} A. Defant, K. Floret, \textit{Tensor Norms and Operator Ideals}, North-Holland, 1993

\bibitem{DFS} J. Diestel, J. H. Fourie, J. Swart, \textit{ The Metric Theory of Tensor Products. Grothendieck R\'esum\'e Revisited}, AMS, 2008

\bibitem{dirac} P. A. M. Dirac, \textit{The Principles of Quantum Mechanics}, Third Edition, The Clarendon Press, Oxford, 1947

\bibitem{EfRu} E. G. Effros, Z-J. Ruan, \textit{Operator Spaces}, Oxford Science Publications, 2000

\bibitem{Glau} R. J. Glauber, 
    Time dependent statistics of the Ising model, \textit{J. Mat. Phys.}, {\bf 4}, 294, (1963)

\bibitem{Gro} A. Grothendieck, Products tensoriels topologiques et espaces nuclearies, \textit{ Memoirs of the American Mathematical Society}, \textbf{16}, Providence, Rhode Island, 1955.

\bibitem{haag} R. Haag, {\em Local Quantum Physics}, Springer Verlag, 1992

\bibitem{HK} R. Haag, D. Kastler, An algebraic approach Quantum Field Theory, \textit{J. Math. Phys.} \textbf{5} 848 (1964)

\bibitem{H} P. R. Halmos, \textit{Measure Theory}, Springer Verlag, 1974

\bibitem{hor} M. Horodecki, P. Horodecki, R. Horodecki, Separability of mixed states: necessary and sufficient conditions, \textit{Phys. Lett. A} \textbf{223}, 1-8 (1996)

\bibitem{hor2} R. Horodecki, P. Horodecki, M. Horodecki, K. Horodecki, Quantum entanglement, \textit{Rev. Mod. Phys.} \textbf{81}, 865 - 942 (2009)

\bibitem{KR} R. V. Kadison, J. R. Ringrose, \textit{Fundamentals of the Theory of Operator Algebras. Vol. II, Advanced Theory}, A.M.S. 1997

\bibitem{KM} S. Koziel, W. A. Majewski, Quantum correlations for stochastic dynamics of XXZtype, \textit{Acta Phys. Pol. B} \textbf{34}, 3731-3739, (2003)

\bibitem{Ligg} T. M. Ligget, \textit{Interacting Particle Systems}, Springer Verlag, (1985)

\bibitem{Maj2002} W. A. Majewski,
 On entanglement of formation, \textit{J.Phys. A: Math. Gen.}, \textbf{35}, 123-134 (2002)
 
\bibitem{Maj2003} W. A. Majewski, On entanglement of states and quantum correlations, in \textit{Operator Algebras and Mathematical Physics}, Eds. J-M. Combes et al, pp.287-297, Theta, 2003

\bibitem{Maj2004} W. A. Majewski, {On quantum correlations and positive maps}, \textit{Lett. Math. Phys.}, \textbf{67}, 125-132 (2004)

\bibitem{Maj2014}  W. A. Majewski, Positive maps in quantum information, \textit{Rus. J. Math. Phys}, \textbf{21}, 362-372 (2014);  \textit{arXiv:1405.3415 [quant-ph]},

\bibitem{LM} W. A. Majewski, L. E. Labuschagne, On applications of Orlicz spaces to Statistical Physics, \textit{Ann. Henri Poicar\'e} \textbf{ 15}, 1197-1221 (2014)

\bibitem{MM} W. A. Majewski, M.  Marciniak,  On a characterization of positive maps,
\textit{ J. Phys. A; Math. Gen.,} \textbf{ 34}, 5863-5874 (2001)


\bibitem{MMO} W. A. Majewski, T.Matsuoka, M. Ohya, Characterization of partial transposition states and measures of entanglement, \textit{J.Math. Phys.} \textbf{50} 113509 (2009)

\bibitem{MT} W. A. Majewski, T. I. Tylec, On the structure of positive maps. II. Low dimensional matrix algebras, \textit{J. Math. Phys.} \textbf{54}, 073508 (2013)

\bibitem{R7} W. A. Majewski, B. Zegarlinski,  Quantum stochastic dynamics I:
Spin systems on a lattice", \textit{ Math. Phys. Electr. Jour.}, vol. 
\textbf{ 1}, Paper 2 (1995)

\bibitem{R8} W. A. Majewski, B. Zegarlinski,  On quantum stochastic dynamics
and noncommutative $L_p$ spaces, \textit{ Lett. Math. Phys.}\textbf{ 36}
 p. 337, 1996.


\bibitem{R9} W. A. Majewski, B. Zegarlinski,
Quantum stochastic dynamics II,
\textit{ Rev. Math. Phys.} \textbf{ 8}, 689 - 713, 1996.

\bibitem{Mey} P. A. Meyer, \textit{Probability and Potentials.} Blaisdell Publishing Company, 1966

\bibitem{Ne} E. Nelson, Notes on non-commutative integration, {\it J. Func. Anal.} {\bf 15} (1974), 103-116

\bibitem{omnes} R. Omn\'es, \textit{The Interpretation of Quantum Mechanics}, Princeton, 1994

\bibitem{peres} A. Peres, Separability criterion for density matrices, \textit{Phys. Rev. Lett} \textbf{77}, 1413 (1996)

\bibitem{Phelps} R. R. Phelps, \textit{Convex Functions, Monotone Operators and Differentiability}, Lecture Notes in Mathematics, \textbf{1364}, Springer, 1989

\bibitem{Franz} \textit{Quantum Potential Theory}, eds: U. Franz, M. Sch\"urmann, Lecture Notes in Mathematics vol. 1954, Springer, 2008

\bibitem{Reeh} H. Reeh, S. Schlieder, Bemerkungen zur Unit\"ar\"aquivalenz von Lorentzinvarianten Feldern, \textit{Nuovo Cimento} \textbf{22}, 1051 (1961)

\bibitem{RV} A.W. Roberts, D. E. Varberg, \textit{Convex Functions}, Academic Press, 1973

\bibitem{ruelle1} D. Ruelle, Integral representation of states on a $C^*$-algebra, \textit{J. Func. Anal.} \textbf{6}, 116-151 (1970)

\bibitem{ruelle2} D. Ruelle, States of physical systems, \textit{Commun. Math. Phys.} \textbf{3}, 133-150 (1966)


\bibitem{Ryan} R. A. Ryan, \textit{Introduction to Tensor Products of Banach Spaces}, Springer, 2002

\bibitem{sakai} S. Sakai, \textit{$C^*$-algebras and $W^*$-algebras}, Springer, 1971

\bibitem{Sakai1} S. Sakai, \textit{Operator Algebras in Dynamical Systems}, Cambridge University Press, 1991

\bibitem{schatten} R. Schatten, \textit{Norm Ideals of Completely Continuous Operators}, Springer, 1970

\bibitem{speh} D. Spehner, \textit{Quantum correlations and distinguishability of quantum states} arXiv:1407.3739 [quant-ph].

\bibitem{stor} E. St{\o}rmer, Extension of positive maps, \textit{J. Funct. Anal.} \textbf{66} 235 (1986)

\bibitem{Streat} R. F. Streater, \textit{Lost Causes in and beyond Physics}, Springer, 2007

\bibitem{SW1} S. J. Summers, R. Werner, Maximal violation of Bell's inequalities is generic in quantum field theory, \textit{Commun. Math. Phys.} \textbf{110} 247-259 (1987)

\bibitem{SW2} S. J. Summers, R. Werner, Bell's inequalities and quantum field theory. I. General setting. \textit{J. Math. Phys.} \textbf{28} 2440-2447 (1987)


\bibitem{takesaki} M. Takesaki, \textit{Theory of Operator
Algebras, Vol I,II,III}, Springer, New York, 2003.

\bibitem{Neumann} J. von Neumann, \textit{Mathematische Grundlangen der Quantenmechanik}, Springer, 1932
 
\bibitem{Te1} M. Terp, {\it $L^p$-spaces associated with von Neumann
algebras.} K{\o}benhavns Universitet, Mathematisk Institut, Rapport
No. 3 (1981) 
 
\bibitem{Geza} G. T\'oth, T. Moroder, O. G\"uhne, \textit{Efficient evaluation of convex roof entanglement measures}, arXiv: 1409.3806 [quant-ph]
 
\bibitem{Wehrl} A. Wehrl, General properties of entropy, \textit{Rev. Modern Physics}, \textbf{50} 221-260 (1978)
 
\bibitem{Win} A. Winter, The unboundedness of quantum mechanical matrices, {\textit{ Phys. Rev.}}, \textbf{71}, 737 -9 (1947)

\bibitem{Wiel} H. Wielandt,  \"Uder der unbeschr\"anktheit der operatoren der Quantum Mechanik, \textit{Math. Ann.} \textbf{121}, 21 (1949)

\end{thebibliography}
\end{document}